\newcommand{\noun}[1]{\textsc{#1}}
\providecommand{\tabularnewline}{\\}
\newcommand{\lyxaddress}[1]{
\par {\raggedright #1
\vspace{1.4em}
\noindent\par}
}
\theoremstyle{plain}
\newtheorem{thm}{\protect\theoremname}
  \theoremstyle{definition}
  \newtheorem{defn}[thm]{\protect\definitionname}
  \theoremstyle{plain}
  \newtheorem{lem}[thm]{\protect\lemmaname}
  \theoremstyle{definition}
  \newtheorem{example}[thm]{\protect\examplename}
  \theoremstyle{remark}
  \newtheorem{rem}[thm]{\protect\remarkname}
  \theoremstyle{plain}
  \newtheorem{cor}[thm]{\protect\corollaryname}
\setlist{noitemsep}
\setlist[1]{leftmargin=1.7em}
\setlist[2]{leftmargin=1.7em}
\newlength{\LyXMinipageIndent}
\titleformat{\section}{\large\bfseries}{\thesection}{1em}{}
\titleformat{\subsection}{\normalsize\bfseries}{\thesubsection}{1em}{}
  \providecommand{\corollaryname}{Corollary}
  \providecommand{\definitionname}{Definition}
  \providecommand{\examplename}{Example}
  \providecommand{\lemmaname}{Lemma}
  \providecommand{\remarkname}{Remark}
\providecommand{\theoremname}{Theorem}
\begin{document}

\title{\vspace{-6.5ex}A Computational Trichotomy\\
for Connectivity of Boolean Satisfiability}

\author{\noun{\normalsize{}Konrad W. Schwerdtfeger}}

\maketitle

\lyxaddress{\begin{center}
{\small{}\vspace{-5ex}Institut für Theoretische Informatik, Leibniz
Universität Hannover,}\\
{\small{}Appelstr. 4, 30167 Hannover, Germany}\\
{\small{}\path|k.w.s@gmx.net|}
\par\end{center}}
\begin{abstract}
For Boolean satisfiability problems, the structure of the solution
space is characterized by the solution graph, where the vertices are
the solutions, and two solutions are connected iff they differ in
exactly one variable. In 2006, Gopalan et al.\ studied connectivity
properties of the solution graph and related complexity issues for
CSPs \citep{gop}, motivated mainly by research on satisfiability
algorithms and the satisfiability threshold. They proved dichotomies
for the diameter of connected components and for the complexity of
the $st$-connectivity question, and conjectured a trichotomy for
the connectivity question.

Building on this work, we here prove the trichotomy: Connectivity
is either in P, coNP-complete, or PSPACE-complete. Also, we correct
a minor mistake in \citep{gop}, which leads to a slight shift of
the boundaries towards the hard side.\\

\noindent \textbf{Keywords} $\quad$Computational Complexity, Boolean
Satisfiability, Boolean CSPs, PSPACE-Completeness, Dichotomy Theorems,
Graph Connectivity
\end{abstract}

\section{Introduction}

In 2006, P.~Gopalan, P.~G.~Kolaitis, E.~Maneva, and C.~H.~Papadimitriou
investigated connectivity properties of the solution space of Boolean
constraint satisfaction problems \citep{Gopalan:2006,gop}. Their
work was motivated inter alia by research on heuristics for satisfiability
algorithms and threshold phenomena. Indeed, the solution space connectivity
is strongly correlated to the performance of standard satisfiability
algorithms like WalkSAT and DPLL on random instances: As one approaches
the\emph{ satisfiability threshold} (the ratio of constraints to variables
at which random $k$-CNF-formulas become unsatisfiable for $k\geq3$)
from below, the solution space fractures, and the performance of the
algorithms breaks down \citep{mezard2005clustering,maneva2007new}.
These insights mainly came from statistical physics, and lead to the
development of the \emph{survey propagation algorithm}, which has
much better performance on random instances \citep{maneva2007new}.

Meanwhile, Gopalan et al.'s results have also been applied directly
to \emph{reconfiguration problems}, that arise when a step-by-step
transformation between two feasible solutions of a problem is searched,
such that all intermediate results are also feasible. Recently, the
reconfiguration versions of many problems such as \noun{Independent-Set},
\noun{Vertex-Cover}, \noun{Set-Cover,} \noun{Graph-$k$-Coloring},
\noun{Shor\-test-Path} have been studied \citep{recon,short}, and
many complexity results were obtained. Another related problem for
which the solution space connectivity could be of interest is \emph{structure
identification}, where one is given a relation explicitly and seeks
a short representation of some kind \citep{creignou2008structure};
this problem is important especially in artificial intelligence.

The solutions (satisfying assignments) of a formula $\phi$ over $n$
variables induce a subgraph $G(\phi)$ of the $n$-dimensional hypercube
graph, that is, the vertices are the solutions of $\phi$, and two
solutions are connected iff they differ in exactly one variable.

\begin{figure}[!h]
\begin{centering}
\includegraphics[scale=0.64]{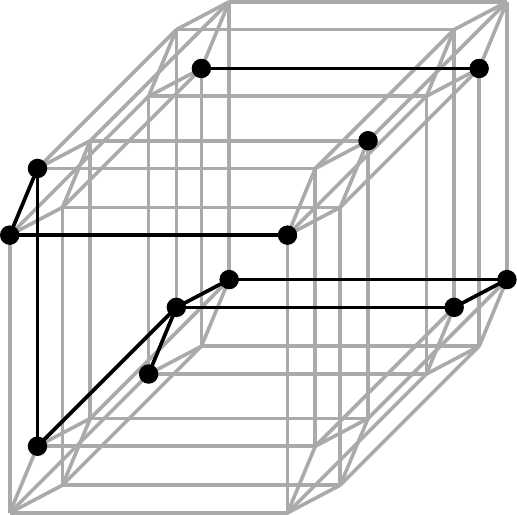}$\qquad$\includegraphics[scale=0.34]{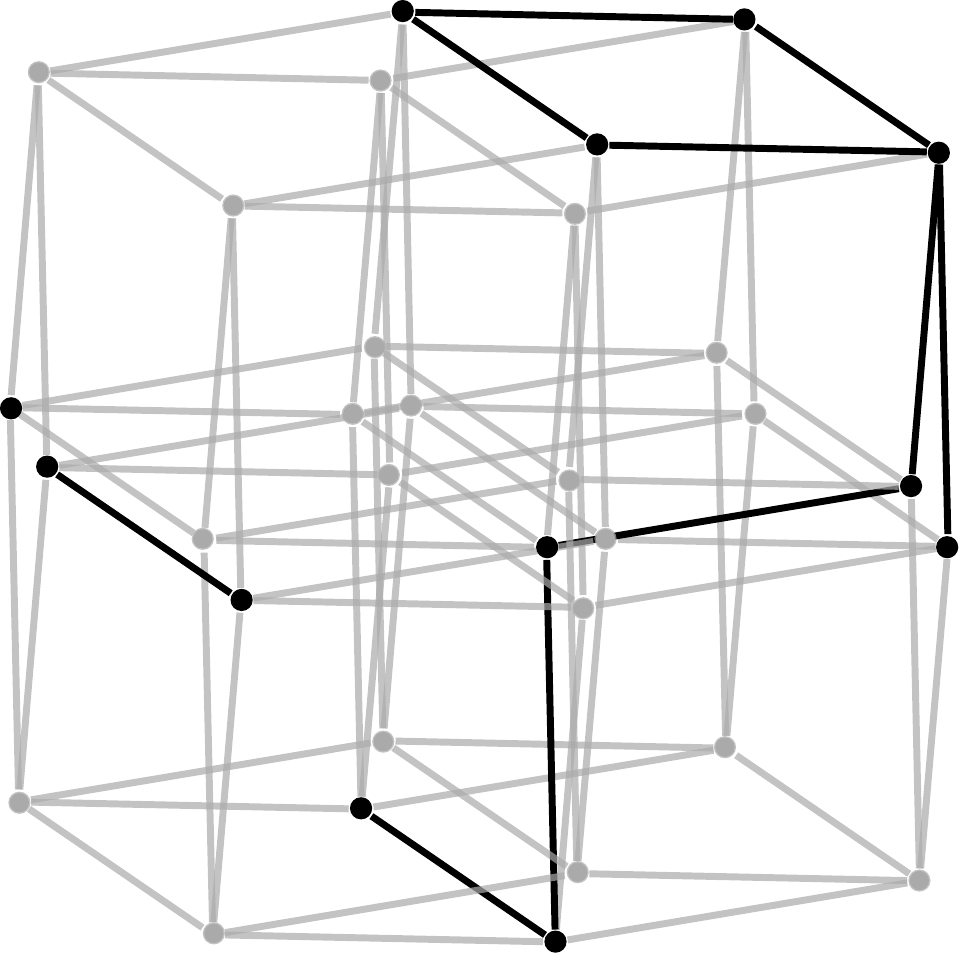}$\qquad$\includegraphics[scale=0.36]{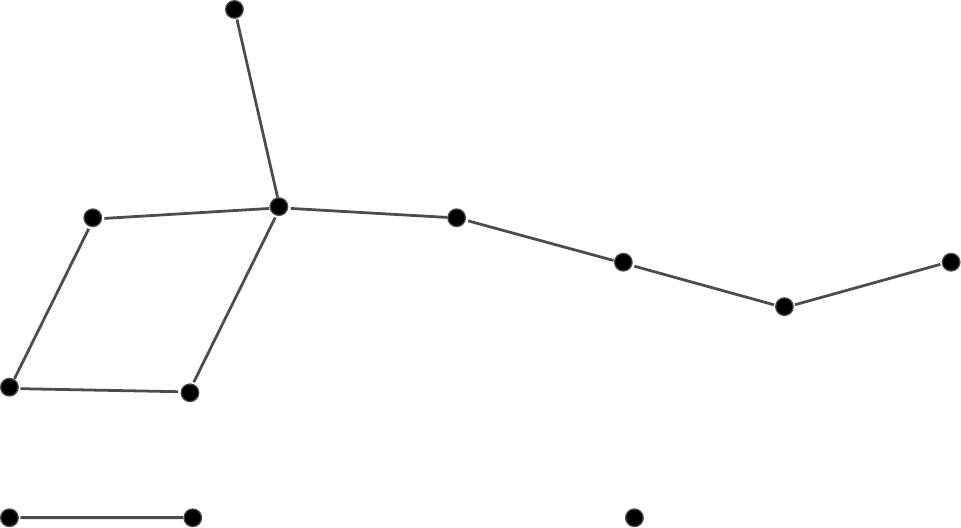}
\par\end{centering}

\protect\caption{Depictions of the subgraph of the 5-dimensional hypercube graph induced
by a typical random Boolean relation with 12 elements. Left: highlighted
on a orthographic hypercube projection. Center: highlighted on a ``Spectral
Embedding'' of the hypercube graph by \noun{Mathematica}.}
\end{figure}

\begin{figure}[!h]
\begin{centering}
\includegraphics[scale=0.3]{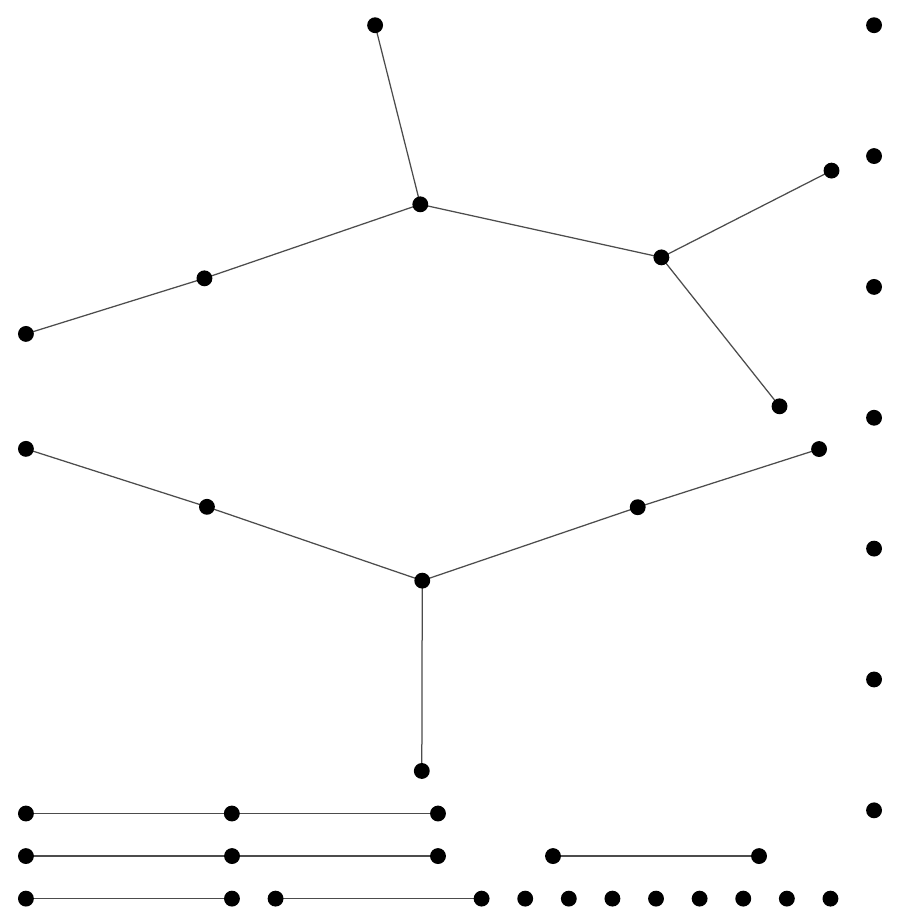}$\qquad$\includegraphics[scale=0.36]{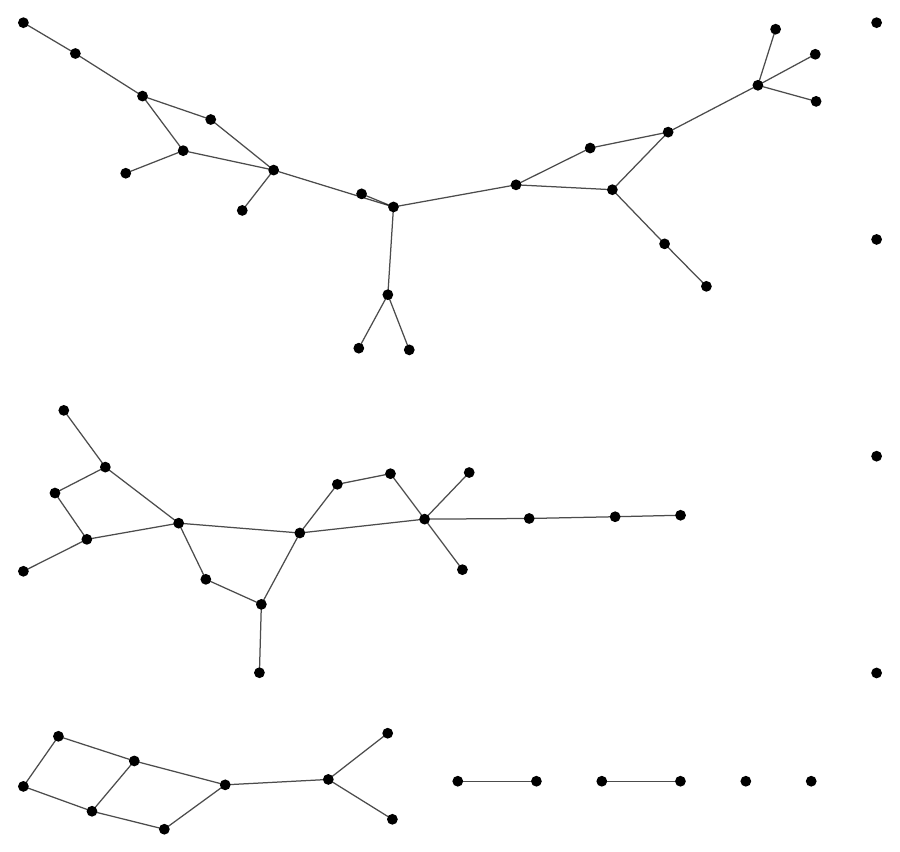}$\qquad$\includegraphics[scale=0.46]{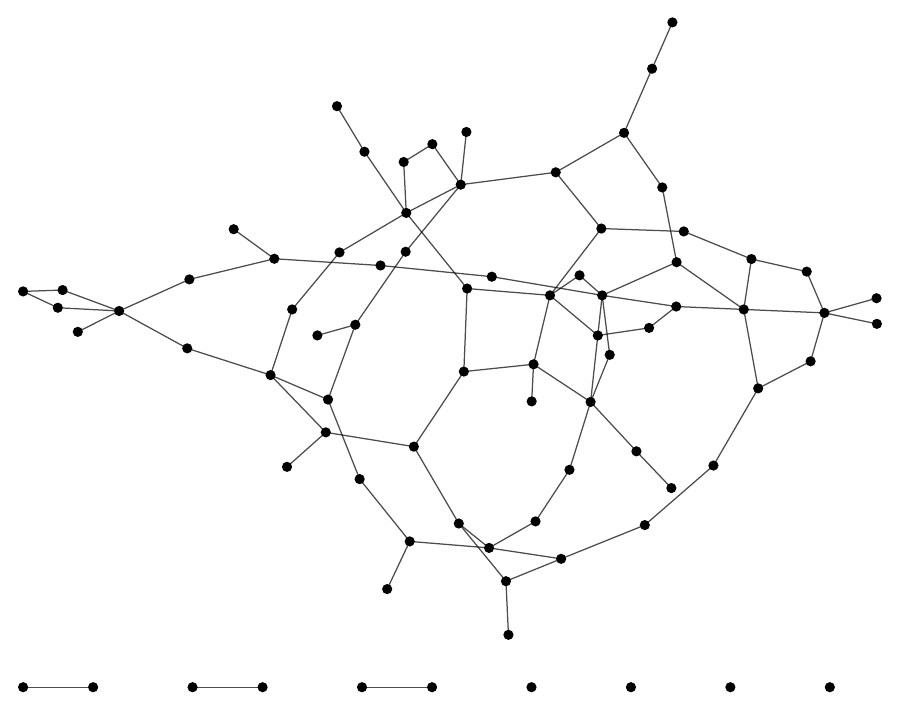}
\par\end{centering}

\protect\caption{Subgraphs of the 8-dimensional hypercube graph (with 256 vertices)
induced by typical random relations with 40, 60 and 80 elements.}

\end{figure}

Gopalan et al.\ specifically addressed CNF\textsubscript{C}($\mathcal{S}$)-formulas
(CNF($\mathcal{S}$)-formulas with constants), see \prettyref{def:rel},
and studied the complexity of the following two decision problems,
\begin{itemize}
\item the connectivity problem \noun{Conn}\textsubscript{C}($\mathcal{S}$),
that asks for a given CNF\textsubscript{C}($\mathcal{S}$)-formula
$\phi$ whether $G(\phi)$ is connected,
\item the $st$-connectivity problem \noun{st-Conn}\textsubscript{C}($\mathcal{S}$),
that asks for a given CNF\textsubscript{C}($\mathcal{S}$)-formula
$\phi$ and two solutions $\boldsymbol{s}$ and $\boldsymbol{t}$
whether there a path from $\boldsymbol{s}$ to $\boldsymbol{t}$ in
$G(\phi)$.
\end{itemize}
Also, they considered
\begin{itemize}
\item the maximal diameter of any connected component of $G(\phi)$ for
a CNF\textsubscript{C}($\mathcal{S}$)-formula $\phi$, where the
diameter of a component is the maximal shortest-path distance between
any two vectors in that component.
\end{itemize}
They established a common structural and computational dichotomy,
and introduced the corresponding class of \emph{tight} sets of relations,
which properly contains all Schaefer sets of relations, see \prettyref{def:cpss}:
For tight sets $\mathcal{S}$, the diameter is linear in the number
of variables, \noun{st-Conn}\textsubscript{C}($\mathcal{S}$) is
in P and \noun{Conn}\textsubscript{C}($\mathcal{S}$) is in coNP,
while on the other side, the diameter can be exponential, and both
problems are PSPACE-complete. Their results are summarized in comparison
to the satisfiability problem \emph{\noun{Sat}}\textsubscript{C}(\emph{$\mathcal{S}$})
in the table below.

\begin{table}[!h]
\noindent \begin{centering}
\renewcommand{\arraystretch}{1.1}
\begingroup\tabcolsep=3pt%
\begin{tabular}{|c||c|c|c|c|}
\hline 
{\footnotesize{}$\mathcal{S}$} & \noun{\footnotesize{}Sat}{\footnotesize{}\textsubscript{C}($\mathcal{S}$) } & \emph{\noun{\footnotesize{}st-}}\noun{\footnotesize{}Conn}{\footnotesize{}\textsubscript{C}($\mathcal{S}$)} & \noun{\footnotesize{}Conn}{\footnotesize{}\textsubscript{C}($\mathcal{S}$)} & {\footnotesize{}Diameter}\tabularnewline
\hline 
\hline 
{\footnotesize{}Schaefer} & {\footnotesize{}P} & \multirow{2}{*}{{\footnotesize{}P}} & {\footnotesize{}coNP} & \multirow{2}{*}{{\footnotesize{}$O(n)$}}\tabularnewline
\cline{2-2} \cline{4-4} 
{\footnotesize{}Tight, not Schaefer} & \multirow{2}{*}{{\footnotesize{}NP-compl.}} &  & {\footnotesize{}coNP-compl.} & \tabularnewline
\cline{3-5} 
{\footnotesize{}Not tight} &  & {\footnotesize{}PSPACE-compl.} & {\footnotesize{}PSPACE-compl.} & {\footnotesize{}$2^{\Omega(\sqrt{n})}$}\tabularnewline
\hline 
\end{tabular}\endgroup
\par\end{centering}

\protect\caption{Gopalan et al.'s results \citep{gop}}

\end{table}

Moreover, they conjectured a trichotomy for \noun{Conn}\textsubscript{C}($\mathcal{S}$):
For a certain sub-class of Schaefer sets of relations, \noun{Conn}\textsubscript{C}($\mathcal{S}$)
is in P, while for all other tight sets it is coNP-complete.

In \prettyref{sec:imp} we will argue that Gopalan et al.~did not
consider repeated occurrences of variables in constraint applications.
As we will see there, repeated variables can make the problems harder
and the diameter exponential in some cases, which leads to a slight
shift of the boundaries. Also, corrections are needed in the application
of Gopalan et al.'s concept of structural expressibility, used in
the reductions from 3-CNF formulas that establish PSPACE-completeness.

In \prettyref{sec:tri}, we prove the conjectured trichotomy for \noun{Conn}\textsubscript{C}($\mathcal{S}$),
also with the boundaries shifted in the hard direction. Fitted to
the correct boundaries, we will introduce the classes of \emph{safely
tight} and \emph{CPSS} sets of relations; The supplemental \prettyref{sec:cpss}
will investigate certain properties of CPSS sets of relations. The
following table summarizes our results.\\
\begin{table}[!h]
\noindent \begin{centering}
\renewcommand{\arraystretch}{1.1}
\begingroup\tabcolsep=3pt%
\begin{tabular}{|c||c|c|c|c|}
\hline 
{\footnotesize{}$\mathcal{S}$} & \noun{\footnotesize{}Sat}{\footnotesize{}\textsubscript{C}($\mathcal{S}$) } & \emph{\noun{\footnotesize{}st-}}\noun{\footnotesize{}Conn}{\footnotesize{}\textsubscript{C}($\mathcal{S}$)} & \noun{\footnotesize{}Conn}{\footnotesize{}\textsubscript{C}($\mathcal{S}$)} & {\footnotesize{}Diameter}\tabularnewline
\hline 
\hline 
{\footnotesize{}CPSS} & \multirow{2}{*}{{\footnotesize{}P}} & \multirow{3}{*}{{\footnotesize{}P}} & {\footnotesize{}P} & \multirow{3}{*}{{\footnotesize{}$O(n)$}}\tabularnewline
\cline{4-4} 
{\footnotesize{}Schaefer, not CPSS} &  &  & \multirow{2}{*}{{\footnotesize{}coNP-compl.}} & \tabularnewline
\cline{2-2} 
{\footnotesize{}Safely tight, not Schaefer} & \multirow{2}{*}{{\footnotesize{}NP-compl.}} &  &  & \tabularnewline
\cline{3-5} 
{\footnotesize{}Not safely tight} &  & {\footnotesize{}PSPACE-compl.} & {\footnotesize{}PSPACE-compl.} & {\footnotesize{}$2^{\Omega(\sqrt{n})}$}\tabularnewline
\hline 
\end{tabular}\endgroup
\par\end{centering}

\protect\caption{Complete classification of the connectivity problems and the diameter
for CNF($\mathcal{S}$)-formulas with constants, in comparison to
\noun{Sat.}}

\end{table}

\section{Preliminaries}

First we introduce some terminology for Boolean relations and formulas.
We will use the standard notions also used in \citep{gop}, but carefully
define \emph{substitution of constants} and \emph{identification of
variables}, and distinguish CNF($\mathcal{S}$)-formulas with and
without constants.
\begin{defn}
\label{def:rel}An \emph{$n$-ary Boolean relation} (or\emph{ logical
relation, relation} for short) is a subset of $\{0,1\}^{n}$ ($n\ge1$).

For an $n$-ary relation $R$, we can define an $(n-k)$-ary relation
\[
R'(x_{1},\ldots,x_{n-k})=R(\xi_{1},\ldots,\xi_{n})
\]
($0<k<n$). If each $\xi_{i}\in\{0,1,x_{1},\ldots,x_{n-k}\}$ and
each variable $x_{i}\in\{x_{1},\ldots,x_{n-k}\}$ occurs at most once
in $(\xi_{1},\ldots,\xi_{n})$, we say $R'$ is obtained from $R$
by \emph{substitution of constants}. If each $\xi_{i}\in\{x_{1},\ldots,x_{n-k}\}$,
(and each $x_{i}\in\{x_{1},\ldots,x_{n-k}\}$ may occur any number
of times in $(\xi_{1},\ldots,\xi_{n})$), $R'$ is obtained by \emph{identification
of variables}. Note that we allow any permutation of the variables
in both cases.

The set of solutions of a propositional formula $\phi$ over $n$
variables defines in a natural way an $n$-ary relation $[\phi]$,
where the variables are taken in lexicographic order. We will often
identify the formula $\phi$ with the relation it defines and omit
the brackets.
\end{defn}
In the following definition note that we write \noun{st-Conn}\textsubscript{C}($\mathcal{S}$)
resp.~\noun{Conn}\textsubscript{C}($\mathcal{S}$) instead of \noun{st-Conn}($\mathcal{S}$)
resp.\ \noun{st-Conn}($\mathcal{S}$) like Gopalan et al., for consistency
with the usual notation \noun{Sat}($\mathcal{S}$) for the satisfiability
problem without constants and \emph{\noun{S}}\noun{at}\textsubscript{C}($\mathcal{S}$)
for the one with constants. Accordingly, we call CNF($\mathcal{S}$)-formulas
with constants CNF\textsubscript{C}($\mathcal{S}$)-formulas.
\begin{defn}
\label{def:cnf}A \emph{CNF-formula} is a propositional formula of
the form $C_{1}\wedge\cdots\wedge C_{m}$ ($1\leq m<\infty$), where
each $C_{i}$ is a \emph{clause}, that is, a finite disjunction of
\emph{literals} (variables or negated variables). A\emph{ $k$-CNF-formula}
($k\geq1$) is a CNF-formula where each $C_{i}$ has at most $k$
literals. A \emph{Horn (dual Horn)} formula is a CNF-formula where
each $C_{i}$ has at most one positive (negative) literal.

For a finite set of relations $\mathcal{S}$, a\emph{ CNF\textsubscript{C}($\mathcal{S}$)-formula}
over a set of variables $V$ is a finite conjunction $C_{1}\wedge\cdots\wedge C_{m}$,
where each $C_{i}$ is a \emph{constraint application} (\emph{constraint
}for short), i.e., an expression of the form $R(\xi_{1},\ldots,\xi_{k})$,
with a $k$-ary relation $R\in\mathcal{S}$, and each $\xi_{j}$ is
a variable from $V$ or one of the constants 0, 1. By $\mathrm{Var}(C_{i})$,
we denote the set of variables occurring in $\xi_{1},\ldots,\xi_{k}$.
With the\emph{ relation corresponding to $C_{i}$} we mean the relation
$[R(\xi_{1},\ldots,\xi_{k})]$ (that may be different from $R$ by
substitution of constants, or identification or permutation of variables).
A\emph{ CNF($\mathcal{S}$)-formula }is a\emph{ }CNF\textsubscript{C}($\mathcal{S}$)-formula
where each $\xi_{j}$ is a variable in $V$, not a constant.
\end{defn}
We define the solution graph and its diameter as in \citep{gop}.
We use $\boldsymbol{a},\boldsymbol{b},\ldots$ or $\boldsymbol{a}^{1},\boldsymbol{a}^{2},\ldots$
to denote vectors of Boolean values and $\boldsymbol{x},\boldsymbol{y},\ldots$
or $\boldsymbol{x}^{1},\boldsymbol{x}^{2},\ldots$ to denote vectors
of variables, $\boldsymbol{a}=(a_{1},a_{2},\ldots)$ and $\boldsymbol{x}=(x_{1},x_{2},\ldots)$.
\begin{defn}
The \emph{solution graph} $G(\phi)$ of $\phi$ is the subgraph of
the $n$-dimensional hypercube graph induced by the vectors in $[\phi]$,
i.e., the vertices of $G(\phi$) are the vectors in $[\phi]$, and
there is an edge between two vectors iff they differ in exactly one
variable. We will also refer to $G(R)$ for any logical relation $R$
(not necessarily defined by a formula).

If \textbf{$\boldsymbol{a}$} and $\boldsymbol{b}$ are solutions
of a formula $\phi$ and lie in the same connected component (\emph{component}
for short) of $G(\phi)$, we write $d_{\phi}(\boldsymbol{a},\boldsymbol{b})$
to denote the shortest-path distance between \textbf{$\boldsymbol{a}$}
and $\boldsymbol{b}$. The \emph{diameter of a component} is the maximal
shortest-path distance between any two vectors in that component.
The\emph{ diameter of $G(\phi)$} is the maximal diameter of any component.

The \emph{Hamming distance }$|\boldsymbol{a}-\boldsymbol{b}|$ of
two Boolean vectors \textbf{$\boldsymbol{a}$} and $\boldsymbol{b}$
is the number of positions in which they differ.
\end{defn}
We define the following decision problems for CNF($\mathcal{S}$)-formulas
resp. CNF\textsubscript{C}($\mathcal{S}$)-formulas:
\begin{itemize}
\item the \emph{satisfiability problem }\emph{\noun{Sat}}\emph{($\mathcal{S}$):}
Given a CNF($\mathcal{S}$)-formula $\phi$, is $\phi$ satisfiable?
\item the \emph{satisfiability problem with constants }\emph{\noun{S}}\noun{at}\textsubscript{\emph{C}}(\emph{$\mathcal{S}$):}
Given a CNF\textsubscript{C}($\mathcal{S}$)-formula $\phi$, is
$\phi$ satisfiable?
\item the\emph{ connectivity problem (with constants) }\noun{Conn}\textsubscript{\emph{C}}($\mathcal{S}$)\emph{:}
Given a CNF\textsubscript{C}($\mathcal{S}$)-formula $\phi$, is
$G(\phi)$ connected? (if $\phi$ is unsatisfiable, then $G(\phi)$
is considered connected)
\item the\emph{ $st$-connectivity problem (with constants) }\emph{\noun{st-}}\noun{Conn}\textsubscript{\emph{C}}($\mathcal{S}$)\emph{\noun{:}}\noun{
}Given a CNF\textsubscript{C}($\mathcal{S}$)-formula $\phi$ and
two solutions $\boldsymbol{s}$ and $\boldsymbol{t}$, is there a
path from $\boldsymbol{s}$ to $\boldsymbol{t}$ in $G(\phi)$?
\end{itemize}
The complexity of the problems depends on the kind of relations in
$\mathcal{S}$; we now define the relevant types. Some are already
familiar from Schaefer's classification of \emph{\noun{Sat,}} some
were introduced by Gopalan et al., and the ones starting with ``safely''
are new; \emph{IHSB }stands for\emph{ }``implicative hitting set-bounded''
and was adopted by Gopalan et al. from \citep{Creignou:2001:CCB:377810},
where it was introduced for a refinement of Schaefer\textquoteright s
theorem and the classification of related problems.
\begin{defn}
\label{def:typ}Let $R$ be an $n$-ary logical relation.
\begin{itemize}
\item $R$ is\emph{ 0-valid (1-valid)} if $0^{n}\in R$ ($1^{n}\in R$).
\item $R$ is \emph{bijunctive} if it is the set of solutions of a 2-CNF-formula.
\item $R$ is \emph{Horn (dual Horn)} if it is the set of solutions of a
Horn (dual Horn) formula.
\item $R$ is \emph{affine} if it is the set of solutions of a formula $x_{i_{1}}\oplus\ldots\oplus x_{i_{m}}\oplus c$
with $i_{1},\ldots,i_{m}\in\{1,\ldots,n\}$ and $c\in\{0,1\}$.
\item $R$ is \emph{componentwise bijunctive} if every connected component
of $G(R)$ is a bijunctive relation. $R$ is \emph{safely componentwise
bijunctive} if $R$ and every relation $R'$ obtained from $R$ by
identification of variables is componentwise bijunctive.
\item $R$ is \emph{OR-free} (\emph{NAND-free}) if the relation OR = $\left\{ 01,10,11\right\} $
(NAND = $\left\{ 00,01,10\right\} $) cannot be obtained from $R$
by substitution of constants. $R$ is \emph{safely OR-free} (\emph{safely
NAND-free}) if $R$ and every relation $R'$ obtained from $R$ by
identification of variables is OR-free (NAND-free).
\item $R$ is \emph{IHSB$-$ (IHSB$+$)} if it is the set of solutions of
a Horn (dual Horn) formula in which all clauses with more than 2 literals
have only negative literals (only positive literals).
\item $R$ is \emph{componentwise IHSB$-$ (componentwise IHSB$+$)} if
every connected component of $G(R)$ is\emph{ }IHSB$-$ (IHSB$+$).
$R$ is \emph{safely} \emph{componentwise IHSB$-$ (safely componentwise
IHSB$+$) }if $R$ and every relation $R'$ obtained from $R$ by
identification of variables is componentwise\emph{ }IHSB$-$ (componentwise
IHSB$+$).
\end{itemize}
\end{defn}
If one is given the relation explicitly (as a set of vectors), the
properties 0-valid, 1-valid, OR-free\emph{ }and\emph{ }NAND-free can
be checked easily. Bijunctive, Horn, dual Horn, affine, IHSB$-$ and
IHSB$+$ can be checked by \emph{closure properties}:
\begin{defn}
A relation $R$ is \emph{closed} under some $n$-ary operation $f$
iff the vector obtained by the coordinate-wise application of $f$
to any $m$ vectors from $R$ is again in $R$, i.e., if 
\[
\boldsymbol{a}^{1},\ldots,\boldsymbol{a}^{m}\in R\Longrightarrow(f(a_{1}^{1},\ldots,a_{1}^{m}),\ldots f(a_{n}^{1},\ldots,a_{n}^{m}))\in R.
\]
\end{defn}
\begin{lem}
\label{lem:cl}A relation $R$ is
\begin{itemize}
\item bijunctive, iff it is closed under the ternary majority operation\\
 \emph{MAJ($x,y,z$)}=$\left(x\vee y\right)\wedge\left(y\vee z\right)\wedge\left(z\vee x\right)$
\citep[Lemma 4.9]{Creignou:2001:CCB:377810},
\item Horn (dual Horn), iff it is closed under $\wedge$ (under $\vee$,
resp.) \citep[Lemma 4.8]{Creignou:2001:CCB:377810},
\item affine,\emph{ }iff it is closed under $x\oplus y\oplus z$ \citep[Lemma 4.10]{Creignou:2001:CCB:377810},
\item IHSB$-$ (IHSB$+$), iff it is closed under $x\wedge(y\vee z)$ (under
$x\vee(y\wedge z)$, resp.).
\end{itemize}
\end{lem}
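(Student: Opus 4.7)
The first three bullets are classical results cited from \citep{Creignou:2001:CCB:377810}, so the plan is to treat only the IHSB$-$ and IHSB$+$ characterizations. By the obvious $\wedge/\vee$-duality between the two statements, it suffices to argue the IHSB$-$ case, i.e.\ closure under $f(x,y,z)=x\wedge(y\vee z)$; the IHSB$+$ case then follows by the dual argument using $f(x,y,z)=x\vee(y\wedge z)$.

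For the forward direction ($R$ is IHSB$-$ $\Rightarrow$ $R$ is closed under $f$), I would verify that the solution set of every clause allowed in an IHSB$-$ formula is $f$-invariant and then conclude that the intersection, i.e.\ $R$ itself, is $f$-invariant. Only two shapes of clause need to be examined: Horn clauses of length at most $2$ (a short case analysis over their four possible forms) and purely negative clauses $\neg x_{j_1}\vee\cdots\vee\neg x_{j_m}$. The latter is the cleanest case: if $f(a^1,a^2,a^3)_{j_t}=1$ for every $t$, then in particular $a^1_{j_t}=1$ for every $t$, so $a^1$ itself would already have falsified the clause.

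For the backward direction ($R$ closed under $f$ $\Rightarrow$ $R$ is IHSB$-$), the identity $f(x,y,y)=x\wedge y$ shows that $R$ is closed under $\wedge$ and hence Horn, so $R$ is the set of solutions of the conjunction of all its prime Horn implicates. The task is to rule out any prime Horn implicate of the form $C=\neg x_{i_1}\vee\cdots\vee\neg x_{i_k}\vee x_j$ with $k\geq 2$. Primality will supply two kinds of witnesses in $R$: for each $l\in\{1,\dots,k\}$ an assignment $\alpha^l$ that falsifies the strict subclause $C\setminus\{\neg x_{i_l}\}$, forcing $\alpha^l_{i_{l'}}=1$ for $l'\neq l$, $\alpha^l_j=0$, and (in order for $C$ itself to hold) $\alpha^l_{i_l}=0$; and, from the strict subclause $C\setminus\{x_j\}=\neg x_{i_1}\vee\cdots\vee\neg x_{i_k}$, an assignment $v$ with $v_{i_l}=1$ for every $l$, hence $v_j=1$. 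I would then evaluate the single combination $f(v,\alpha^1,\alpha^2)=v\wedge(\alpha^1\vee\alpha^2)$ and check coordinate by coordinate on $x_{i_1},\dots,x_{i_k},x_j$ that the result has all $x_{i_l}=1$ and $x_j=0$. Since $R$ is closed under $f$, this vector lies in $R$ yet falsifies $C$, giving the desired contradiction.

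The main obstacle is locating this exact three-argument combination. The two ``lower'' witnesses $\alpha^1,\alpha^2$ on their own would only exercise the $\{\wedge\}$-clone, which is too weak; it is precisely the extra witness $v$ sitting in the first slot of $f$ that brings in the full strength of the operation $x\wedge(y\vee z)$ and allows the forbidden vector $(1,\dots,1,0)$ to be forced into $R$. Once this combination is spotted, no separate argument is needed for $k=2$ versus $k\geq 3$, and the remaining verifications reduce to short coordinate checks.
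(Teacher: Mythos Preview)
Your argument is correct. The forward direction is a routine clause-by-clause check, and the backward direction is sound: closure under $f(x,y,z)=x\wedge(y\vee z)$ gives closure under $\wedge$ via $f(x,y,y)$, hence Horn; then the single evaluation $f(v,\alpha^{1},\alpha^{2})$ indeed produces the forbidden vector with all $x_{i_{l}}=1$ and $x_{j}=0$, contradicting primality of $C$. The coordinate checks go through exactly as you describe, uniformly in $k\geq 2$.

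However, your route is genuinely different from the paper's. The paper does not argue elementarily at all: it invokes the Galois correspondence between clones of Boolean operations and co-clones of relations, and simply reads off from the tables in \citep{bohler2005bases} and \citep{bloc} that the IHSB$-$ relations form a base of the co-clone $\mathrm{INV}(\mathsf{S}_{10})$ while $x\wedge(y\vee z)$ is a base of the clone $\mathsf{S}_{10}$ (dually $\mathsf{S}_{00}$ for IHSB$+$). That argument is a two-line citation, but it presupposes the Post-lattice machinery and those external references. Your proof is self-contained and explains \emph{why} the operation $x\wedge(y\vee z)$ is exactly strong enough to rule out multi-implication clauses, at the cost of a few paragraphs of explicit verification. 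Either is acceptable; yours is more informative for a reader unfamiliar with clone theory, the paper's is more economical and situates the result in the standard framework.
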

\begin{proof}
For IHSB$-$ and IHSB$+$ relations, this can be verified using the
Galois correspondence between closed sets of relations and closed
sets of Boolean functions (see \citep{bohler2005bases}). From the
table in \citep{bohler2005bases} we find that the IHSB$-$ relations
are a base of the co-clone INV($\mathsf{S}_{10}$), and the IHSB$+$
ones a base of INV($\mathsf{S}_{00}$), and from the table in \citep{bloc}
we see that $x\wedge(y\vee z)$ and $x\vee(y\wedge z)$ are bases
of the clones $\mathsf{S}_{10}$ and $\mathsf{S}_{00}$, resp.
\end{proof}
The following examples show that the ``safely'' classes are properly
contained in the corresponding ``unsafe'' ones.
\begin{example}
\label{exa:or}The relation $\{001,110,111\}$ is OR-free, but not
safely OR-free, as identifying the first two variables gives $\{01,10,11\}$.
\end{example}
The smallest examples of relations that are componentwise bijunctive,
but not safely componentwise bijunctive, or Horn and componentwise
\emph{IHSB$-$}, but not safely componentwise \emph{IHSB$-$} are
of dimension 4:
\begin{example}
\label{exa:ecnp}For the relation $R_{\mathrm{coNP}}=\{0000,0100,1100,\,\,0011,1011\}$,
both components $\{0000,0100,1100\}$ and $\{0011,1011\}$ are closed
under MAJ and under $x\vee(y\wedge z)$, but the relation $R'=\{000,010,110,001,101\}$,
obtained from $R_{\mathrm{coNP}}$ by identifying the third and fourth
variable, has only one component that is neither closed under MAJ,
nor under $x\wedge(y\vee z)$: applying MAJ or $x\wedge(y\vee z)$
coordinate-wise to ($110,000,101$) both gives $100\notin R'$.

For an example of a formula consider 
\[
\phi_{\mathrm{coNP}}=(x\wedge y)\vee\left(\overline{x}\wedge\overline{y}\wedge(\overline{z}\vee\overline{w})\right)\equiv\left(x\vee\overline{y}\right)\wedge\left(\overline{x}\vee y\right)\wedge\left(x\vee\overline{z}\vee\overline{w}\right)\wedge\left(y\vee\overline{z}\vee\overline{w}\right),
\]
which is clearly componentwise bijunctive and componentwise IHSB$-$,
but $x\vee\overline{z}\vee\overline{w}$, obtained by identifying
$y$ with $x$, has only one component that is neither bijunctive
nor IHSB$-$.

\begin{figure}[!h]
\begin{centering}
\includegraphics[scale=0.7]{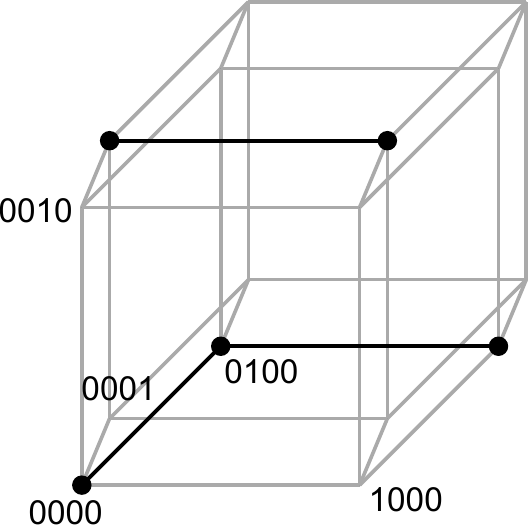}$\qquad$\includegraphics[scale=0.7]{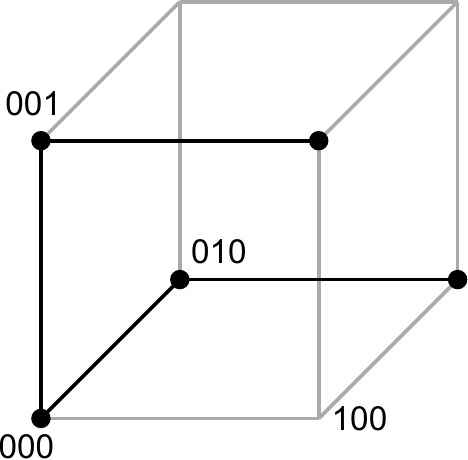}
\par\end{centering}

\protect\caption{The solution graphs of the relations $R_{\mathrm{coNP}}$ and $R'$
from Example \ref{exa:ecnp}, drawn on orthographic hypercube projections;
the ``axis vertices'' are labeled.}
\end{figure}

\end{example}
The following classes of sets of relations are fitted to the structural
and computational boundaries for the connectivity; the term \emph{CPSS}
stands for \emph{constraint-projection separating Schaefer} and will
become clear in \prettyref{sec:cpss} from Definition \ref{dcp} and
\prettyref{lem:prjs b-1}.
\begin{defn}
\label{def:cpss}A set $\mathcal{S}$ of logical relations is \emph{tight}
(\emph{safely tight}) if at least one of the following conditions
holds:
\begin{enumerate}
\item every relation in $\mathcal{S}$ is componentwise bijunctive (safely
componentwise bijunctive).
\item every relation in $\mathcal{S}$ is OR-free (safely OR-free).
\item every relation in $\mathcal{S}$ is NAND-free (safely NAND-free).
\end{enumerate}
A set $\mathcal{S}$ of logical relations is \emph{Schaefer} if at
least one of the following conditions holds:
\begin{enumerate}
\item every relation in $\mathcal{S}$ is bijunctive.
\item every relation in $\mathcal{S}$ is Horn.
\item every relation in $\mathcal{S}$ is dual Horn.
\item every relation in $\mathcal{S}$ is affine.
\end{enumerate}
A set $\mathcal{S}$ of logical relations is \emph{CPSS} if at least
one of the following conditions holds:
\begin{enumerate}
\item every relation in $\mathcal{S}$ is bijunctive.
\item every relation in $\mathcal{S}$ is Horn and safely componentwise
IHSB$-$.
\item every relation in $\mathcal{S}$ is dual Horn and safely componentwise
IHSB$+$.
\item every relation in $\mathcal{S}$ is affine.
\end{enumerate}
\end{defn}

\section{\label{sec:imp}The Impact of Repeated Variables in Constraints}

In this section we show which of Gopalan et al.'s statements and proofs
in \citep{gop} are affected by the disregard of repeated occurrences
of variables in constraint applications, and how they need to be modified.
In the whole section, we refer to the definitions, lemmas, theorems
and corollaries of \citep{gop}.

\subsection{Application of Structural Expressibility}

The first mistake is in the application of structural expressibility,
defined in \citep{gop} as follows:
\begin{itemize}
\item \noun{Definition 3.1:} A relation $R$ is \emph{structurally expressible}
from a set of relations $\mathcal{S}$ if there is a CNF\textsubscript{C}($\mathcal{S}$)-formula
$\phi$ such that the following conditions hold:

\begin{enumerate}
\item $R=\{\boldsymbol{a}|\exists\boldsymbol{y}\phi(\boldsymbol{a},\boldsymbol{y})\}$.
\item For every $\boldsymbol{a}\in R$, the graph $G(\phi(\boldsymbol{a},\boldsymbol{y}))$
is connected.
\item For $\boldsymbol{a},\boldsymbol{b}\in R$ with $|\boldsymbol{a}-\boldsymbol{b}|=1$,
there exists a \emph{witness} $\boldsymbol{w}$ such that $(\boldsymbol{a},\boldsymbol{w})$
and $(\boldsymbol{b},\boldsymbol{w})$ are solutions of $\phi$.
\end{enumerate}
\end{itemize}
Gopalan et al.~now argue that connectivity were retained for a CNF\textsubscript{C}($\mathcal{S}$)-formula
when replacing in every constraint the applied relation $R$ with
a structural expression of $R$. However, they do not consider how
the connectivity of a relation changes by substitution of constants
and identification of variables. The proof of Lemma 3.2 is only correct
for formulas without constants, where no variable is used more than
once in any constraint (we'll shortly explain why), so we must change
the lemma as follows:
\begin{itemize}
\item \noun{Lemma 3.2:} \emph{Let $\mathcal{S}$ and $\mathcal{S}'$ be
sets of relations such that every $R\in\mathcal{S}'$ is structurally
expressible from $\mathcal{S}$, and, moreover, there is a polynomial-time
algorithm that produces a structural expression from $\mathcal{S}$
for every $R\in\mathcal{S}'$. Given a }CNF($\mathcal{S}'$)\emph{-formula
$\psi(\boldsymbol{x})$ }\textbf{(without constants), where no variable
is used more than once in any constraint}\emph{, one can efficiently
construct a }CNF\textsubscript{C}($\mathcal{S}$)\emph{-formula $\varphi(\boldsymbol{x},\boldsymbol{y})$
such that}

\begin{enumerate}
\item \emph{$\psi(\boldsymbol{x})=\exists\boldsymbol{y}\varphi(\boldsymbol{x},\boldsymbol{y})$;}
\item \emph{if $(\boldsymbol{s}\boldsymbol{,w^{s}}),$ $(\boldsymbol{t},\boldsymbol{w^{t}})$
are connected in $G(\varphi)$ by a path of length $d$, then there
is a path from $\boldsymbol{s}$ to $\boldsymbol{t}$ in $G(\psi)$
of length at most $d$;}
\item \emph{if $\boldsymbol{s},\boldsymbol{t}\in\psi$ are connected in
$G(\psi)$, then for every witness $\boldsymbol{w^{s}}$ of $\boldsymbol{s}$,
and every witness $\boldsymbol{w^{t}}$ of $\boldsymbol{t}$, there
is a path from $(\boldsymbol{s}\boldsymbol{,w^{s}})$ to $(\boldsymbol{t},\boldsymbol{w^{t}})$
in $G(\varphi)$.}
\end{enumerate}
\end{itemize}
In the proof, we only clarify the notation a little:
\begin{itemize}[label= ]
\item ``Let $\psi(\boldsymbol{x})=C_{1}\wedge\cdots\wedge C_{m}$ with
$C_{j}=R_{j}(\boldsymbol{x}_{j})$, where  $R_{j}$ is some relation
from $\mathcal{S}'$, and $\boldsymbol{x}_{j}$ is the vector of variables
to which relation $R_{j}$ is applied\textbf{.} Let $\varphi_{j}$
be the structural expression for $R_{j}$ from ${\cal S}$, so that
$R_{j}(\boldsymbol{x}_{j})\equiv\exists\boldsymbol{y}_{j}~\varphi_{j}(\boldsymbol{x}_{j},\boldsymbol{y}_{j})$.
Let $\boldsymbol{y}$ be the vector $(\boldsymbol{y}_{1},\dots,\boldsymbol{y}_{m})$
and let $\varphi(\boldsymbol{x},\boldsymbol{y})$ be the formula $\wedge_{j=1}^{m}\varphi_{j}(\boldsymbol{x}_{j},\boldsymbol{y}_{j})$.
Then $\psi(\boldsymbol{x})\equiv\exists\boldsymbol{y}~\varphi(\boldsymbol{x},\boldsymbol{y})$.\\
{\small{}\hspace*{4ex}}Statement 2 follows from 1 by projection of
the path on the coordinates of $\boldsymbol{x}$. For statement 3,
consider $\boldsymbol{s},\boldsymbol{t}\in\psi$ that are connected
in $G(\psi)$ via a path $\boldsymbol{s}=\boldsymbol{u^{0}}\rightarrow\boldsymbol{u^{1}}\rightarrow\dots\rightarrow\boldsymbol{u^{r}}=\boldsymbol{t}$
. For every $\boldsymbol{u^{i}},\boldsymbol{u^{i+1}}$, and clause
$C_{j}$, there exists an assignment $\boldsymbol{w_{j}^{i}}$ to
$\boldsymbol{y}_{j}$ such that both $(\boldsymbol{u_{j}^{i}},\boldsymbol{w_{j}^{i}})$
and $(\boldsymbol{u_{j}^{i+1}},\boldsymbol{w_{j}^{i}})$ are solutions
of $\varphi_{j}$, by condition $3$ of structural expressibility.
Thus $(\boldsymbol{u^{i}},\boldsymbol{w^{i}})$ and $(\boldsymbol{u^{i+1}},\boldsymbol{w^{i}})$
are both solutions of $\varphi$, where $\boldsymbol{w^{i}}=(\boldsymbol{w_{1}^{i}},\dots,\boldsymbol{w_{m}^{i}})$.
Further, for every $\boldsymbol{u^{i}}$, the space of solutions of
$\varphi(\boldsymbol{u^{i}},\boldsymbol{y})$ is the product space
of the solutions of $\varphi_{j}(\boldsymbol{u_{j}^{i}},\boldsymbol{y}_{j})$
over $j=1,\dots,m$. Since these are all connected by condition $2$
of structural expressibility, $G(\varphi(\boldsymbol{u^{i}},\boldsymbol{y}))$
is connected. The following describes a path from $(\boldsymbol{s},\boldsymbol{w^{s}})$
to $(\boldsymbol{t},\boldsymbol{w^{t}})$ in $G(\varphi)$: ~$(\boldsymbol{s},\boldsymbol{w^{s}})\rightsquigarrow(\boldsymbol{s},\boldsymbol{w^{0}})\rightarrow(\boldsymbol{u^{1}},\boldsymbol{w^{0}})\rightsquigarrow(\boldsymbol{u^{1}},\boldsymbol{w^{1}})\rightarrow\dots\rightsquigarrow(\boldsymbol{u^{r-1}},\boldsymbol{w^{r-1}})\rightarrow(\boldsymbol{t},\boldsymbol{w^{r-1}})\rightsquigarrow(\boldsymbol{t},\boldsymbol{w^{t}})$.
Here $\rightsquigarrow$ indicates a path in $G(\varphi(\boldsymbol{u^{i}},\boldsymbol{y}))$.''
\end{itemize}
It is easy to show that the statement of this lemma is also correct
if we allow constants in $\psi$; however, we don't need this result.

To see the problem with repeated variables, we have to carefully distinguish
the relation $R\in\mathcal{S}$ used in a constraint of a CNF\textsubscript{C}($\mathcal{S}$)-formula
as ``template'' from the relation $R'$ resulting for the variables
of the formula: Two solutions of $R'$ that differ in only one variable
may originate from solutions of $R$ that differ in more variables,
so that there may be no common witness in a structural expression.
\begin{example}
For a minimal example where connectivity is not retained in a structural
expression when variables are identified in a constraint, consider
$\mathcal{S}=\{Q,R\}$ with $Q=x\vee y$ and $R=\overline{x}\vee\overline{y}$,
and $\mathcal{S}'=\{R'\}$ with $R'=x\vee\overline{y}$, and the structural
expression $R'\equiv\exists z\;Q(x,z)\wedge R(z,y)=\exists z\;(x\vee z)\wedge(\overline{z}\vee\overline{y})$.
Now the CNF\textsubscript{C}($\mathcal{S}'$)-formula $\psi=R'(x,x)=x\vee\overline{x}$
is connected, while $Q(x,z)\wedge R(z,x)=(x\vee z)\wedge(\overline{z}\vee\overline{x})$
is disconnected.

For an example with relevance to the reductions from 3-CNF formulas,
consider Gopalan et al.'s example for a structural expression of the
3-clause $R_{{\rm NAZ}}=\{0,1\}^{3}\setminus\{000\}=x_{1}\vee x_{2}\vee x_{3}$
using the non-tight relation $R_{{\rm NAE}}=\{0,1\}^{3}\setminus\{000,111\}=(x_{1}\vee x_{2}\vee x_{3})\wedge(\overline{x}_{1}\vee\overline{x}_{2}\vee\overline{x}_{3})$
with 
\[
\varphi(x_{1},x_{2},x_{3},y_{1},y_{2})=R_{{\rm {NAE}}}(x_{1},x_{2},y_{1})\wedge R_{{\rm {NAE}}}(x_{2},x_{3},y_{2})\wedge R_{{\rm {NAE}}}(y_{1},y_{2},1):
\]
 Here, $R_{{\rm NAZ}}(x_{1},x_{2},x_{2})=x_{1}\vee x_{2}$ is connected,
while {\small{}
\[
\varphi(x_{1},x_{2},x_{2},y_{1},y_{2})=\left((x_{1}\vee x_{2}\vee y_{1})\wedge(\overline{x}_{1}\vee\overline{x}_{2}\vee\overline{y}_{1})\right)\wedge\left((x_{2}\vee y_{2})\wedge(\overline{x}_{2}\vee\overline{y}_{2})\right)\wedge\left(\overline{y}_{1}\vee\overline{y}_{2}\right)
\]
}is disconnected. See Figure \ref{fig:F}.

\begin{figure}[!h]
\includegraphics[scale=0.6]{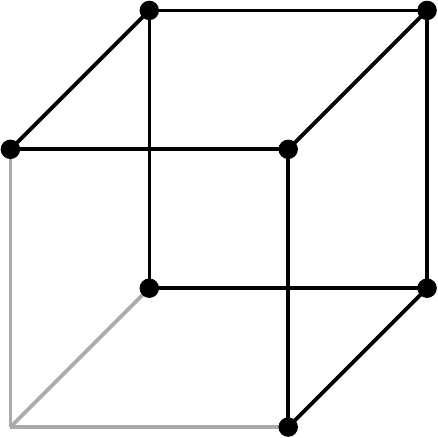}$\quad$\includegraphics[scale=0.6]{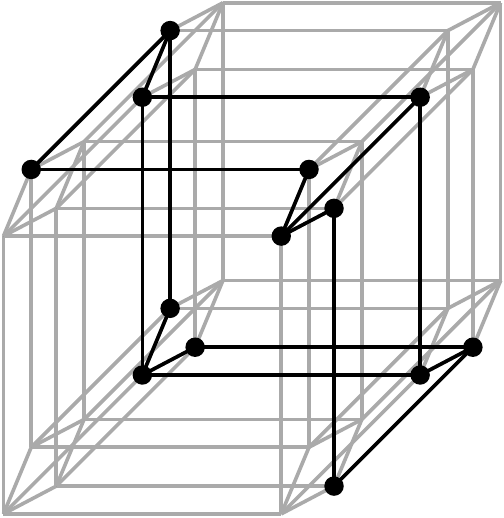}$ $\includegraphics[scale=0.6]{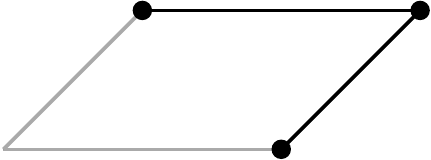}\includegraphics[scale=0.6]{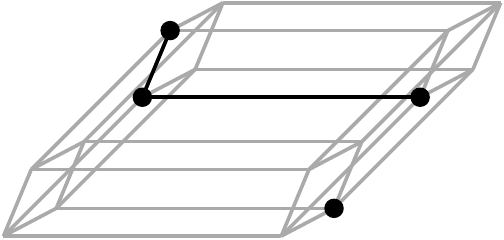}

\protect\caption{\label{fig:F}From left to right: $R_{{\rm NAZ}}(x_{1},x_{2},x_{3})$,
$\varphi(x_{1},x_{2},x_{3},y_{1},y_{2})$, $R_{{\rm NAZ}}(x_{1},x_{2},x_{2})$,
$\varphi(x_{1},x_{2},x_{2},y_{1},y_{2})$. The $x$-coordinates are
plotted along the long coordinate axes, the $y$-coordinates along
the short axes.}
\end{figure}

\end{example}
We have to change Corollary 3.3 accordingly; we denote the connectivity
problems for CNF($\mathcal{S}$)-formulas without repeated variables
in constraints by the subscript $_{\mathrm{ni}}$:
\begin{itemize}
\item \noun{Corollary 3.3:} \emph{Suppose $\mathcal{S}$ and $\mathcal{S}'$
are sets of relations such that every $R\in\mathcal{S}'$ is structurally
expressible from $\mathcal{S}$.}

\begin{enumerate}
\item \emph{There are polynomial-time reductions from }\emph{\noun{Conn}}\emph{\textsubscript{\textbf{ni}}($\mathcal{S}$')
to }\emph{\noun{Conn}}\emph{\textsubscript{C}($\mathcal{S}$), and
from }\emph{\noun{st-Conn}}\emph{\textsubscript{\textbf{ni}}($\mathcal{S}$')
to }\emph{\noun{st-Conn}}\emph{\textsubscript{C}($\mathcal{S}$).}
\item \emph{If there exists a CNF($\mathcal{S}$')-formula $\psi(\boldsymbol{x})$}\textbf{\emph{
where no variable is used more than once in any constraint}}\emph{
with $n$ variables, $m$ clauses and diameter $d$, then there exists
a CNF\textsubscript{C}($\mathcal{S}$)-formula $\phi(\boldsymbol{x},\boldsymbol{y})$,
where $\boldsymbol{y}$ is a vector of $O(m)$ variables, such that
the diameter of $G(\phi)$ is at least $d$.}
\end{enumerate}
\end{itemize}
Corollary 3.3 is used to prove the Theorems 2.8, 2.9 and 2.10, together
with Lemma 3.4, which states that all 3-clauses are structurally expressible
from any non-tight set. With the corollary weakened as above, this
reasoning now is not correct anymore. The solution is to structurally
express the relations resulting for the variables of the formula in
a constraint directly, if variables occur repeatedly and / or constants
are used.

Thus, we now need that not only all 3-clauses, but also all 2- and
all 1-clauses are structurally expressible; however, this follows
from Theorem 2.7, so that the Theorems 2.8, 2.9 and 2.10 are correct
(we will even extend Theorems 2.7, 2.8, 2.9 and 2.10 in the next subsection
to a larger class of relations).

\subsection{Displacement of the Boundaries}

The next mistake is in the generalization of the structural properties
from CNF\textsubscript{C}($\mathcal{S}$)-formulas with bijunctive
sets $\mathcal{S}$ of relations to those with componentwise bijunctive
sets $\mathcal{S}$  in subsection 4.2 of \citep{gop}; the following
mistakes are similar.

As the proofs are stated there, the flaws are quite hard to locate.
The second paragraph of the proof of Lemma 4.3 is supposed to show
that every component of a CNF\textsubscript{C}($\mathcal{S}$)-formulas
$\varphi$ with a set $\mathcal{S}$ of only componentwise bijunctive
relations is the solution space of a formula $\varphi'$ with only
bijunctive relations. To construct $\varphi'$, every constraint $C_{i}$
in $\varphi$ using a relation $R\in\mathcal{S}$ with more than one
component is replaced by a constraint containing only one component
of $R$.

But if in $C_{i}$ some variables of $R$ are identified, and $R$
only is componentwise bijunctive and not safely componentwise bijunctive,
it is possible that the relation resulting for the variables of $\varphi$
is not componentwise bijunctive, as we have seen in \prettyref{exa:ecnp},
and thus not every of its components is bijunctive.

So we must change the lemma:
\begin{itemize}
\item \noun{Lemma 4.3:} \emph{Let }$\mathcal{S}$ \emph{be a set of }\textbf{\emph{safely}}\emph{
componentwise bijunctive relations and $\varphi$ a }CNF\textsubscript{C}($\mathcal{S}$)\emph{-formula.
If $\boldsymbol{a}$ and }\textbf{\emph{$\boldsymbol{b}$}}\emph{
are two solutions of $\varphi$ that lie in the same component of
G($\varphi$), then $d_{\varphi}(\boldsymbol{a},\boldsymbol{b})=|\boldsymbol{a}-\boldsymbol{b}|$,
i.e., no distance expands.}

{\small{}\hspace*{4ex}}In the proof, we replace the second paragraph
by

``For the general case, we show that every component $F$ of G($\varphi$)
is the solution space of a 2-CNF-formula \emph{$\varphi$}. Let $R\in\mathcal{S}$
be a\emph{ }safely componentwise bijunctive relation. Then any relation
corresponding to a clause in $\varphi$ of the form $R(x_{1},\ldots,x_{k})$
(the relation obtained after identifying repeated variables) consists
of bijunctive components $R_{1},\ldots,R_{m}$. The projection of
$F$ onto $x_{1},\ldots,x_{k}$ is itself connected and must satisfy
$R$. Hence it lies within one of the components $R_{1},\ldots,R_{m}$;
assume it is $R_{1}$. We replace $R(x_{1},\ldots,x_{k})$ by $R_{1}(x_{1},\ldots,x_{k})$.
Call this new formula $\varphi_{1}$. $G(\varphi_{1})$ consists of
all components of G($\varphi$) whose projection on $x_{1},\ldots,x_{k}$
lies in $R_{1}$. We repeat this for every clause. Finally we are
left with a formula $\varphi'$ over a set of bijunctive relations.
Hence $\varphi'$ is bijunctive and $G(\varphi')$ is a component
of $G(\varphi)$. So the claim follows from the bijunctive case.''

\end{itemize}
In consequence, the resulting corollary must be changed:
\begin{itemize}
\item \noun{Corollary 4.4:} \emph{Let }$\mathcal{S}$ \emph{be a set of
}\textbf{\emph{safely}}\emph{ componentwise bijunctive relations.
Then...}
\end{itemize}
The proof of Lemma 4.5 is supposed to show by contradiction that every
component of $G(\varphi)$ for a CNF\textsubscript{C}($\mathcal{S}$)-formula
$\varphi$ with a set $\mathcal{S}$ of OR-free relations must contain
a unique locally minimal solution. It is reasoned that if $G(\varphi)$
would contain two locally minimal solutions, the relation corresponding
to some clause $C_{i}$ in $\varphi$ would not be OR-free. This is
correct up to here, with our \prettyref{def:cnf} of ``the relation
corresponding to a clause''. From this it is concluded that some
relation in $\mathcal{S}$ could not have been OR-free. But actually,
$C_{i}$ could have been obtained from an OR-free relation that is
not safely OR-free by identification of variables, as we have seen
in \prettyref{exa:or}. So the lemma must be changed:
\begin{itemize}
\item \noun{Lemma 4.5:} \emph{Let }$\mathcal{S}$ \emph{be a set of }\textbf{\emph{safely}}\emph{
OR-free relations and $\varphi$ a }CNF\textsubscript{C}($\mathcal{S}$)\emph{-formula.
Every component...}

{\small{}\hspace*{4ex}}In the proof, we replace the last sentence
of the second paragraph by ``So the relation corresponding to that
clause is not OR-free, thus \emph{$\mathcal{S}$ }must have contained
some not\emph{ }safely OR-free relation.''

\end{itemize}
In consequence, the resulting corollaries must be changed:
\begin{itemize}
\item \noun{Corollary 4.6:} \emph{Let }$\mathcal{S}$ \emph{be a set of
}\textbf{\emph{safely}}\emph{ OR-free relations. Then...}
\end{itemize}
and
\begin{itemize}
\item \noun{Corollary 4.7:} \emph{Let }$\mathcal{S}$ \emph{be a }\textbf{\emph{safely}}\emph{
tight set of relations. Then...}
\end{itemize}
But now for dichotomies to hold, we must show for every\emph{ }not
safely tight set $\mathcal{S}$  that\emph{ }\noun{Conn}\textsubscript{C}($\mathcal{S}$)
and \emph{\noun{st-}}\noun{Conn}\textsubscript{C}($\mathcal{S}$)
are PSPACE-complete, and that there are CNF\textsubscript{C}($\mathcal{S}$)-formulas
$\varphi$ such that the diameter of $G(\varphi)$ is exponential
in the number of variables of $\varphi$. Therefor, we extend the
structural expressibility theorem (Theorem 2.7) to not safely tight
sets of relations; for this again, Lemma 3.4 must be extended:
\begin{itemize}
\item \noun{Lemma 3.4}: \emph{If set $\mathcal{S}$ of relations is not
}\textbf{\emph{safely}}\emph{ tight, $S_{3}$ is structurally expressible
from $\mathcal{S}$.}

{\small{}\hspace*{4ex}}In the first paragraph of the proof, we replace
``not OR-free'' with ``not \textbf{safely} OR-free'', ``not NAND-free''
with ``not \textbf{safely} NAND-free'', and we express $x_{1}\vee x_{2}$
($\overline{x}_{1}\vee\overline{x}_{2}$) by substitution of constants
\textbf{and identification of variables}. Similarly, in the first
paragraph of ``Step 1'', we replace ``componentwise bijunctive''
by ``\textbf{safely} componentwise bijunctive'', and in the second
paragraph of ``Step 1'' we obtain the required not componentwise
bijunctive relation $R$ from any not safely componentwise bijunctive
relation $R'$ by identification of variables. The remaining part
of the proof need not be modified.

\end{itemize}
Now we can extend the structural expressibility theorem:
\begin{itemize}
\item \noun{Theorem 2.7}: \emph{Let $\mathcal{S}$ be a finite set of logical
relations. If $\mathcal{S}$ is not }\textbf{\emph{safely}}\emph{
tight, then every logical relation is structurally expressible from
$\mathcal{S}$.}
\end{itemize}
Hereby, we can state the dichotomy theorems as follows:
\begin{itemize}
\item \noun{Theorem 2.}8: \emph{Let $\mathcal{S}$ be a finite set of logical
relations. If $\mathcal{S}$ is }\textbf{\emph{safely}}\emph{ tight,
then }\emph{\noun{Conn\textsubscript{C}($\mathcal{S}$)}}\emph{ is
in }\emph{\noun{coNP}}\emph{; otherwise, }\emph{\noun{Conn}}\textsubscript{C}($\mathcal{S}$)\emph{
is }\emph{\noun{PSPACE}}\emph{-complete.}
\item \noun{Theorem 2.}9: \emph{Let $\mathcal{S}$ be a finite set of logical
relations. If $\mathcal{S}$ is }\textbf{\emph{safely}}\emph{ tight,
then }\noun{st-}\emph{\noun{Conn\textsubscript{C}($\mathcal{S}$)}}\emph{
is in }\emph{\noun{P}}\emph{; otherwise, }\noun{st-}\emph{\noun{Conn}}\textsubscript{C}($\mathcal{S}$)\emph{
is }\emph{\noun{PSPACE}}\emph{-complete.}
\item \noun{Theorem 2.}10: \emph{Let $\mathcal{S}$ be a finite set of logical
relations. If $\mathcal{S}$ is }\textbf{\emph{safely}}\emph{ tight,
then} \emph{for every }CNF\textsubscript{C}($\mathcal{S}$)\emph{-formula
$\varphi$, the diameter of $G(\varphi)$ is linear in the number
of variables of $\varphi$; otherwise, there are }CNF\textsubscript{C}($\mathcal{S}$)\emph{-formulas
$\varphi$ such that the diameter of $G(\varphi)$ is exponential
in the number of variables of $\varphi$.}
\end{itemize}
For the inclusion structure of the classes to hold, we now have to
show that all Schaefer sets of relations are safely tight. Therefor,
we tighten\noun{ }Lemma 4.2:
\begin{itemize}
\item \noun{Lemma 4.2}\emph{\noun{:}}\emph{ Let $R$ be a logical relation.}

\begin{enumerate}
\item \emph{If $R$ is bijunctive, then R is }\textbf{\emph{safely}}\emph{
componentwise bijunctive}
\item \emph{If $R$ is Horn, then $R$ is }\textbf{\emph{safely}}\emph{
OR-free.}
\item \emph{If $R$ is dual Horn, then $R$ is }\textbf{\emph{safely}}\emph{
NAND-free.}
\item \emph{If $R$ is affine, then $R$ is }\textbf{\emph{safely}}\emph{
componentwise bijunctive, }\textbf{\emph{safely}}\emph{ OR-free, and
}\textbf{\emph{safely}}\emph{ NAND-free.}
\end{enumerate}

{\small{}\hspace*{4ex}}For the proof, we first note that any relation
obtained from a bijunctive (Horn, dual Horn, affine) one by identification
of variables is itself bijunctive (Horn, dual Horn, affine), which
is obvious from the definitions.\\
Now if in the first case $R$ were componentwise bijunctive but not
safely componentwise bijunctive, there were a relation $R'$ obtained
from $R$ by identification of variables that were bijunctive but
not componentwise bijunctive, which is not possible by the statement
of the original lemma. The reasoning for the other cases is analogous.

\end{itemize}
We have to change Lemma 4.8, since it relies on the wrong assumption
that \noun{Conn}\textsubscript{C}($\mathcal{S}$) is in coNP for
every tight set \emph{$\mathcal{S}$}:
\begin{itemize}
\item \noun{Lemma 4.8:} \emph{For }$\mathcal{S}$ \textbf{\emph{safely}}\emph{
tight, but not Schaefer, }\emph{\noun{Conn\textsubscript{C}($\mathcal{S}$)}}\emph{
is }\emph{\noun{coNP}}\emph{-complete.}

{\small{}\hspace*{4ex}}In the proof, we should clarify that the relation
$x\neq y$ is expressible as a CNF\textsubscript{C}($\mathcal{S}$)-formula,
not necessarily by substitution of constants only, see \prettyref{rem:consi}.

\end{itemize}
Finally, we have to weaken Lemma 4.13. In the last paragraph of the
proof, the connectivity question for a CNF\textsubscript{C}($\mathcal{S}$)-formula
$\varphi$ with a set $\mathcal{S}$ of componentwise IHSB$-$ relations
shall be reduced to one for a formula using only IHSB$-$ relations.
In the last sentence, a false assumption is used: That every relation
corresponding to a clause of $\varphi$ that has only a single component
would be IHSB$-$. Actually, that relation is guaranteed to be IHSB\emph{$-$
}only if the original relation is safely\emph{ }componentwise IHSB$-$,
as we have seen in \prettyref{exa:ecnp}. Thus the lemma must be changed:
\begin{itemize}
\item \noun{Lemma 4.13:} \emph{If }$\mathcal{S}$ \emph{a set of relations
that are Horn (dual Horn) and }\textbf{\emph{safely}}\emph{ componentwise
}IHSB\emph{$-$ (}IHSB\emph{$+$), then there is a polynomial-time
algorithm for}\textbf{\emph{ }}\noun{Conn}\textsubscript{C}($\mathcal{S}$).

{\small{}\hspace*{4ex}}The proof can be retained word-for-word; the
necessary comment ``(the relation obtained after identifying repeated
variables)'' is already mentioned in the last paragraph.

\end{itemize}
The following example shows cases where the above corrections make
a difference.
\begin{example}
Since the relations from Example \ref{exa:ecnp} are Horn and componentwise
IHSB$-$, \emph{\noun{Conn}}$_{C}$($\{R_{\mathrm{coNP}}\}$) and
\emph{\noun{Conn}}$_{C}$($\{[\phi_{\mathrm{coNP}}]\}$) would be
polynomial-time decidable by Lemma 4.13 of \citep{gop}. But $\{R_{\mathrm{coNP}}\}$
and $\{[\phi_{\mathrm{coNP}}]\}$ are not CPSS, thus \emph{\noun{Conn}}$_{C}$($\{R_{\mathrm{coNP}}\}$)
and \emph{\noun{Conn}}$_{C}$($\{[\phi_{\mathrm{coNP}}]\}$) are actually
coNP-complete.

The relation
\[
R_{\mathrm{PSPA}}=\{0001,\,\,0010,\,\,1100,1110,1101\}
\]
 is not Schaefer and not NAND-free ($R_{\mathrm{PSPA}}(1,1,x,y)=\overline{x\wedge y}$),
but componentwise bijunctive and OR-free, thus $\{R_{\mathrm{PSPA}}\}$
is tight, and \emph{\noun{Conn}}$_{C}$($\{R_{\mathrm{PSPA}}\}$)
would be coNP-complete by Lemma 4.8 of \citep{gop}, \emph{\noun{st-Conn}}$_{C}$($\{R_{\mathrm{PSPA}}\}$)
would be polynomial-time decidable by Theorem 2.9 of \citep{gop},
and the diameter of $G(\phi)$ linear in the number of variables for
all CNF\textsubscript{C}($\{R_{\mathrm{PSPA}}\}$)-formulas $\phi$
by Theorem 2.10 of \citep{gop}.

But $R_{\mathrm{PSPA}}$ is not safely componentwise bijunctive (as
identifying the first two variables gives $R'=\{001,010,100,110,101\}$,
and $\mathrm{MAJ}(001,010,100)=000\notin R'$), and not safely OR-free
(as $R'(x,y,0)=x\vee y$), thus $\{R_{\mathrm{PSPA}}\}$ is not safely
tight, and \emph{\noun{Conn}}$_{C}$($\{R_{\mathrm{PSPA}}\}$) and
\emph{\noun{st-Conn}}$_{C}$($\{R_{\mathrm{PSPA}}\}$) are actually
PSPACE-complete, and there are CNF\textsubscript{C}($\{R_{\mathrm{PSPA}}\}$)-formulas
$\phi$ for which the diameter of $G(\phi)$ is exponential in the
number of variables. \end{example}
\begin{rem}
\label{rem:consi}One could of course also consider CNF\textsubscript{C}($\mathcal{S}$)-formulas
without repeated variables in constraints. But in this case, one had
to check all consequences of this restriction. E.g., the proof of
Lemma 4.8 in \citep{gop} were not valid since the relation $x\neq y$
is not expressible without identification of variables from every
non-Schaefer set of relations. For example, for $R=\{1100,1010,1110,0001\}$,
$(x\neq y)$ is $R(x,x,x,y)$, but cannot be obtained from $R$ by
substitution of constants and conjunction only.
\end{rem}

\section{\label{sec:tri}A Trichotomy for \noun{Conn}\protect\textsubscript{C}($\mathcal{S}$)}

In this section, we prove the last piece needed to establish the trichotomy
for \noun{Conn}\textsubscript{C}($\mathcal{S}$).

Initially, Gopalan et al.~conjectured that \noun{Conn}\textsubscript{C}($\mathcal{S}$)
is in P if $\mathcal{S}$ is Schaefer, but this was subsequently disproved
by Makino, Tamaki, and Yamamoto \citep{Makino:2007:BCP:1768142.1768162},
who showed that \noun{Conn}\textsubscript{C}($\mathcal{S}$) is coNP-complete
for $\mathcal{S}=\{x\vee\overline{y}\vee\overline{z}\}$, which is
Horn and thus Schaefer. Consequently, Gopalan et al.~conjectured
that \noun{Conn}\textsubscript{C}($\mathcal{S}$) is coNP-complete
if $\mathcal{S}$ is Horn but not componentwise IHSB$-$, or dual
Horn but not componentwise IHSB$+$, and already suggested a way for
proving that: One had to show that \noun{Conn}$_{C}$($\{M\}$) for
the relation $M=\left(x\vee\overline{y}\vee\overline{z}\right)\wedge\left(\overline{x}\vee z\right)$
is coNP-hard. We will prove this in \prettyref{lem:m har} by a reduction
from the complement of a satisfiability problem.

Gopalan et al.~stated (without giving the proof) they could show
that $M$ is structurally expressible from every set of Horn relations
which contains at least one relation that is not componentwise IHSB$-$,
using a similar reasoning as in the proof of their structural expressibility
theorem. We give a different proof (which may be somewhat simpler)
in \prettyref{lem:exp m}, that shows that $M$ actually is expressible
from every set $\mathcal{S}$ of Horn relations that contains at least
one relation that is not \textbf{safely} componentwise IHSB$-$ as
a CNF\textsubscript{C}($\mathcal{S}$)-formula, which is of course
a structural expression.

In this section, when we refer to results from \citep{gop} that were
corrected or extended in last section, we allude to the modified versions.
\begin{thm}[Trichotomy theorem for \noun{Conn}\textsubscript{C}($\mathcal{S}$)]
Let $\mathcal{S}$ be a finite set of logical relations.
\begin{enumerate}
\item If $\mathcal{S}$ is CPSS,\emph{ }\noun{Conn}\textsubscript{\noun{C}}\noun{($\mathcal{S}$)}
is in \noun{P}.
\item Else, if $\mathcal{S}$ is safely tight, \noun{Conn}\textsubscript{\noun{C}}\noun{($\mathcal{S}$)}
is \noun{coNP}-complete.
\item Else,\emph{ }\noun{Conn}\textsubscript{\noun{C}}\noun{($\mathcal{S}$)}
is \noun{PSPACE}-complete.
\end{enumerate}
\end{thm}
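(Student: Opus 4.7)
The plan is to establish the trichotomy by decomposing the theorem along its three cases and invoking, in most subcases, results already in place from \prettyref{sec:imp}. Part~(3) is immediate from the modified Theorem~2.8: if $\mathcal{S}$ is not safely tight then every relation is structurally expressible from $\mathcal{S}$, so the modified Corollary~3.3 lifts the PSPACE-hardness reduction for \noun{Conn}\textsubscript{C} over 3-clauses. Part~(1) splits, via \prettyref{def:cpss}, into four subclasses: bijunctive and affine sets are handled by the polynomial-time algorithms of Gopalan et al.\ (in both cases the solution graph decomposes into bijunctive components, for which connectivity is in P); the Horn safely componentwise IHSB$-$ case is exactly the modified Lemma~4.13, and the dual Horn safely componentwise IHSB$+$ case follows by the $0\leftrightarrow 1$ duality. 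For part~(2) the coNP upper bound is again the modified Theorem~2.8; only the hard direction remains.

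For the coNP-hardness in part~(2) I would split on whether $\mathcal{S}$ is Schaefer. If $\mathcal{S}$ is safely tight but not Schaefer, the modified Lemma~4.8 already delivers coNP-hardness. Otherwise $\mathcal{S}$ is Schaefer but not CPSS; since bijunctive and affine sets are automatically CPSS, $\mathcal{S}$ must then be Horn (and not safely componentwise IHSB$-$) or dual Horn (and not safely componentwise IHSB$+$), and the duality reduces us to the Horn subcase. Following the route suggested by Gopalan et al., I would proceed in two steps. First (\prettyref{lem:m har}), show that \noun{Conn}\textsubscript{C}$(\{M\})$ is coNP-hard for the single relation $M=(x\vee\overline y\vee\overline z)\wedge(\overline x\vee z)$. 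Second (\prettyref{lem:exp m}), show that $M$ is expressible as a CNF\textsubscript{C}$(\mathcal{S})$-formula from every Horn $\mathcal{S}$ containing at least one relation that is not safely componentwise IHSB$-$. Composing the two via the modified Corollary~3.3 yields a polynomial-time reduction \noun{Conn}\textsubscript{C}$(\{M\})\le_p\noun{Conn}\textsubscript{C}(\mathcal{S})$ and closes part~(2).

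For \prettyref{lem:m har} my plan is a reduction from the complement of 3-SAT: given a 3-CNF formula $\psi$, construct in polynomial time a CNF\textsubscript{C}$(\{M\})$-formula $\varphi$ and two solutions $\boldsymbol s,\boldsymbol t$ of $\varphi$ such that $\boldsymbol s$ and $\boldsymbol t$ lie in a single component of $G(\varphi)$ iff $\psi$ is unsatisfiable. The implication $x\Rightarrow z$ built into $M$ orients edges, while setting the first argument of $M$ to the constant $0$ yields the negative clause $\overline y\vee\overline z$ that can act as a \emph{blocker} on any candidate path whose midpoint would encode a satisfying assignment of $\psi$; this extends the Makino--Tamaki--Yamamoto gadget for the simpler clause $x\vee\overline y\vee\overline z$. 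For \prettyref{lem:exp m}, I would start with a Horn $R\in\mathcal{S}$ that fails safe componentwise IHSB$-$, apply variable identification to produce a Horn $R'$ that is not componentwise IHSB$-$, and then isolate three tuples in one component of $R'$ witnessing the failure of closure under $x\wedge(y\vee z)$. Substituting constants and identifying further variables so as to collapse all coordinates outside the failure positions should trim $R'$ down to $M$; packaging this sequence as a CNF\textsubscript{C}$(\mathcal{S})$-formula gives the required expression, which is automatically a structural expression in the sense of Definition~3.1. The main obstacle is \prettyref{lem:m har}: the relation $M$ has only five satisfying tuples and its one-copy solution graph is already connected, so all the hardness must be manufactured by how many copies of $M$ are stitched together, and engineering a gadget in which disconnection is equivalent to satisfiability of $\psi$ is the delicate combinatorial step.
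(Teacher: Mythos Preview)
Your high-level decomposition matches the paper's proof exactly: parts (1) and (3) are handled by the modified Lemmas~4.9/4.10/4.13 and the modified Theorem~2.8, and part~(2) splits on Schaefer versus non-Schaefer, with the non-Schaefer subcase dispatched by the modified Lemma~4.8 and the Schaefer subcase by \prettyref{lem:m har} together with \prettyref{lem:exp m}. Two points in your plan deserve correction, though.

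First, routing the composition of \prettyref{lem:m har} and \prettyref{lem:exp m} through the modified Corollary~3.3 is both unnecessary and formally problematic. \prettyref{lem:exp m} gives $M$ as a CNF\textsubscript{C}$(\{R\})$-formula with \emph{no auxiliary variables}, so replacing each $M$-constraint by this expression preserves the solution set and hence the solution graph verbatim; the reduction $\textsc{Conn}_{\mathrm C}(\{M\})\le_p\textsc{Conn}_{\mathrm C}(\mathcal S)$ is then immediate. By contrast, the modified Corollary~3.3 applies only to source formulas without constants and without repeated variables in constraints, and the hard instances produced in \prettyref{lem:m har} use both (e.g.\ $M(0,x,y)$), so invoking 3.3 as stated does not go through.

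Second, your plan for \prettyref{lem:m har} frames the target as ``$\boldsymbol s$ and $\boldsymbol t$ lie in one component iff $\psi$ is unsatisfiable''. That is an $st$-connectivity statement; for \textsc{Conn}\textsubscript{C} you need the stronger ``$G(\varphi)$ is connected iff $\psi$ is unsatisfiable'', and your implication ``$\psi$ unsatisfiable $\Rightarrow\boldsymbol s,\boldsymbol t$ connected'' does not by itself rule out further components. The paper avoids this by reducing not from general 3-SAT but from $\textsc{Sat}(\{x\vee y\vee z,\ \overline x\vee\overline y\})$ and arguing via the structural characterisation of Horn connectivity (\prettyref{lem:loc min} and \prettyref{cor:horn conn}): $G(\varphi)$ is disconnected iff $\varphi$ has a nonempty maximal self-implicating set containing no restraint set. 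The gadget is then engineered so that such a set exists precisely when $\psi$ is satisfiable, which delivers global (dis)connectivity directly rather than via two distinguished solutions. Reducing from full 3-SAT with mixed clauses, as you propose, would force you to encode positive literals inside negative clauses using only Horn relations, adding a layer of difficulty that the monotone source problem sidesteps.
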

\begin{proof}
1. If $\mathcal{S}$ is CPSS, \noun{Conn}\textsubscript{C}($\mathcal{S}$)
is in P by Lemmas 4.9, 4.13 and 4.10 of \citep{gop}, or by our \prettyref{thm:alg}.

2. If $\mathcal{S}$ is Schaefer and not CPSS, it must be Horn and
contain at least one relation that is not safely componentwise IHSB$-$,
or dual Horn and contain at least one relation that is not safely
componentwise IHSB$+$, and \noun{Conn}\textsubscript{C}($\mathcal{S}$)
is coNP-complete by \prettyref{lem:conp} below. If $\mathcal{S}$
is not Schaefer, the statement follows from Lemma 4.8 of \citep{gop}.

3. This follows from Theorem 2.8 of \citep{gop}.\end{proof}
\begin{lem}
\label{lem:conp}Let $\mathcal{S}$ be a finite set of Horn (dual
Horn) relations. If $\mathcal{S}$ contains at least one relation
that is not safely componentwise IHSB$-$ (not safely componentwise
IHSB$+$), \noun{Conn}\textsubscript{\noun{C}}\noun{($\mathcal{S}$)}
is \noun{coNP}-complete.\end{lem}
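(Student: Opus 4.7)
The plan is to establish coNP-membership and coNP-hardness of \noun{Conn}\textsubscript{C}($\mathcal{S}$) separately in the Horn case; the dual Horn case follows by swapping the roles of $0$ and $1$ throughout. Membership is the easy half: by the tightened Lemma~4.2 of \citep{gop}, every Horn relation is safely OR-free, so $\mathcal{S}$ is safely tight and hence \noun{Conn}\textsubscript{C}($\mathcal{S}$) lies in coNP by the corrected Theorem~2.8.

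For hardness I would route everything through the single relation $M = (x \vee \overline{y} \vee \overline{z}) \wedge (\overline{x} \vee z)$, following the strategy announced by Gopalan et al. This splits into two sub-results. First, \noun{Conn}\textsubscript{C}($\{M\}$) is itself coNP-hard: the reduction is from the complement of a satisfiability problem (for instance $3$-\noun{Sat}), building from an input $\psi$ a CNF\textsubscript{C}($\{M\}$)-formula $\phi$ whose solution graph is connected exactly when $\psi$ is unsatisfiable. The implication $\overline{x} \vee z$ inside $M$ can be used to propagate truth values along the clauses of $\psi$, while the non-bijunctive Horn part $x \vee \overline{y} \vee \overline{z}$ supplies the obstruction needed to leave a satisfying assignment isolated in the solution graph whenever $\psi \in \noun{Sat}$.

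Second, $M$ is expressible as a CNF\textsubscript{C}($\mathcal{S}$)-formula on the three variables $x,y,z$ whenever $\mathcal{S}$ contains a Horn relation $R$ that is not safely componentwise IHSB$-$. By the definition of \emph{safely}, some identification of variables yields a Horn $R'$ that is not componentwise IHSB$-$; by Lemma~\ref{lem:cl}, $R'$ is closed under $\wedge$ but some connected component $C \subseteq R'$ fails to be closed under $x \wedge (y \vee z)$, so there exist $\boldsymbol{a},\boldsymbol{b},\boldsymbol{c} \in C$ and a coordinate where $\boldsymbol{a}\wedge(\boldsymbol{b}\vee\boldsymbol{c})$ lies outside $R'$. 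The plan is then to inspect the columns of the $3 \times n$ matrix whose rows are $\boldsymbol{a},\boldsymbol{b},\boldsymbol{c}$: by grouping equal columns (further identification of variables) and fixing all columns outside a small informative set to appropriate values (substitution of constants), one should be able to carve out a ternary sub-relation which, together with the closure under $\wedge$ inherited from the Horn property of $R'$, coincides precisely with the five tuples of $M$. This last step is the main obstacle I expect, since the case analysis must rule out unintended extra tuples surviving the substitutions, and it must work uniformly for every possible failure pattern of $x\wedge(y\vee z)$.

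With $M$ available as a CNF\textsubscript{C}($\mathcal{S}$)-formula on its own variables, the reduction from \noun{Conn}\textsubscript{C}($\{M\}$) to \noun{Conn}\textsubscript{C}($\mathcal{S}$) is immediate: in any input CNF\textsubscript{C}($\{M\}$)-formula, replace each constraint $M(\xi_{1},\xi_{2},\xi_{3})$ by the expression of $M$ with $\xi_{1},\xi_{2},\xi_{3}$ substituted for $x,y,z$. Since no auxiliary variables are introduced, the solution set, and hence the whole solution graph, is preserved verbatim, so no appeal to structural expressibility — which was the source of the trouble discussed in \prettyref{sec:imp} — is needed here.
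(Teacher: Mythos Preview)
Your overall architecture is exactly the paper's: membership via Horn $\Rightarrow$ safely OR-free $\Rightarrow$ safely tight $\Rightarrow$ Theorem~2.8, and hardness by (i) showing \noun{Conn}\textsubscript{C}($\{M\}$) is coNP-hard and (ii) expressing $M$ as a CNF\textsubscript{C}($\{R\}$)-formula on $x,y,z$ alone from any Horn $R$ that is not safely componentwise IHSB$-$. Your observation that step~(ii) avoids auxiliary variables, hence sidesteps structural expressibility entirely, is precisely the point the paper relies on. For step~(i) the paper reduces from \noun{Sat}($\{x\vee y\vee z,\ \overline{x}\vee\overline{y}\}$) rather than $3$-\noun{Sat}, and the construction is considerably more elaborate than your sketch suggests (the ``isolated solution'' picture does not survive: one instead builds a large cycle of implications through auxiliary variables and uses \prettyref{cor:horn conn}), but the reduction direction and target are the same.

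Where your plan genuinely diverges is in step~(ii). You propose a semantic route: pick witnesses $\boldsymbol{a},\boldsymbol{b},\boldsymbol{c}$ in a non-IHSB$-$ component $C$, group equal columns, and substitute constants to carve out $M$. Two cautions. First, the failure of closure only gives $\boldsymbol{a}\wedge(\boldsymbol{b}\vee\boldsymbol{c})\notin C$, not $\notin R'$; the vector may well lie in another component of $R'$, and neither identification of variables nor substitution of constants respects the component decomposition, so the ``missing tuple'' can reappear after your reduction. Second, even when the reduction succeeds it will typically not yield $M$ on the nose but one of the three ternary Horn relations with a multi-implication clause and an unimplied negated variable --- the paper calls them $K=x\vee\overline{y}\vee\overline{z}$, $L=(x\vee\overline{y}\vee\overline{z})\wedge(\overline{x}\vee\overline{y}\vee z)$, and $M$ --- and one then needs the extra identity $M(x,y,z)\equiv K(x,y,z)\wedge K(z,x,x)\equiv L(x,y,z)\wedge L(z,x,x)$. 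The paper avoids the first difficulty altogether by working syntactically: it puts a Horn formula for $R'$ into a carefully designed normal form $\nu$ and then runs a seven-step procedure of identifications and constant substitutions, tracking implication sets and self-implicating sets throughout, to isolate a multi-implication clause with an unimplied variable and strip everything else away. Your column-matrix approach may be salvageable, but the component-versus-relation issue is a real gap you would have to close, and the obstacle you flag is exactly the one the paper's machinery is built to overcome.
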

\begin{proof}
For sets of Horn relations that contain at least one relation that
is not safely componentwise IHSB$-$, the coNP-hardness follows from
Lemmas \ref{lem:m har} and \ref{lem:exp m} below. The case of sets
of dual Horn relations that contain at least one relation that is
not safely componentwise IHSB$+$ is analogous. Theorem 2.8 of \citep{gop}
shows that \noun{Conn}\textsubscript{C}($\mathcal{S}$) is in coNP.
\end{proof}

\subsection{Connectivity of Horn Formulas}

In this subsection, we introduce terminology and develop tools we
will need for the proofs of \prettyref{lem:m har} and \prettyref{lem:exp m}
in the following two subsections.
\begin{defn}
Clauses with only one literal are called \emph{unit clauses} (\emph{positive}
if the literal is positive, \emph{negative} otherwise). Clauses with
only negative literals are \emph{restraint clauses}, and the sets
of variables occurring in restraint clauses are \emph{restraint sets}.
Clauses having one positive and one or more negative literals are
\emph{implication clauses}. Implication clauses with two or more negative
literals are \emph{multi-implication clauses.}

A variable\emph{ $x$ is implied by a set of variables $U$,} if setting
all variables from $U$ to 1 forces $x$ to be 1 in any satisfying
assignment. We write Imp($U$) for the set of variables implied by
$U$, we abbreviate Imp($\{x\}$) as Imp($x$). We simply say that
$x$ \emph{is} \emph{implied,} if $x\in\mathrm{Imp}(U\setminus\{x\})$
for some $U$. Note that $U\subseteq\mathrm{Imp}(U)$ for all sets
$U$.

$U$ is \emph{self-implicating} if every $x\in U$ is implied by $U\setminus\{x\}$.
$U$ is \emph{maximal self-implicating}, if further $U=\mathrm{Imp}(U)$.\end{defn}
\begin{rem}
\label{hyp}A Horn formula can be represented by a directed hypergraph
with hyperedges of head-size one as follows: For every variable, there
is a node, for every implication clause $y\vee\overline{x}_{1}\vee\cdots\vee\overline{x}_{k}$,
there is a directed hyperedge from $x_{1},\ldots,x_{k}$ to $y$,
for every restraint clause $\overline{x}_{1}\vee\cdots\vee\overline{x}_{k}$,
there is a directed hyperedge from $x_{1},\ldots,x_{k}$ to a special
node labeled ``false'', and for every positive unit clause $x$,
there is a directed hyperedge from a special node labeled ``true''
to $x$. For simplicity, we omit the ``false'' and ``true'' nodes
in the drawings and let the corresponding hyperedges end, resp. begin,
in the void.

We draw the directed hyperedges as joining lines, e.g., $x\vee\overline{y}\vee\overline{z}=$
$\vcenter{\hbox{\includegraphics[scale=0.33]{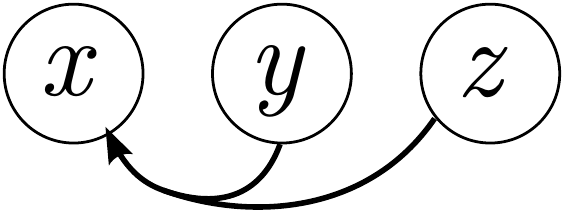}}}$.
\end{rem}
Further, we will use some terminology from \citep{gop}:
\begin{defn}
\label{def:lm}The \emph{coordinate-wise partial order} $\leq$ on
Boolean vectors is defined as follows: $\boldsymbol{a}\leq\boldsymbol{b}$
iff $a_{i}\leq b_{i}$ $\forall i$. A \emph{monotone path} between
two solutions $\boldsymbol{a}$ and $\boldsymbol{b}$ is a path $\boldsymbol{a}\rightarrow\boldsymbol{u}^{1}\rightarrow\cdots\rightarrow\boldsymbol{u}^{r}\rightarrow\boldsymbol{b}$
in the solution graph such that $\boldsymbol{a}\leq\boldsymbol{u}^{1}\leq\cdots\leq\boldsymbol{u}^{r}\leq\boldsymbol{b}$.
A solution is \emph{locally minimal} iff it has no neighboring solution
that is smaller than it.\end{defn}
\begin{lem}
\label{lem:loc min}The solution graph of a Horn formula $\phi$ without
positive unit clauses is disconnected iff $\phi$ has a locally minimal
nonzero solution.\end{lem}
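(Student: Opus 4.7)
The plan is to reduce everything to the ``unique locally minimal solution per component'' property for safely OR-free relations, i.e.\ the corrected Lemma 4.5 of \citep{gop}. First I would observe that the all-zero assignment $\boldsymbol{0}$ is always a solution of a Horn formula without positive unit clauses: every remaining clause is either a restraint clause (all negative literals) or an implication clause (at least one negative literal), and in both cases setting every variable to $0$ makes at least one literal of each clause true. Moreover, $\boldsymbol{0}$ is vacuously locally minimal, since no Boolean vector is strictly smaller.

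Next I would invoke the structural fact about Horn clauses: by Lemma~4.2 (as corrected in \prettyref{sec:imp}), every Horn relation is safely OR-free, so a Horn formula $\phi$ is a CNF$_{C}(\mathcal{S})$-formula for a safely OR-free $\mathcal{S}$. The corrected Lemma~4.5 then applies and gives that \emph{every} component of $G(\phi)$ contains exactly one locally minimal solution.

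With this in hand both directions follow immediately. For ($\Leftarrow$), if $\phi$ has a locally minimal nonzero solution $\boldsymbol{a}$, then $\boldsymbol{a}$ and $\boldsymbol{0}$ are two distinct locally minimal solutions; by the uniqueness-per-component property they must lie in different components, so $G(\phi)$ is disconnected. For ($\Rightarrow$), if $G(\phi)$ is disconnected then there is at least one component not containing $\boldsymbol{0}$; the locally minimal solution guaranteed in that component by Lemma~4.5 is necessarily nonzero (since $\boldsymbol{0}$ lies in a different component).

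Since both implications collapse to citations of previously established results, there is no real obstacle to overcome; the only thing to be careful about is to use the \emph{corrected} (``safely OR-free'') versions of Lemmas~4.2 and~4.5 from \prettyref{sec:imp} rather than the originals, and to explicitly note that the absence of positive unit clauses is exactly what is needed to place $\boldsymbol{0}$ in the solution set.
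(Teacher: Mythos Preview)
Your proposal is correct and follows exactly the paper's approach: it cites the corrected Lemma~4.2 (Horn $\Rightarrow$ safely OR-free) together with the corrected Lemma~4.5 (unique locally minimal solution per component), and uses the observation that $\boldsymbol{0}$ is a solution of any Horn formula without positive unit clauses. The paper's proof is a one-sentence version of precisely this argument; your write-up just makes both implications explicit.
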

\begin{proof}
This follows from Lemma 4.5 of \citep{gop} since the all-zero vector
is a solution of every Horn formula without positive unit clauses,
and Horn formulas are safely OR-free by Lemma 4.2 of \citep{gop}.\end{proof}
\begin{lem}
\label{lem:horn conn}For every Horn formula $\phi$ without positive
unit clauses, there is a bijection correlating each connected component
$\phi_{i}$ with a maximal self-implicating set $U_{i}$ containing
no restraint set; $U_{i}$ consists of the variables assigned 1 in
the minimum solution of $\phi_{i}$ (the ``lowest'' component is
correlated with the empty set).\end{lem}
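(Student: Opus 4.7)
The plan is to leverage Lemma 4.2 of \citep{gop} (Horn is safely OR-free) together with the corrected Lemma 4.5 (each component of a safely OR-free formula has a unique locally minimal solution, reached from any solution by a monotone descending path, hence the coordinate-wise minimum of that component). Given a component $\phi_{i}$, let $\boldsymbol{m}_{i}$ denote its minimum solution and set $U_{i}:=\{x:(\boldsymbol{m}_{i})_{x}=1\}$; the task is then to show that $U_{i}$ is maximal self-implicating, contains no restraint set, and that every such $U$ arises uniquely in this way.

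To verify $U_{i}$ has the claimed properties, I would argue as follows. For $U_{i}=\mathrm{Imp}(U_{i})$: any implication clause $y\vee\overline{x}_{1}\vee\cdots\vee\overline{x}_{k}$ whose negated variables all lie in $U_{i}$ forces $y\in U_{i}$, because $\boldsymbol{m}_{i}$ satisfies it. For self-implication: if some $x\in U_{i}$ were not implied by $U_{i}\setminus\{x\}$, then flipping $x$ to $0$ would break no implication clause; restraint clauses can only be helped by lowering, and positive unit clauses are excluded by hypothesis, so $\boldsymbol{m}_{i}$ would have a strictly smaller neighbour, contradicting local minimality. For the absence of a restraint set: any restraint clause $\overline{x}_{1}\vee\cdots\vee\overline{x}_{k}$ with $\{x_{1},\ldots,x_{k}\}\subseteq U_{i}$ would be violated by $\boldsymbol{m}_{i}$.

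For surjectivity, given a maximal self-implicating $U$ with no restraint set, I would define $\boldsymbol{a}^{U}$ as the assignment with ones exactly on $U$. This is a solution: implication clauses are satisfied by closure $U=\mathrm{Imp}(U)$; restraint clauses (including negative unit clauses, whose singleton is itself a restraint set) are satisfied by the no-restraint-set condition; and positive unit clauses are absent by hypothesis. It is locally minimal, since flipping any $x\in U$ to $0$ breaks the implication clause witnessing $x\in\mathrm{Imp}(U\setminus\{x\})$. Thus $\boldsymbol{a}^{U}$ is the minimum of a uniquely determined component whose associated set is exactly $U$. Injectivity is automatic from distinct components having distinct minima, and the empty set correctly corresponds to the ``lowest'' component containing the all-zero solution, which is always a solution in the absence of positive unit clauses.

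The main subtlety is not conceptual but lies in the first step, where one must invoke that the unique locally minimal solution of each component is in fact its coordinate-wise minimum; this is established inside the proof of Lemma 4.5 of \citep{gop} via the monotone descending paths available for safely OR-free formulas, which Horn formulas inherit through Lemma 4.2. Once that is in hand, the remaining verifications are direct clause-by-clause bookkeeping on Horn clauses of each of the four possible shapes (negative unit, multi-restraint, implication, and multi-implication).
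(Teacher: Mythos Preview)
Your proposal is correct and follows essentially the same approach as the paper's proof: both directions of the bijection are established by the same mechanism (local minimality of the component minimum forces self-implication via a witnessing implication clause for each variable in $U_{i}$, while the solution property gives maximality and the absence of a restraint set; conversely, the characteristic vector of a maximal self-implicating set without a restraint set is checked clause-by-clause to be a locally minimal solution, and Lemma~4.5 of \citep{gop} supplies uniqueness). Your write-up is slightly more explicit about injectivity, the empty-set case, and the separation of $U_{i}=\mathrm{Imp}(U_{i})$ from self-implication, but the underlying argument is the same.
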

\begin{proof}
Let $\phi_{i}$ be a connected component of $\phi$ with minimum solution
$\boldsymbol{s}$, and let $U$ be the set of variables assigned 1
in $\boldsymbol{s}$. Since $\boldsymbol{s}$ is locally minimal,
flipping any variable $x_{i}$ from $U$ to 0 results in a vector
that is no solution, so there must be a clause in $\phi$ prohibiting
that $x_{i}$ is flipped. Since $\phi$ contains no positive unit-clauses,
each $x_{i}\in U$ must appear as the positive literal in an implication
clause with also all negated variables from $U$. It follows that
$U$ is self-implicating. Also, $U$ must be maximal self-implicating
and can contain no restraint set, else $\boldsymbol{s}$ were no solution.

Conversely, let $U$ be a maximal self-implicating set containing
no restraint set. Then the vector $\boldsymbol{s}$ with all variables
from $U$ assigned 1, and all others 0, is a locally minimal solution:
All implication clauses $\overline{y}_{1}\vee\cdots\vee\overline{y}_{k}\vee x$
with some $y_{i}\notin U$ are satisfied since $y_{i}=0$, and for
the ones with all $y_{i}\in U$, also $x\in U$ holds because $U$
is maximal, so these are satisfied since $x=1$. All restraint clauses
are satisfied since $U$ contains no restraint set. $\boldsymbol{s}$
is locally minimal since every variable assigned 1 is implied by $U$,
so that any vector with one such variable flipped to 0 is no solution.
By Lemma 4.5 of \citep{gop}, every connected component has a unique
locally minimal solution, so $\boldsymbol{s}$ is the minimum solution
of some component.\end{proof}
\begin{cor}
\label{cor:horn conn}The solution graph of a Horn formula $\phi$
without positive unit clauses is disconnected iff $\phi$ has a non-empty
maximal self-implicating set containing no restraint set.
\end{cor}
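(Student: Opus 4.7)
The plan is to derive this corollary directly from Lemma~\ref{lem:horn conn} together with the simple observation that the empty set is always vacuously a maximal self-implicating set containing no restraint set, since $\mathrm{Imp}(\emptyset) = \emptyset$ under the assumption that $\phi$ has no positive unit clauses (positive unit clauses are exactly what would force variables to be 1 starting from the empty set, and by hypothesis there are none). This empty set corresponds, under the bijection of Lemma~\ref{lem:horn conn}, to the connected component whose minimum solution is the all-zero vector.

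For the forward direction, I would argue that if $G(\phi)$ is disconnected, then there are at least two connected components. By the bijection of Lemma~\ref{lem:horn conn}, these components correspond to distinct maximal self-implicating sets containing no restraint set. Since one of these is the empty set (the all-zero component), at least one other such set must be non-empty, giving the desired witness.

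For the converse, suppose $\phi$ has a non-empty maximal self-implicating set $U$ containing no restraint set. Then by Lemma~\ref{lem:horn conn}, $U$ is in bijective correspondence with some component whose minimum solution assigns $1$ to every variable in $U$, hence is nonzero. But the all-zero vector is itself a solution of $\phi$ (since $\phi$ is Horn with no positive unit clauses), and it sits in a different component (the one corresponding to $\emptyset$). Therefore $G(\phi)$ has at least two components and is disconnected.

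There is no real obstacle here; the corollary is essentially a rephrasing of the bijection in Lemma~\ref{lem:horn conn}, with the only substantive point being that the all-zero vector is always present as a solution and corresponds to the empty self-implicating set. One could alternatively route the argument through Lemma~\ref{lem:loc min}, observing that a non-empty maximal self-implicating set containing no restraint set is exactly the support of a locally minimal nonzero solution, but the direct route via Lemma~\ref{lem:horn conn} is cleaner since that lemma already establishes the correspondence in both directions.
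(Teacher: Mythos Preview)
Your proposal is correct and follows exactly the route the paper intends: the corollary is stated without proof immediately after Lemma~\ref{lem:horn conn}, and your argument---that the empty set always accounts for the all-zero component under the bijection, so disconnectedness is equivalent to the existence of at least one further (hence non-empty) maximal self-implicating set containing no restraint set---is precisely the intended deduction.
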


\begin{lem}
\label{lem:in}Let $R_{1}$ and $R_{2}$ be two connected components
of a Horn relation $R$ with minimum solutions $\boldsymbol{u}$ and
$\boldsymbol{v}$, resp., and let $U$ and $V$ be the sets of variables
assigned 1 in $\boldsymbol{u}$ and $\boldsymbol{v}$, resp. If then
$U\subsetneq V$, no vector $\boldsymbol{a}\in R_{1}$ has all variables
from $V$ assigned 1.\end{lem}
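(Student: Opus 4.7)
The plan is to derive a contradiction by exploiting the closure of Horn relations under coordinate-wise $\wedge$ (\prettyref{lem:cl}): from a hypothetical vector $\boldsymbol{a}\in R_1$ whose $V$-coordinates are all~$1$, I construct a walk in $G(R)$ linking $\boldsymbol{u}$ and $\boldsymbol{v}$, which contradicts the assumption that $R_1$ and $R_2$ are distinct components.

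Concretely, suppose for contradiction some $\boldsymbol{a}\in R_1$ satisfies $a_i=1$ for every $i\in V$. Since $\boldsymbol{u}$ and $\boldsymbol{a}$ both lie in $R_1$, I fix a path $\boldsymbol{u}=\boldsymbol{w}^0\to\boldsymbol{w}^1\to\cdots\to\boldsymbol{w}^r=\boldsymbol{a}$ in $G(R_1)$ and consider the transported sequence $\boldsymbol{w}^0\wedge\boldsymbol{v},\,\boldsymbol{w}^1\wedge\boldsymbol{v},\,\ldots,\,\boldsymbol{w}^r\wedge\boldsymbol{v}$. Each term lies in $R$ by closure under $\wedge$. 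The hypothesis $U\subseteq V$ gives $\boldsymbol{w}^0\wedge\boldsymbol{v}=\boldsymbol{u}\wedge\boldsymbol{v}=\boldsymbol{u}$, while $a_i=1$ for $i\in V$ gives $\boldsymbol{w}^r\wedge\boldsymbol{v}=\boldsymbol{a}\wedge\boldsymbol{v}=\boldsymbol{v}$.

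The remaining point to verify is that consecutive terms of the transported sequence are either equal or Hamming-adjacent. A short case analysis on the single coordinate $x$ flipped between $\boldsymbol{w}^i$ and $\boldsymbol{w}^{i+1}$ does the job: if $x\notin V$ then $v_x=0$ so $(\boldsymbol{w}^i\wedge\boldsymbol{v})_x=(\boldsymbol{w}^{i+1}\wedge\boldsymbol{v})_x=0$ and the two vectors agree everywhere; if $x\in V$ then $v_x=1$ and the two vectors agree outside $x$ but differ at $x$. After collapsing consecutive duplicates, the sequence therefore becomes a walk in $G(R)$ from $\boldsymbol{u}$ to $\boldsymbol{v}$, placing them in the same connected component and contradicting $R_1\neq R_2$.

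I do not anticipate a serious obstacle: this is the standard ``meet-transport'' trick for relations closed under a binary operation that is compatible with Hamming neighbors, and the only verification requiring care is the single-coordinate step above, which is immediate from the pointwise definition of $\wedge$.
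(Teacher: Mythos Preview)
Your proposal is correct and follows essentially the same argument as the paper: assume such an $\boldsymbol{a}$ exists, take a path from $\boldsymbol{u}$ to $\boldsymbol{a}$ in $R_1$, apply coordinate-wise $\wedge$ with $\boldsymbol{v}$ (using closure of Horn relations under $\wedge$), observe that the endpoints become $\boldsymbol{u}$ and $\boldsymbol{v}$, and collapse repetitions to obtain a path in $G(R)$ connecting them. Your explicit case analysis on whether the flipped coordinate lies in $V$ is slightly more detailed than the paper's one-line ``removing repetitions,'' but the idea is identical.
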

\begin{proof}
For the sake of contradiction, assume $\boldsymbol{a}\in R_{1}$ has
all variables from $V$ assigned 1. Then $\boldsymbol{a\wedge v}=\boldsymbol{v}$,
where $\boldsymbol{\wedge}$ is applied coordinate-wise. Consider
a path from $\boldsymbol{u}$ to $\boldsymbol{a}$, $\boldsymbol{u}\rightarrow\boldsymbol{w}^{1}\rightarrow\cdots\rightarrow\boldsymbol{w}^{k}\rightarrow\boldsymbol{a}$.
Since $U\subsetneq V$, we have $\boldsymbol{u\wedge v}=\boldsymbol{u}$,
so we can construct a path from $\boldsymbol{u}$ to $\boldsymbol{v}$
by replacing each $\boldsymbol{w}^{i}$ by $\boldsymbol{w}^{i}\boldsymbol{\wedge}\boldsymbol{v}$
in the above path, and removing repetitions. Since $R$ is Horn, it
is closed under $\wedge$ (see \prettyref{lem:cl}), so all vectors
of the constructed path are in $R$. But $\boldsymbol{u}$ and $\boldsymbol{v}$
are not connected in $R$, which is a contradiction.
\end{proof}

\subsection{Reduction from Satisfiability}
\begin{lem}
\noun{\label{lem:m har}Conn}\textsubscript{\noun{C}}\noun{(}\textup{$\left\{ \left(x\vee\overline{y}\vee\overline{z}\right)\wedge\left(\overline{x}\vee z\right)\right\} $}\noun{)}
is \noun{coNP}-hard.\end{lem}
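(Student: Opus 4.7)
The plan is to prove coNP-hardness by a polynomial-time reduction from 3-SAT (or another NP-complete satisfiability problem) to the complement of \noun{Conn}$_{C}(\{M\})$: given a 3-CNF formula $\psi$, I would construct a CNF$_{C}(\{M\})$-formula $\phi$, free of positive unit clauses, such that $\psi$ is satisfiable iff $G(\phi)$ is disconnected. Since $M$ is Horn, any such $\phi$ is Horn, so \prettyref{cor:horn conn} reduces disconnectedness of $G(\phi)$ to the purely combinatorial question of whether $\phi$ admits a non-empty maximal self-implicating set of variables that contains no restraint set. The whole reduction therefore amounts to designing $\phi$ so that these sets correspond exactly to satisfying assignments of $\psi$.

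I would first catalogue the clause types $M$ makes available once constants and variable identifications are permitted: the negative unit $\bar{x}=M(x,y,0)$, the $2$-restraint $\bar{y}\vee\bar{z}=M(0,y,z)$, the implications $x\to y=M(x,0,y)$ and $y\to x=M(x,y,1)$, the equivalence $x\leftrightarrow y=M(x,y,y)$, and, on three distinct variables, the full relation $M(x,y,z)=(y\wedge z\to x)\wedge(x\to z)$, which supplies a genuine $3$-literal Horn clause bundled with a $2$-implication (and we simply avoid the form $M(1,\cdot,\cdot)$, which would produce a positive unit clause). For each variable $v_{i}$ of $\psi$ I would introduce a positive copy $p_{i}$ and a negative copy $n_{i}$ with the consistency $2$-restraint $\bar{p}_{i}\vee\bar{n}_{i}$, and, for each clause $C_{j}=l_{j,1}\vee l_{j,2}\vee l_{j,3}$, a clause-indicator $c_{j}$ together with the implications $l^{*}_{j,k}\to c_{j}$, where $l^{*}_{j,k}$ is $p_{i}$ or $n_{i}$ according to the polarity of $l_{j,k}$. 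In any self-implicating $U$ this forces $c_{j}\in U$ iff at least one literal of $C_{j}$ lies in $U$. Chaining all $c_{j}$ by equivalences $c_{1}\leftrightarrow\cdots\leftrightarrow c_{m}$ and adding $3$-Horn gadgets that couple each literal copy to the indicator layer, the intended invariant is that every non-empty maximal self-implicating set avoiding every restraint set encodes a complete satisfying assignment of $\psi$, and conversely every satisfying assignment lifts to such a set.

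The main obstacle is making this correspondence bidirectionally tight. Since a disjunction of positive literals is not Horn, one cannot directly enforce ``each $v_{i}$ is decided'', and na\"{\i}ve gadgets admit parasitic self-implicating sets --- for instance a single equivalence cycle $\{p_{i},p'_{i}\}$ that encodes nothing about $\psi$. The $3$-Horn clauses obtained from $M(x,y,z)$ are essential here, as they are the only mechanism by which a literal copy can be made to depend on two predecessors simultaneously (the indicator layer together with an auxiliary activator), so that a literal can enter $U$ only in the presence of the clause layer. Once the gadgets are in place, correctness splits into (i) lifting any satisfying assignment of $\psi$ to a non-empty maximal self-implicating set avoiding all restraint sets, and (ii) projecting any such set to a consistent, $\psi$-satisfying assignment; I expect ruling out all parasitic $U$'s in direction (ii) to be the hardest step, requiring a careful case analysis of the gadgets together with \prettyref{lem:in} to control how the Horn components interact.
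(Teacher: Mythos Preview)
Your high-level plan matches the paper's: reduce an NP-hard satisfiability problem to the complement of \noun{Conn}$_C(\{M\})$ via \prettyref{cor:horn conn}, so that satisfying assignments correspond to non-empty maximal self-implicating sets avoiding all restraint sets; your catalogue of clauses derivable from $M$ is also correct. But two things separate your outline from a proof, and both are resolved in the paper by choices you have not made.

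First, the paper reduces from \textsc{Sat}$(\{x\vee y\vee z,\ \bar x\vee\bar y\})$ rather than full 3-SAT. Because the 3-clauses contain only positive literals, each variable $x_i$ of $\psi$ appears verbatim in $\phi$ and the negative 2-clauses of $\psi$ become restraint clauses of $\phi$ unchanged; there is no need for paired copies $p_i,n_i$ and no need to force ``each $v_i$ is decided'', which you correctly note cannot be expressed directly in Horn. Second, the parasitic-set problem you flag is not solved by the construction you sketch. Chaining the clause-indicators by equivalences $c_j\leftrightarrow c_{j+1}$ already makes $\{c_1,\ldots,c_m\}$ self-implicating on its own, and since it meets no restraint set, its closure is an admissible $U$ that says nothing about $\psi$. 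Your unspecified 3-Horn gadgets are pointed the wrong way: making literals depend on the indicator layer does not stop the indicator layer from sustaining itself. What is needed is the reverse---the indicator cycle must be unable to close without passing through a literal of every clause. The paper's gadget does exactly this: for each positive clause $c_p$ it introduces $q_p$ and, for each literal $x_l$ of $c_p$, a ``valve'' $M(b_{pl},a_{pl},x_l)=(b_{pl}\vee\bar a_{pl}\vee\bar x_l)\wedge(\bar b_{pl}\vee x_l)$ together with $q_p\to a_{pl}$ and $b_{pl}\to q_{(p+1)\bmod m}$. The only way any $q_p$ is self-implicated is through the full cycle $q_1\to\cdots\to q_m\to q_1$, and each hop $q_p\to q_{p+1}$ opens only when some $x_l$ from $c_p$ lies in $U$; the $N$-restraints then rule out inconsistent choices. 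This circulatory-with-valves structure is the missing idea in your proposal.
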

\begin{proof}
\begin{figure}[!h]
\begin{centering}
\includegraphics[scale=0.39]{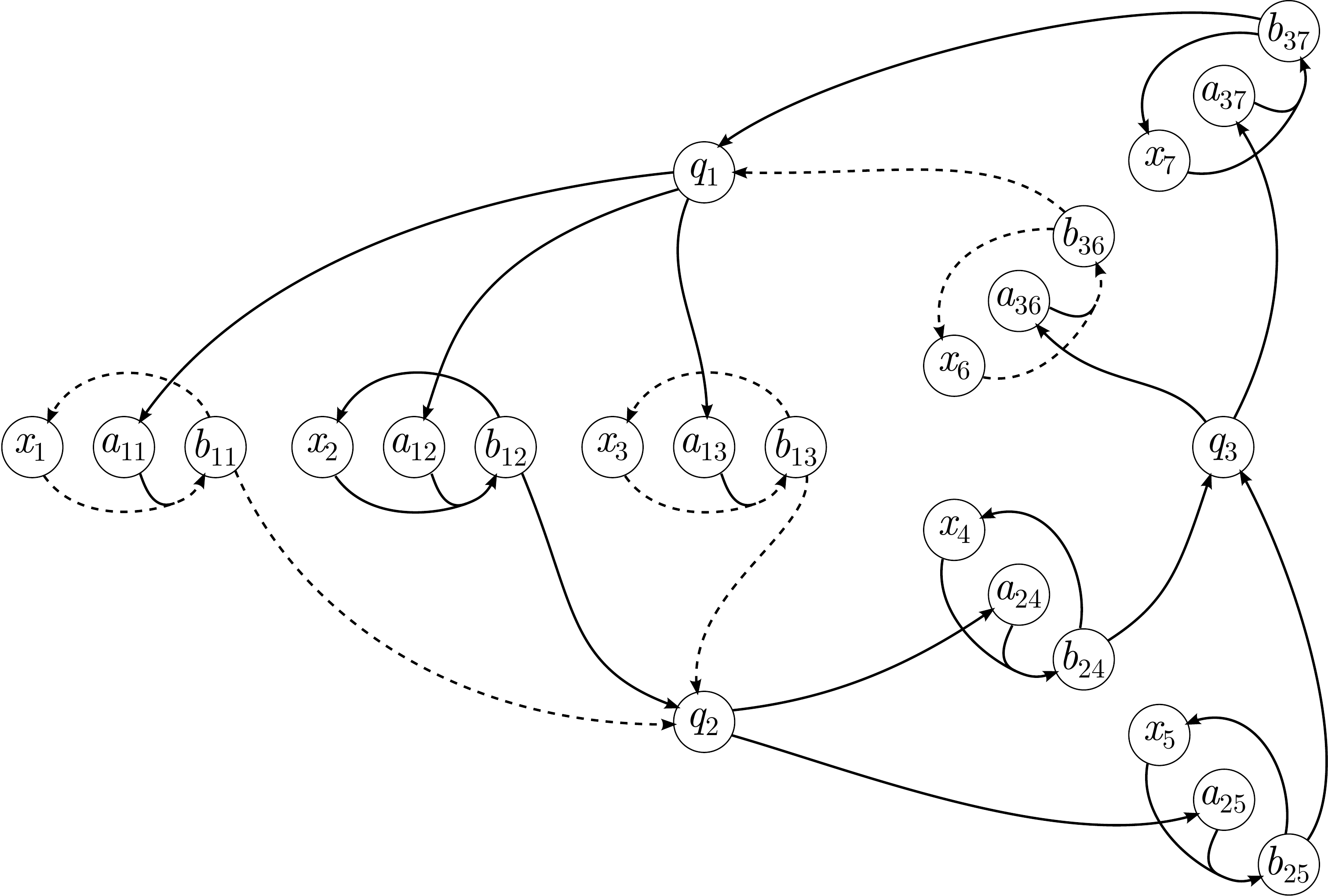}
\par\end{centering}

\protect\caption{\label{fig:cir}\setlength{\parindent}{1em}An example for the proof
of \prettyref{lem:m har}, illustrating the idea. Depicted here is
the hypergraph representation (see Remark \ref{hyp}) of $\phi$ for
$\psi=\left(x_{1}\vee x_{2}\vee x_{3}\right)\wedge\left(x_{4}\vee x_{5}\right)\wedge\left(x_{6}\vee x_{7}\right)$,
as constructed in the proof.\protect \\
\hspace*{4ex}Any self-implicating set of $\phi$ must contain a ``large
circulatory'', passing through each $q_{p}$ and at least one gadget
$\vcenter{\hbox{\protect\includegraphics[scale=0.36]{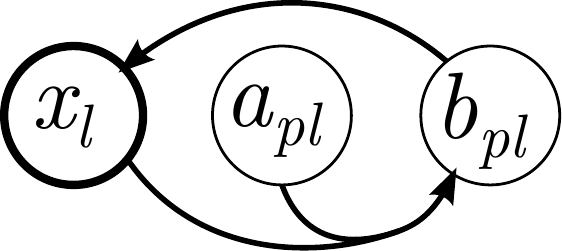}}}$ for
each $p$; these gadgets act as ``valves'': If some $x_{i}$ is
not allowed to be assigned 1 (due to restraint clauses), the circulatory
may not pass through any gadget containing $x_{i}$.\protect \\
\hspace*{4ex}Every maximal self-implicating set also contains all
$a_{pl}$; here, for example, one maximal self-implicating set consist
of the variables with the outgoing edges drawn solid.\protect \\
\hspace*{4ex}If we would add restraint clauses to $\psi$ s.t. $\psi$
would become unsatisfiable, e.g. $\overline{x_{4}}\vee\overline{x_{6}}$,
$\overline{x_{4}}\vee\overline{x_{7}}$, $\overline{x_{5}}\vee\overline{x_{6}}$,
and $\overline{x_{5}}\vee\overline{x_{7}}$, each maximal self-implicating
set of the corresponding $\phi$ would contain a restraint set, so
that $G(\phi)$ would be connected.}
\end{figure}

\begin{figure}[!h]
\begin{centering}
\includegraphics[scale=0.42]{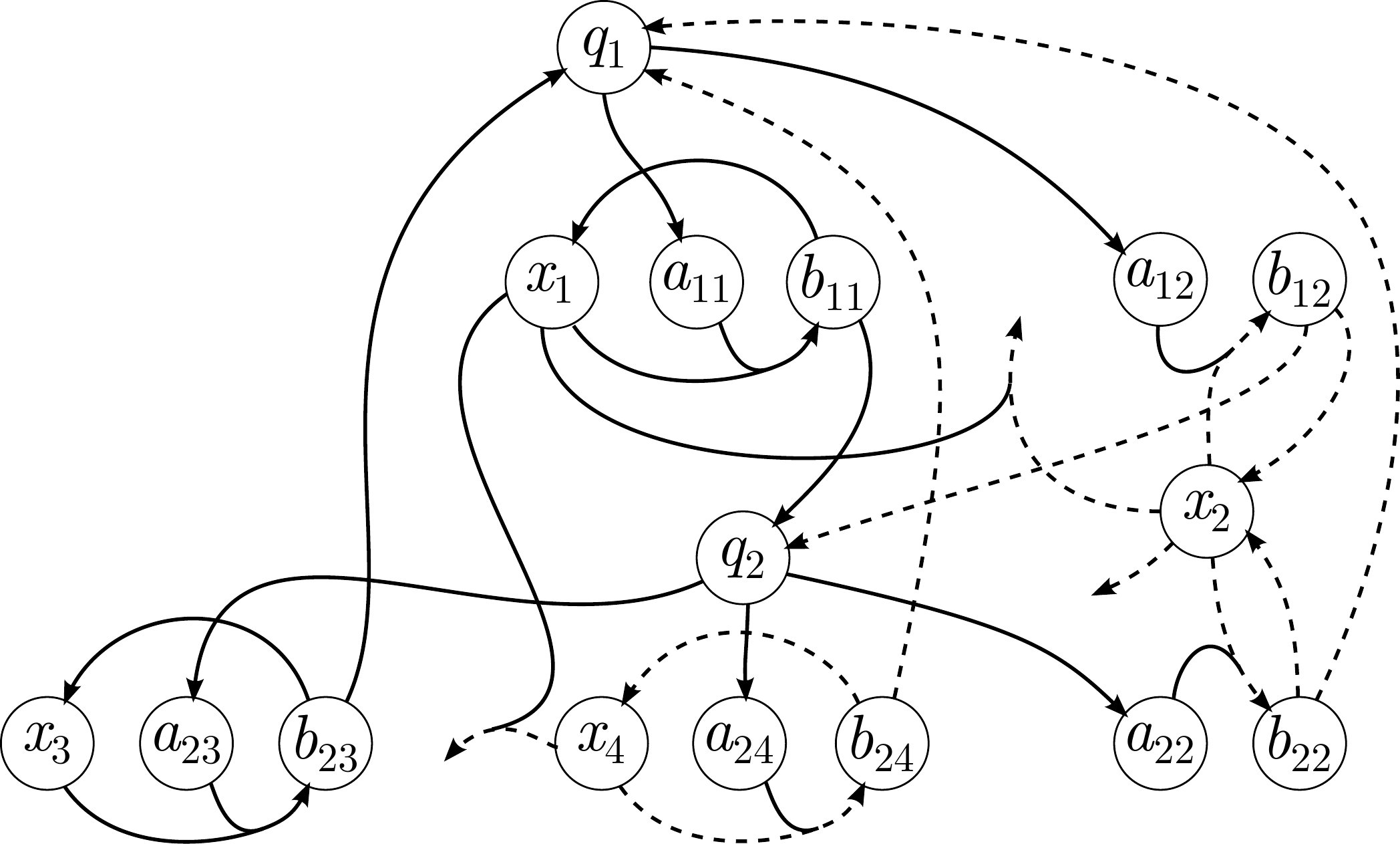}
\par\end{centering}

\protect\caption{\label{fig:gr}A more complex example, with a variable of $\psi$
appearing twice in a $P$-constraint: Depicted is $\phi$ for $\psi=\left(x_{1}\vee x_{2}\right)\wedge\left(x_{3}\vee x_{4}\vee x_{2}\right)\wedge\left(\overline{x}_{1}\vee\overline{x}_{2}\right)\wedge\left(\overline{x}_{1}\vee\overline{x}_{4}\right)\wedge\left(\overline{x}_{2}\right)$.
\protect \\
\hspace*{4ex}$\psi$ is satisfiable with the unique solution $x_{1}=x_{3}=1$
and $x_{2}=x_{4}=0$, and $G(\phi)$ is disconnected (with exactly
two components, since there is exactly one maximal self-implicating
set containing no restraint set, consisting of the variables with
the outgoing edges drawn solid).}
\end{figure}

We reduce the no-constants satisfiability problem \noun{Sat}($\left\{ P,N\right\} $)
with $P=x\vee y\vee z$ and $N=\overline{x}\vee\overline{y}$ to the
complement of \noun{Conn}$_{C}$($\left\{ M\right\} $), where $M=\left(x\vee\overline{y}\vee\overline{z}\right)\wedge\left(\overline{x}\vee z\right)$.
\noun{Sat}($\left\{ P,N\right\} $) is NP-hard by Schaefer's dichotomy
theorem (Theorem 2.1 in \citep{Schaefer:1978:CSP:800133.804350})
since $P$ is not 0-valid, not bijunctive, not Horn and not affine,
while $N$ is not 1-valid and not dual Horn.

Let $\psi$ be any CNF($\left\{ P,N\right\} $)-formula. If $\psi$
only contains $N$-constraints, it is trivially satisfiable, so assume
it contains at least one $P$-constraint. We construct a CNF\textsubscript{C}($\{M\}$)-formula
$\phi$ s.t.$\mbox{\,}$the solution graph $G(\phi)$ is disconnected
iff $\psi$ is satisfiable. First note that we can use the relations
$\overline{x}\vee\overline{y}=M(0,x,y)$ and $\overline{x}\vee y=M(x,0,y)$. 

For every variable $x_{i}$ of $\psi$ ($i=1,\ldots,n$), there is
the same variable $x_{i}$ in $\phi$. For every $N$-constraint $\overline{x}_{i}\vee\overline{x}_{j}$
of $\psi$, there is the clause $\overline{x}_{i}\vee\overline{x}_{j}$
in $\phi$ also. For every $P$-constraint $c_{p}=x_{i_{p}}\vee x_{j_{p}}\vee x_{k_{p}}$
($p=1,\ldots,m$) of $\psi$ there is an additional variable $q_{p}$
in $\phi$, and for every $x_{l}\in\{x_{i_{p}},x_{j_{p}},x_{k_{p}}\}$
appearing in $c_{p}$, there are two more additional variables $a_{pl}$
and $b_{pl}$ in $\phi$. Now for every $c_{p}$, for each $l\in\{i_{p},j_{p},k_{p}\}$
the constraints $\overline{q}_{p}\vee a_{pl}$, $\left(\overline{x}_{l}\vee\overline{a}_{pl}\vee b_{pl}\right)\wedge\left(\overline{b}_{pl}\vee x_{l}\right)$
and $\overline{b}_{pl}\vee q_{(p+1)\,\mathrm{mod}\,m}$ are added
to $\phi$. See the figures for examples of the construction. 

If $\psi$ is satisfiable, there is an assignment $\boldsymbol{s}$
to the variables $x_{i}$ s.t.$\mbox{\,}$for every $P$-constraint
$c_{p}$ there is at least one $x_{l}\in\{x_{i_{p}},x_{j_{p}},x_{k_{p}}\}$
assigned 1, and for no $N$-constraint $\overline{x}_{i}\vee\overline{x}_{j}$,
both $x_{i}$ and $x_{j}$ are assigned 1. We extend $\boldsymbol{s}$
to a locally minimal nonzero satisfying assignment $\boldsymbol{s}'$
for $\phi$; then $G(\phi)$ is disconnected by \prettyref{lem:loc min}:
Let all $q_{p}=1$, $a_{pl}=1$, and all $b_{pl}=x_{l}$ in $\boldsymbol{s}'$.
It is easy to check that all clauses of $\phi$ are satisfied, and
that all variables assigned 1 appear as the positive literal in an
implication clause with all its variables assigned 1, so that $\boldsymbol{s}'$
is locally minimal. $\boldsymbol{s}'$ is nonzero since $\psi$ contains
at least one $P$-constraint.

Conversely, if $G(\phi)$ is disconnected, $\phi$ has a maximal self-implicating
set $U$ containing no restraint set by \prettyref{cor:horn conn}.
It is easy to see that $U$ must contain all $q_{p}$, all $a_{pl}$,
and for every $p$ for at least one $l\in\{i_{p},j_{p},k_{p}\}$ both
$b_{pl}$ and $x_{l}$ (see also Figure \ref{fig:cir} and the explanation
beneath). Thus the assignment with all $x_{i}\in U$ assigned 1 and
all other $x_{i}$ assigned 0 satisfies $\psi$.
\end{proof}

\subsection{Expressing \emph{M}}
\begin{lem}
\label{lem:exp m}The relation $M=\left(x\vee\overline{y}\vee\overline{z}\right)\wedge\left(\overline{x}\vee z\right)$
is expressible as a \noun{CNF\textsubscript{C}($\{R\}$)}-formula
for every Horn relation $R$ that is not safely componentwise IHSB$-$.\end{lem}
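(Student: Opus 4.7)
The plan is to express $M$ as the conjunction of two $R$-constraints sharing a common identification of $R$'s argument positions. By hypothesis, some relation $R'$ obtained from $R$ by identification of variables is not componentwise IHSB$-$; this identification is realised inside a single $R$-constraint by reusing the same formula variable at several of its argument positions. So some connected component $C$ of $R'$ contains vectors $\boldsymbol{a},\boldsymbol{b},\boldsymbol{c}$ with $\boldsymbol{a}\wedge(\boldsymbol{b}\vee\boldsymbol{c})\notin R'$. Since $R'$ is Horn, its components are closed under coordinate-wise $\wedge$ by the argument of Lemma~\ref{lem:in}, so after replacing $\boldsymbol{b},\boldsymbol{c}$ by $\boldsymbol{a}\wedge\boldsymbol{b},\boldsymbol{a}\wedge\boldsymbol{c}$ I may assume $\boldsymbol{b},\boldsymbol{c}\le\boldsymbol{a}$ with $\boldsymbol{b}\vee\boldsymbol{c}\notin R'$; I additionally choose $\boldsymbol{a}$ of minimum Hamming weight in $R'$ among vectors $\ge\boldsymbol{b}\vee\boldsymbol{c}$. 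Each position $l$ of $R'$ then has one of only five patterns $(a_l,b_l,c_l)$, which I call Type~O$=(0,0,0)$, Type~A$=(1,0,0)$, Type~B$=(1,1,0)$, Type~C$=(1,0,1)$ and Type~E$=(1,1,1)$; each of Types A, B, C must occur.

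The \emph{primary} $R$-constraint $\phi_1$ substitutes $0$ at every Type-O, $1$ at every Type-E, and places $x$, $y$, $z$ at every Type-A, Type-B, Type-C position, respectively. Its associated relation $R_1(x,y,z)$ is a $3$-ary Horn relation containing $(0,0,0),(0,1,0),(0,0,1),(1,1,1)$ (images of $\boldsymbol{b}\wedge\boldsymbol{c},\boldsymbol{b},\boldsymbol{c},\boldsymbol{a}$) and excluding $(0,1,1)$ (whose only preimage is $\boldsymbol{b}\vee\boldsymbol{c}$); the remaining tuples $(1,0,0),(1,1,0),(1,0,1)$ lie in $R_1$ precisely when the type-uniform vectors $\boldsymbol{v}_A,\boldsymbol{v}_{BC},\boldsymbol{v}_5$ with the corresponding Type patterns (and Type~O$=0$, Type~E$=1$) lie in $R'$. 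The key claim is that the connectedness of $\boldsymbol{a},\boldsymbol{b},\boldsymbol{c}$ in $C$ forces at least one of $\boldsymbol{v}_5,\boldsymbol{v}_{BC}$ into $R'$: otherwise a path-chasing argument in $C$, using $\wedge$-closure of components, would disconnect $\boldsymbol{a}$ from $\{\boldsymbol{b},\boldsymbol{c}\}$. After possibly interchanging the roles of $y$ and $z$ in $\phi_1$ (i.e.\ swapping Types~B and C), I may assume $\boldsymbol{v}_5\in R'$, so that $R_1\supseteq M$ and $R_1\setminus M\subseteq\{(1,0,0),(1,1,0)\}$.

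The \emph{secondary} constraint $\phi_2$ is engineered so that its induced $2$-ary relation $R_2(x,z)$ equals $\bar x\vee z$, thereby removing any extras from $R_1$ without deleting any element of $M$. Two sub-cases suffice. If $\boldsymbol{v}_{BC}\notin R'$, take $\phi_2$ to identify all Type-A and Type-B positions with $x$ and use $z$ at every Type-C position; the preimages of $(0,0),(0,1),(1,0),(1,1)$ are then $\boldsymbol{b}\wedge\boldsymbol{c},\boldsymbol{c},\boldsymbol{v}_{BC},\boldsymbol{a}$, so $R_2=\{0,1\}^2\setminus\{(1,0)\}=\bar x\vee z$. If $\boldsymbol{v}_{BC}\in R'$, take $\phi_2$ to use $z$ at every Type-A position, $1$ at every Type-B position and $x$ at every Type-C position; the preimages become $\boldsymbol{b},\boldsymbol{v}_{BC},\boldsymbol{b}\vee\boldsymbol{c},\boldsymbol{a}$, again giving $R_2=\bar x\vee z$. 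In either case $\phi_1\wedge\phi_2$ defines $(x\vee\bar y\vee\bar z)\wedge(\bar x\vee z)=M$. I expect the main obstacle to lie in the dichotomy on $\boldsymbol{v}_5,\boldsymbol{v}_{BC}$ from the previous paragraph, and specifically in handling the case where some Type contains several positions of $R'$ so that type-uniform vectors form only a proper subset of $R'$; establishing the dichotomy there will likely require combining the minimality of $\boldsymbol{a}$, connectedness in $C$, and $\wedge$-closure to extract a type-uniform witness from any path between $\boldsymbol{a}$ and $\boldsymbol{b}$ (or $\boldsymbol{c}$).
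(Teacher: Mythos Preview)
Your approach is genuinely different from the paper's. The paper expresses one of the three relations $K=x\vee\bar y\vee\bar z$, $L=(x\vee\bar y\vee\bar z)\wedge(\bar x\vee\bar y\vee z)$, or $M$ from $R$ by a single substitution/identification, found via a seven-step simplification procedure on a Horn normal form of $R$ (eliminating redundant implications, trimming the multi-implication clause, killing all other variables); it then builds $M$ from $K$ or $L$ by $M(x,y,z)\equiv K(x,y,z)\wedge K(z,x,x)$. Your route is a direct combinatorial analysis of a triple $\boldsymbol a,\boldsymbol b,\boldsymbol c$ witnessing failure of IHSB$-$ closure, which is more elegant when it works.

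However, your ``key claim'' is false as stated. Take
\[
R'=\{0000,\,0001,\,0010,\,0011,\,0110,\,1011,\,1111\}.
\]
This is Horn (check pairwise meets) and connected, hence a single component $C=R'$. With $\boldsymbol a=1111$, $\boldsymbol b=0110$, $\boldsymbol c=0001$ you have $\boldsymbol b,\boldsymbol c\le\boldsymbol a$, $\boldsymbol b\vee\boldsymbol c=0111\notin R'$, and $\boldsymbol a$ is the \emph{unique} element of $R'$ above $\boldsymbol b\vee\boldsymbol c$, so your minimality condition holds. The type pattern is $A=\{1\}$, $B=\{2,3\}$, $C=\{4\}$, giving $\boldsymbol v_5=1001$ and $\boldsymbol v_{BC}=1110$, neither of which lies in $R'$. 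So no combination of minimality of $\boldsymbol a$, connectedness, and $\wedge$-closure will prove the dichotomy you need; the ``path-chasing argument'' you anticipate cannot succeed here because the path $1111\to1011\to0011\to0010\to0110$ from $\boldsymbol a$ to $\boldsymbol b$ passes only through vectors that are not type-uniform in the $B$ block.

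The approach is salvageable, but not with your current selection rule for $\boldsymbol b,\boldsymbol c$. In the counterexample, replacing $\boldsymbol c=0001$ by $\boldsymbol c=0011$ makes position $3$ Type~E, and then $\boldsymbol v_5=1011\in R'$; in fact the primary constraint alone already yields $M$. So what is missing is a mechanism that forces the ``right'' triple---for instance, maximising the Type~E block or iterating the construction. Without such a refinement the proof has a genuine gap; the paper's normal-form procedure sidesteps this by working syntactically and never committing to a specific witnessing triple.
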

\begin{proof}
$M=\left(x\vee\overline{y}\vee\overline{z}\right)\wedge\left(\overline{x}\vee z\right)=$
$\vcenter{\hbox{\includegraphics[scale=0.33]{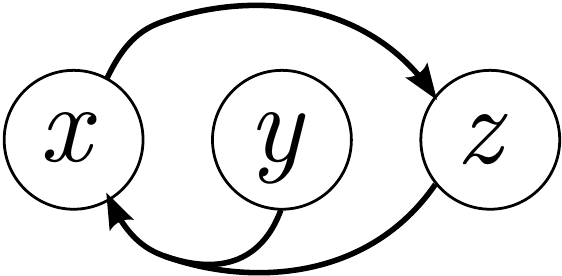}}}$ contains a multi-implication
clause where some negated variable is not implied. The only other
3-ary such relations are (up to permutation of variables)

$L=\left(x\vee\overline{y}\vee\overline{z}\right)\wedge\left(\overline{x}\vee\overline{y}\vee z\right)=$
$\vcenter{\hbox{\includegraphics[scale=0.33]{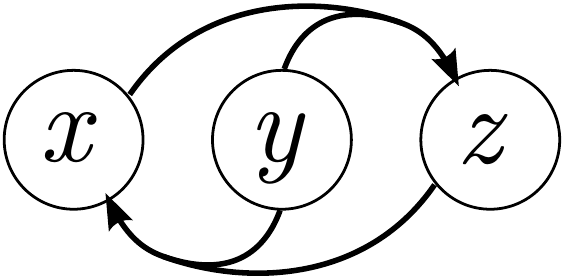}}}$$\enskip$and
$K=x\vee\overline{y}\vee\overline{z}=$ $\vcenter{\hbox{\includegraphics[scale=0.33]{drws}}}$.
We show that $M,L,$ or $K$ is expressible from $R$ by substitution
of constants and identification of variables. We can then express
$M$ from $K$ or $L$ as 
\[
M(x,y,z)\equiv K(x,y,z)\wedge K(z,x,x)\equiv L(x,y,z)\wedge L(z,x,x).
\]

We will argue with formulas in a certain normal form $\nu$; for a
Horn formula $\phi$, let $\nu(\phi)$ be the formula obtained from
$\phi$ by recursively applying the following simplification rules
as long as one is applicable; it is easy to check that the operations
are equivalent transformations, and that the recursion must terminate:
\selectlanguage{british}%
\begin{enumerate}[label=({\alph*})]
\item \foreignlanguage{english}{The constants 0 and 1 are eliminated in
the obvious way.}
\selectlanguage{english}%
\item Multiple occurrences of some variable in a clause are eliminated in
the obvious way.
\item \emph{\label{enu:red imp}}If for some implication clause $c=x\vee\overline{y}_{1}\vee\cdots\vee\overline{y}_{k}$
($k\geq1$), $x$ is already implied by $\{y_{1},\ldots,y_{k}\}$
via other clauses, $c$ is removed.\emph{}\\
{\small{}\hspace*{3ex}}E.g., if there was a clause $x\vee\overline{z}_{1}\vee\cdots\vee\overline{z}_{l}$
with $\{z_{1},\ldots,z_{l}\}\subseteq\{y_{1},\ldots,y_{k}\}$, or
clauses $q\vee\overline{z}_{1}\vee\cdots\vee\overline{z}_{l}$ and
$x\vee\overline{q}$, \emph{$c$} would be removed. Which clauses
are removed by this rule may be random; e.g., for the formula $\left(\overline{x}\vee y\right)\wedge\left(\overline{x}\vee z\right)\wedge\left(\overline{z}\vee y\right)\wedge\left(\overline{y}\vee z\right)$,
$\overline{x}\vee y$ \emph{or} $\overline{x}\vee z$ would be removed:\\
\includegraphics[scale=0.42]{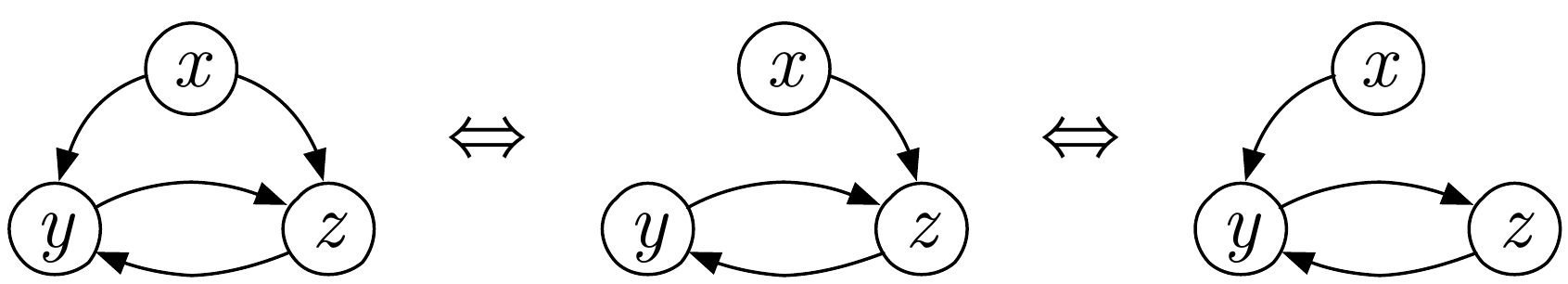}
\item \emph{\label{enu:imp imp}}If for some implication clause $c=x\vee\overline{y}_{1}\vee\cdots\vee\overline{y}_{k}$
($k\geq1$), Imp(Var($c$)) contains a restraint set, $c$ is replaced
by $\overline{y}_{1}\vee\cdots\vee\overline{y}_{k}$.\\
{\small{}\hspace*{3ex}}E.g., if there is a clause $\overline{r}_{1}\vee\cdots\vee\overline{r}_{l}$
with $\{r_{1},\ldots,r_{l}\}\subseteq\{x,y_{1},\ldots,y_{k}\}$, or
if there are clauses $q_{1}\vee\overline{r}_{1}\vee\cdots\vee\overline{r}_{l}$,
$q_{2}\vee\overline{r}_{1}\vee\cdots\vee\overline{r}_{l}$ and $\overline{q}_{1}\vee\overline{q}_{2}$.
E.g., in the formula $\left(x\vee\overline{y}\vee\overline{z}\right)\wedge\left(w\vee\overline{y}\right)\wedge\left(\overline{w}\vee\overline{x}\right)$,
$x\vee\overline{y}\vee\overline{z}$ is replaced by $\overline{y}\vee\overline{z}$:\\
\includegraphics[scale=0.42]{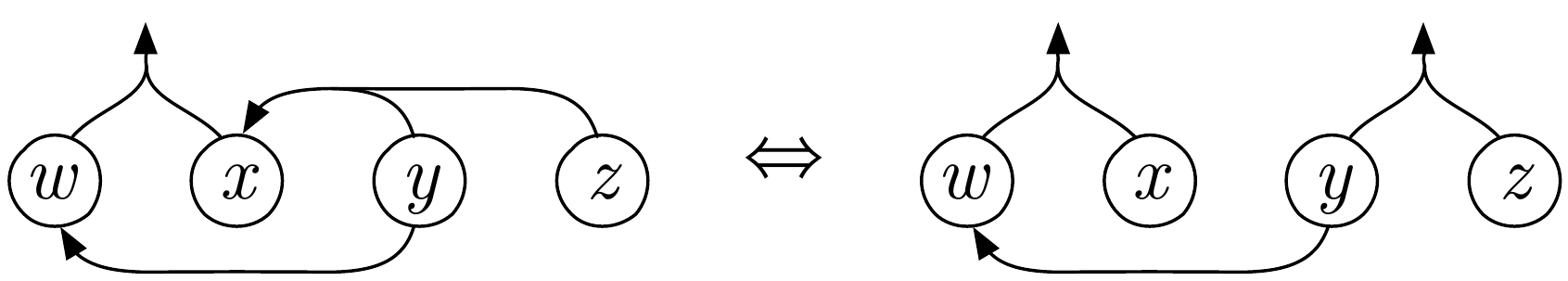}
\item \emph{\label{enu:red bra}}If for some multi-implication clause $c=x\vee\overline{y}_{1}\vee\cdots\vee\overline{y}_{k}$
($k\geq2$), or for some restraint clause $d=\overline{y}_{1}\vee\cdots\vee\overline{y}_{k}$,
some $y_{i}\in\{y_{1},\ldots,y_{k}\}$ is implied by $\{y_{1},\ldots,y_{k}\}\setminus\{y_{i}\}$,
the literal $\overline{y}_{i}$ is removed from $c$ resp. $d$.\emph{}\\
{\small{}\hspace*{3ex}}Which literals are removed by this rule may
be random, as in the following example:\\
\includegraphics[scale=0.47]{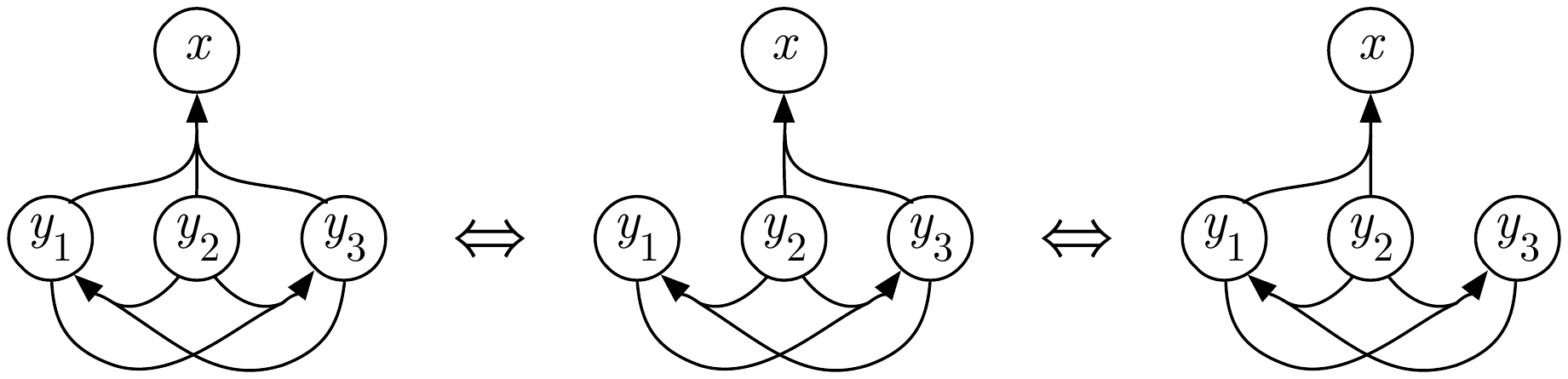}
\end{enumerate}
\selectlanguage{english}%
Let $\phi_{0}$ be a formula representing $R$, in normal form $\nu$
. The following 7 numbered transformation steps generate $K$, $L$,
or $M$ from $\phi_{0}$. After each transformation, we assume the
resulting formula is simplified to be in normal form; we denote the
formula resulting from the $i$'th transformation step in this way
by $\phi_{i}$.

In the first three steps, we ensure that the formula contains a multi-implication
clause where some variable is not implied, in the fourth step we trim
the multi-implication clause to size 3, and in the last three steps
we eliminate all remaining clauses and variables not occurring in
$K$, $L$, or $M$. Our first goal is to produce a formula with a
connected solution graph that is not IHSB$-$, which will turn out
helpful.
\begin{enumerate}
\item \emph{Obtain a not componentwise IHSB$-$ formula $\phi_{1}$ from
$\phi_{0}$ by identification of variables.}
\end{enumerate}
Let $\left[\phi_{1}^{*}\right]$ be a connected component of $\left[\phi_{1}\right]$
that is not IHSB$-$, and let $U$ be the set of variables assigned
1 in the minimum solution of $\phi_{1}^{*}$.
\begin{enumerate}[resume]
\item \emph{\label{enu:s1}Substitute 1 for all variables from $U$.}
\end{enumerate}
The resulting formula $\phi_{2}$ now contains no positive unit-clauses.
Further, the component $\left[\phi_{2}^{*}\right]$ of $\left[\phi_{2}\right]$
resulting from $\left[\phi_{1}^{*}\right]$ is still not IHSB$-$,
and it has the all-0 vector as minimum solution. We show that 
\begin{equation}
{\textstyle \phi_{2}^{*}\equiv\nu\left(\phi_{2}\wedge\left(\bigvee_{x\in V_{1}}\overline{x}\right)\wedge\cdots\wedge\left(\bigvee_{x\in V_{k}}\overline{x}\right)\right),}\label{eq:st}
\end{equation}
where $V_{1},\ldots,V_{k}$ are the sets of variables assigned 1 in
the minimum solutions $\boldsymbol{v}^{1},\ldots,\boldsymbol{v}^{k}$
of the other components of $\left[\phi_{2}\right]$, and we specified
the formula to be in normal form:
\begin{itemize}[label= ]
\item For any solution $\boldsymbol{a}$ in the component with minimum
solution $\boldsymbol{v}^{i}$ we have $\boldsymbol{a}\geq\boldsymbol{v}^{i}$
(see \prettyref{def:lm}), so all components other than $\left[\phi_{2}^{*}\right]$
are eliminated in the right-hand side of \eqref{eq:st}. By \prettyref{lem:in},
no vector from $\left[\phi_{2}^{*}\right]$ is removed.
\end{itemize}
By \prettyref{lem:horn conn}, $V_{1},\ldots,V_{k}$ are exactly the
non-empty maximal self-implicating sets of $\phi_{2}$ that contain
no restraint set.

Clearly, $\phi_{2}^{*}$ is not IHSB$-$. However, we have no restraint
clauses at our disposal to generate $\phi_{2}^{*}$ from $\phi_{2}$;
nevertheless, we can isolate a connected part of $\phi_{2}^{*}$ that
is not IHSB$-$, as we will see.

Since $\phi_{2}^{*}$ is not IHSB$-$, it contains a multi-implication
clause $c^{*}$, and by \eqref{eq:st} it is clear that $\phi_{2}$
must contain the same clause $c^{*}$.

By simplification rule \ref{enu:imp imp}, Imp(Var$(c^{*})$) contains
no restraint set in $\phi_{2}$. Now if some self-implicating set
$U^{*}$ were implied by Var$(c^{*})$, the related maximal self-implicating
set $U_{m}^{*}$ (which then were also implied by Var$(c^{*})$) could
contain no restraint set, thus a restraint clause would be added for
the variables from $U_{m}^{*}$ in \eqref{eq:st}. But then $c^{*}$
would be removed by $\nu$ in \eqref{eq:st}, again due to rule \ref{enu:imp imp},
which is a contradiction. Thus Imp(Var$(c^{*})$) also contains no
self-implicating set in $\phi_{2}$, and so the following operation
eliminates all self-implicating sets and all restraint clauses:
\begin{enumerate}[resume]
\item \emph{Substitute 0 for all remaining variables not implied by} Var($c^{*}$).
\end{enumerate}
This operation also produces no new restraint clauses since any implication
clause with the positive literal not implied by Var($c^{*}$) must
also have some negative literal not implied by Var($c^{*}$), and
thus vanishes.

Further, since $\phi_{2}$ contained no positive unit-clauses, the
formula cannot have become unsatisfiable by this operation. Also,
it is easy to see that the simplification initiated by the substitution
of 0 for some variable $x_{i}$ can only affect clauses $c$ with
$x_{i}\in\mathrm{Imp}(\mathrm{Var}(c))$, so $c^{*}$ is retained
in $\phi_{3}$. 

Since all variables not from Var($c^{*}$) are now implied by\emph{
}Var($c^{*}$), and Imp(Var($c^{*}$)) is not self-implicating, $c^{*}$
contains a variable that is not implied; w.l.o.g., let $c^{*}=x\vee\overline{y}\vee\overline{z}_{1}\vee\cdots\vee\overline{z}_{k}$
($k\geq1$) s.t. $y$ is not implied.
\begin{enumerate}[resume]
\item \emph{\label{enu:Id z}Identify $z_{1},\ldots,z_{k}$, call the resulting
variable $z$.}
\end{enumerate}
This produces the clause $c^{\sim}=x\vee\overline{y}\vee\overline{z}$
from $c^{*}$. Clearly, $y$ is still not implied in $\phi_{4}$,
and since $x$ was not implied by any set $U\subsetneq\{y,z_{1},\ldots,x_{k}\}$
by simplification rule \ref{enu:red bra} in $\phi_{3}$, and no $z_{i}$
was implied by $y$, it follows for $\phi_{4}$ that
\begin{itemize}[label= ({*})]
\item $x\notin\mathrm{Imp}(y)$, $x\notin\mathrm{Imp}(z)$, $z\notin\mathrm{Imp}(y)$,
$y$ is not implied.
\end{itemize}
Also, since $x$ was implied by $\{y,z_{1},\ldots,x_{k}\}$ only via
$c^{*}$ in $\phi_{3}$ due to simplification rule \ref{enu:red imp},
$x$ is implied by $\{y,z\}$ only via $c^{\sim}$ in $\phi_{4}$.

In the following steps, we eliminate all variables other than $x,y,z,$
s.t. $c^{\sim}$ is retained and ({*}) is maintained. It follows that
we are then left with $K,L$, or $M$, since the only clauses only
involving $x,y,z$ and satisfying ({*}) besides $c^{\sim}$ are from
$\{z\vee\overline{x},\,z\vee\overline{x}\vee\overline{y}\}$.
\begin{enumerate}[resume]
\item \emph{Substitute 1 for every variable from $\mathrm{Imp}(y)\setminus\{y\}$
. }
\end{enumerate}
For the simplification initiated by this operation, note that $\phi_{4}$
contained no restraint clauses. It follows that the formula cannot
have become unsatisfiable by this operation. Further, it is easy to
see that for a Horn formula without restraint clauses, at a substitution
of 1 for variables from a set $U$, only clauses $c$ containing at
least one variable $x_{i}\in\mathrm{Imp}(U)$ are affected by the
simplification. Thus, $c^{\sim}$ is not affected since $x,y$ and
$z$ were not implied by $\mathrm{Imp}(y)\setminus\{y\}$.

We must carefully check that ({*}) is maintained since substitutions
of 1 may result in new implications: Since $\mathrm{Imp}(y)\setminus\{y\}$
is empty in $\phi_{5}$, still $x\notin\mathrm{Imp}(y)$ and $z\notin\mathrm{Imp}(y)$.
It is easy to see that $x$ could only have become implied by $z$
as result of transformation 5 if there had been a multi-implication
clause (other than $c^{\sim}$) in $\phi_{4}$ with the positive variable
implying $x$, and each negated variable implied by $y$ or $z$;
but this is not the case since $x$ was implied by $\{y,z\}$ only
via $c^{\sim}$ in $\phi_{4}$, thus still $x\notin\mathrm{Imp}(z)$.

We eliminate all remaining variables besides $x,y,z$ by identifications
in the next two steps. Since now $\mathrm{Imp}(y)\setminus\{y\}$
is empty, the only condition from ({*}) we have to care about is that
$x\notin\mathrm{Imp}(z)$ remains true.
\begin{enumerate}[resume]
\item \emph{\label{enu:id x}Identify all remaining variables from $\mathrm{Imp}(z)\setminus\{z\}$
with $z$.} 
\end{enumerate}
Now $\mathrm{Imp}(z)\setminus\{z\}$ is empty, so the last step is
easy:
\begin{enumerate}[resume]
\item \emph{\label{enu:id z}Identify all remaining variables other than
$x,y,z$ with $x$.} 
\end{enumerate}
\end{proof}

\section{\label{sec:cpss}Constraint-Projection Separating Sets of Relations}

In this supplemental section we reveal a common property of all CPSS
sets $\mathcal{S}$ of relations and derive a simple algorithm for
\noun{Conn}\textsubscript{C}($\mathcal{S}$).
\begin{defn}
\label{dcp}A set $\mathcal{S}$ of logical relations is \emph{constraint-projection
separating}, if every CNF\textsubscript{C}($\mathcal{S}$)-formula
$\phi$ whose solution graph $G(\phi)$ is disconnected contains a
constraint $C_{i}$ s.t. $G(\phi_{i})$ is disconnected, where $\phi_{i}$
is the projection of $\phi$ to $\mathrm{Var}(C_{i})$.
\end{defn}
\begin{wrapfigure}{r}{0.27\columnwidth}%
\includegraphics[scale=0.65]{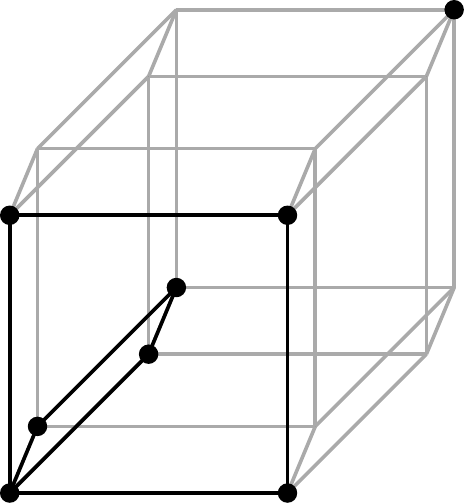}\end{wrapfigure}%
For example, $\{x\vee\overline{y}\}$ is projection-separating (for
the proof see \prettyref{lem:prjs b-1}); so, e.g. for $\left(x\vee\overline{y}\right)\wedge\left(y\vee\overline{z}\right)\wedge\left(z\vee\overline{x}\right)$,
the projections to $\{x,y\}$, $\{y,z\}$ and $\{z,x\}$ are all disconnected.
In contrast, $\{x\vee\overline{y}\vee\overline{z}\}$ is not projection-separating:
E.g., $\left(x\vee\overline{y}\vee\overline{z}\right)\wedge\left(y\vee\overline{z}\vee\overline{w}\right)\wedge\left(z\vee\overline{w}\vee\overline{x}\right)\wedge\left(w\vee\overline{x}\vee\overline{y}\right)$
(see the graph on the right) is disconnected, but the projection to
any three variables is connected.

In \prettyref{lem:prjs b-1} we show that CPSS sets of relations,
as defined in Definition \ref{def:cpss}, are indeed constraint-projection
separating, so that the following algorithm works.
\begin{thm}
\label{thm:alg}Let $\mathcal{S}$ be a CPSS set of relations, and
$\phi$ a \noun{CNF\textsubscript{C}($\mathcal{S}$)}-formula. Then,
the following polynomial-time algorithm decides whether $G(\phi)$
is connected:
\begin{itemize}[label= ]
\item For every constraint $C_{i}$ of $\phi$, obtain the projection $\phi_{i}$
of $\phi$ to the variables $\boldsymbol{x}^{i}$ occurring in $C_{i}$
by checking for every assignment $\boldsymbol{a}$ of $\boldsymbol{x}^{i}$
whether $\phi[\boldsymbol{x}^{i}/\boldsymbol{a}]$ is satisfiable.
Then $G(\phi)$ is connected iff for no $\phi_{i}$, $G(\phi_{i})$
is disconnected.
\end{itemize}
($\phi[\boldsymbol{x}^{i}/\boldsymbol{a}]$ denotes the formula resulting
from $\phi$ by substituting the constants $a_{j}$ for the variables
$x_{j}^{i}$.)\end{thm}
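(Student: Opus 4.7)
The plan is to dispatch the claim in two parts: verify the polynomial running time, and then establish the stated equivalence, which splits into a trivial projection direction and one that hands off to \prettyref{lem:prjs b-1}.

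For the running time, let $k$ denote the maximum arity of a relation in the finite set $\mathcal{S}$; then $k$ is a constant and every constraint $C_i$ satisfies $|\mathrm{Var}(C_i)|\le k$. For each of the at most $|\phi|$ constraints there are thus at most $2^k$ assignments $\boldsymbol{a}$ to $\boldsymbol{x}^i=\mathrm{Var}(C_i)$, and for each such $\boldsymbol{a}$ the formula $\phi[\boldsymbol{x}^i/\boldsymbol{a}]$ is again a CNF\textsubscript{C}($\mathcal{S}$)-formula; since CPSS is a subclass of Schaefer, \noun{Sat}\textsubscript{C}($\mathcal{S}$) is in P by Schaefer's theorem, so each satisfiability check runs in polynomial time. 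The projection $\phi_i$ is then explicitly available as a relation on at most $k$ variables, and connectivity of $G(\phi_i)$, a subgraph of a hypercube of constant dimension, can be decided in $O(1)$ time. Hence the whole procedure is polynomial.

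For correctness, one direction is routine: if $G(\phi)$ is connected, then each $G(\phi_i)$ is connected. Given $\boldsymbol{a},\boldsymbol{b}\in\phi_i$, lift them to solutions $\boldsymbol{A},\boldsymbol{B}\in\phi$ agreeing with them on $\boldsymbol{x}^i$, take any path $\boldsymbol{A}=\boldsymbol{u}^0\to\cdots\to\boldsymbol{u}^r=\boldsymbol{B}$ in $G(\phi)$, project each $\boldsymbol{u}^j$ onto $\boldsymbol{x}^i$, and collapse consecutive repetitions: each step flips at most one coordinate of $\boldsymbol{x}^i$, so what is left is a path from $\boldsymbol{a}$ to $\boldsymbol{b}$ in $G(\phi_i)$. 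The converse is exactly the contrapositive of \prettyref{lem:prjs b-1}: if $G(\phi)$ is disconnected, then because $\mathcal{S}$ is CPSS and hence constraint-projection separating, some constraint of $\phi$ has a disconnected projection. The real work therefore lies in \prettyref{lem:prjs b-1} itself, where the separation property must be verified for each of the four CPSS sub-classes in \prettyref{def:cpss}; once that lemma is in hand, the theorem follows immediately from the two directions above.
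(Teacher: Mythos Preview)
Your proposal is correct and follows essentially the same approach as the paper: polynomial time via Schaefer's theorem for the satisfiability checks and constant-time connectivity tests on the bounded-arity projections, the hard direction deferred to \prettyref{lem:prjs b-1}, and the easy direction via projecting a path. Your write-up is simply more explicit than the paper's on the easy direction (the paper dismisses it as ``obvious''), but the structure is identical.
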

\begin{proof}
Every projection can be computed in polynomial time since \noun{$\mathcal{S}$}
is Schaefer, and connectivity of every $G(\phi_{i})$ can be checked
in constant time. If $G(\phi)$ is disconnected, some $G(\phi_{i})$
is disconnected since $\phi$ is constraint-projection separating
by \prettyref{lem:prjs b-1} below. If some $G(\phi_{i})$ is disconnected,
it is obvious that $G(\phi)$ cannot be connected.\end{proof}
\begin{lem}
\label{lem:prjs i}Let $\mathcal{S}$ be a set of IHSB$-$ (IHSB$+$)
relations and $\phi$ a \noun{CNF\textsubscript{C}($\mathcal{S}$)}-formula.
Then for any two components of $G(\phi)$, there is some constraint
$C_{i}$ of $\phi$ s.t.$\mbox{\,}$their images in the projection
$\phi_{i}$ of $\phi$ to $\mathrm{Var}(C_{i})$ are disconnected
in $G(\phi_{i})$.\end{lem}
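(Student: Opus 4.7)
My plan is to treat the IHSB$-$ case; the IHSB$+$ case follows by the standard duality of complementing every variable, which converts an IHSB$+$ formula into an IHSB$-$ one while preserving the solution graph up to relabeling. Under IHSB$-$, $\phi$ is itself a Horn formula in which every implication clause has at most two literals, and \prettyref{lem:horn conn} identifies each component $F$ of $G(\phi)$ with the maximal self-implicating set $U_F$ of variables assigned $1$ in its minimum solution. Given two distinct components $F_1, F_2$ with minima $\boldsymbol{u}_1, \boldsymbol{u}_2$ and associated sets $U_1 \neq U_2$, my goal is to exhibit a constraint $C_j$ of $\phi$ such that the projections of $F_1, F_2$ to $\mathrm{Var}(C_j)$ lie in different components of $G(\phi_j)$.

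The key structural step is to find an implication cycle inside $U_2 \setminus U_1$ (WLOG nonempty). Under IHSB$-$, every implication edge $w \to y$ comes from a 2-clause $y \vee \overline{w}$, and $U_1 = \mathrm{Imp}(U_1)$ is closed; hence any predecessor of a vertex in $U_2 \setminus U_1$ that happened to lie in $U_1$ would drag that vertex into $U_1$. Combined with self-implicatingness of $U_2$, this forces every $y \in U_2 \setminus U_1$ to possess a direct predecessor inside $U_2 \setminus U_1$ distinct from $y$ (tautological self-loops can be discarded). A short side argument rules out $|U_2 \setminus U_1| = 1$, since a single such $y$ would be implied by $U_2 \setminus \{y\} \subseteq U_1$ and then forced into $U_1$ by closure. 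Finiteness then yields a directed cycle $y_1 \to y_2 \to \cdots \to y_k \to y_1$ of length $k \geq 2$ entirely in $U_2 \setminus U_1$.

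The payoff is immediate. The 2-clause $y_2 \vee \overline{y_1}$ witnessing the first cycle edge belongs to some constraint $C_j$ of $\phi$, so $\{y_1, y_2\} \subseteq \mathrm{Var}(C_j)$. Chaining the cycle clauses in $\phi$ forces $y_1 = y_2 = \cdots = y_k$ in every solution, and this equation is inherited by the projection $\phi_j$. Since a single-variable flip along a path in $G(\phi_j)$ cannot break $y_1 = y_2$, any vector with $y_1 = y_2 = 0$ and any vector with $y_1 = y_2 = 1$ lie in disjoint components of $G(\phi_j)$. Because $y_1, y_2 \in U_2 \setminus U_1$, the restriction $\boldsymbol{u}_1|_{\mathrm{Var}(C_j)}$ sits in the first class and $\boldsymbol{u}_2|_{\mathrm{Var}(C_j)}$ in the second, yielding the desired separation.

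The hard part, as I see it, is executing the cycle extraction cleanly in the presence of constants, repeated occurrences of variables inside a constraint, and possible unit or restraint clauses, and in particular handling the $|U_2 \setminus U_1| = 1$ corner case. Once the cycle is produced, the remainder is the elementary observation that two coordinates forced to agree cannot be simultaneously flipped by a sequence of single-variable moves.
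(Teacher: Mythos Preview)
Your argument is correct and follows essentially the same route as the paper: take the locally minimal solutions of two components, look at the set difference $U_2\setminus U_1$, use the fact that in IHSB$-$ every implication clause is a $2$-clause to chase predecessors inside $U_2\setminus U_1$ until a cycle appears, and then observe that the two variables on any cycle edge are forced equal in every solution and sit together in one original constraint, which splits the projection. One small technical point: you invoke \prettyref{lem:horn conn}, which is stated only for Horn formulas \emph{without positive unit clauses}, whereas an IHSB$-$ formula may contain them; the paper sidesteps this by appealing directly to the existence of locally minimal solutions (Lemma~4.5 of \citep{gop}) and arguing from the violated clause when a $1$ is flipped to $0$, which yields the direct predecessor in $U'$ without any hypothesis on unit clauses---you can do the same, or simply note that any positively forced variable lies in $U_1\cap U_2$ and hence never enters $U_2\setminus U_1$.
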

\begin{proof}
We prove the IHSB$-$ case, the IHSB$+$ case is analogous. Consider
any two components $A$ and $B$ of $\phi$. Since every IHSB$-$
relation is OR-free, there is a locally minimal solution $\boldsymbol{a}$
in $A$ and a locally minimal solution $\boldsymbol{b}$ in $B$ by
Lemma 4.5 of \citep{gop}. Let $U$ and $V$ be the sets of variables
that are assigned 1 in $\boldsymbol{a}$ and \textbf{$\boldsymbol{b}$},
resp.~At least one of the sets $U'=U\setminus V$ or $V'=V\setminus U$
is not empty, assume it is $U'$. Then for every $x_{1}\in U'$ there
must be a clause $x_{1}\vee\overline{x}_{2}$ with $x_{2}\in U$ since
$\boldsymbol{a}$ is locally minimal, and also $x_{2}$ must be from
$U'$, else $\boldsymbol{b}$ would not be satisfying.

But then for $x_{2}$ there must be also some variable $x_{3}\in U'$
and a clause $x_{2}\vee\overline{x}_{3}$, and we can add the clause
$x_{1}\vee\overline{x}_{3}$ to $\phi$ without changing its value.
Continuing this way, we will find a cycle, i.e.~a clause $x_{i}\vee\overline{x}_{i+1}$
with $x_{i+1}=x_{j}$, $j<i$. But then we already have $x_{j}\vee\overline{x}_{i}$
added, thus $(s_{i},s_{j})\in\{(0,0),(1,1)\}$ for any solution $\boldsymbol{s}$
of $\phi$, and there must be some constraint $C_{i}$ with both $x_{i}$
and $x_{j}$ occurring in it (the $C_{i}$ in which the original $x_{i}\vee\overline{x}_{j}$
appeared), and thus the projections of $A$ and $B$ to $\mathrm{Var}(C_{i})$
are disconnected in $G(\phi_{i})$.\end{proof}
\begin{lem}
\label{lem:prjs b}Let $\mathcal{S}$ be a set of bijunctive relations
and $\phi$ a \noun{CNF\textsubscript{C}($\mathcal{S}$)}-formula.
Then for any two components of $G(\phi)$, there is some constraint
$C_{i}$ of $\phi$ s.t.$\mbox{\,}$their images in the projection
$\phi_{i}$ of $\phi$ to $\mathrm{Var}(C_{i})$ are disconnected
in $G(\phi_{i})$.\end{lem}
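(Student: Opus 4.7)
Following the template of Lemma \ref{lem:prjs i}, my plan is to replace the use of locally minimal solutions with a pair $\boldsymbol{a} \in A$, $\boldsymbol{b} \in B$ of minimum Hamming distance, and to exploit the 2-CNF representation of bijunctive relations: by Lemma \ref{lem:cl}, every bijunctive relation is closed under MAJ, and this closure is preserved under substitution of constants and identification of variables, so the effective relation of each constraint $C$ of $\phi$ is itself bijunctive on $\mathrm{Var}(C)$ and hence expressible as a 2-CNF whose clauses are implied by that single constraint.

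Set $D := \{k : a_k \neq b_k\}$. For each $i \in D$, flipping $x_i$ in $\boldsymbol{a}$ gives a vector $\boldsymbol{a}'$ that cannot be a solution, since otherwise $\boldsymbol{a}' \in A$ and $|\boldsymbol{a}' - \boldsymbol{b}| = |D| - 1$, contradicting minimality. Hence some 2-clause of some constraint $C \in \phi$ is violated by $\boldsymbol{a}'$ but not $\boldsymbol{a}$, forcing one literal $\ell_1$ of the clause to be on $x_i$ with $\ell_1(\boldsymbol{a}) = 1$. The second literal $\ell_2$ must involve a different variable $x_j$ (otherwise the clause reduces to a unit clause on $x_i$, which one of $\boldsymbol{a}, \boldsymbol{b}$ would violate), and since $\boldsymbol{b}$ satisfies the clause with $\ell_1(\boldsymbol{b}) = 0$, one reads off $\ell_2(\boldsymbol{b}) = 1 \neq \ell_2(\boldsymbol{a})$, so $j \in D$. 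I define $f : D \to D$ by $f(i) := j$, remembering the associated constraint $C$.

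Iterating $f$ yields a cycle $i_1 \to \ldots \to i_k \to i_1$; each arrow $i_l \to i_{l+1}$ contributes an implication $s_{i_l} = b_{i_l} \Rightarrow s_{i_{l+1}} = b_{i_{l+1}}$ holding in every solution of $\phi$, so traversing the cycle yields the equivalence $s_{i_1} = b_{i_1} \Leftrightarrow s_{i_2} = b_{i_2}$. Let $C$ be the constraint recorded for the arrow $i_1 \to i_2$; both $x_{i_1}, x_{i_2} \in \mathrm{Var}(C)$, and every element of the projection $\phi_C$ inherits this equivalence. Since $\boldsymbol{a}$ and $\boldsymbol{b}$ restricted to $\mathrm{Var}(C)$ differ on both $x_{i_1}$ and $x_{i_2}$, any path between them in $G(\phi_C)$, projected onto that pair, is a walk in $\{0,1\}^2$ from $(a_{i_1}, a_{i_2})$ to $(b_{i_1}, b_{i_2})$ flipping one coordinate per step, and must visit $(a_{i_1}, b_{i_2})$ or $(b_{i_1}, a_{i_2})$; both are excluded from $\phi_C$ by the equivalence, a contradiction. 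I expect the main obstacle to be the second paragraph: one must carefully justify that the violating 2-clause really sits inside a single constraint of $\phi$ and that $j \neq i$; the cycle-to-equivalence step is then a direct analogue of the IHSB argument.
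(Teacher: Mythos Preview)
Your proof is correct and follows essentially the same approach as the paper's: pick $\boldsymbol{a}\in A$, $\boldsymbol{b}\in B$ at minimum Hamming distance, use the 2-CNF representation of each (effective) constraint to extract, for every disagreeing coordinate, a binary clause that forces another disagreeing coordinate, iterate to a cycle, and read off a two-variable equivalence witnessed inside a single constraint. The only cosmetic difference is that the paper phrases the chase in terms of literals (the set $L$ of literals true in $\boldsymbol{a}$ and false in $\boldsymbol{b}$) rather than your index set $D$ with a successor map $f$; the content is identical.
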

\begin{proof}
The proof is similar to the last one. Consider any two components
$A$ and $B$ of $\phi$ and two solutions $\boldsymbol{a}$ in $A$
and $\boldsymbol{b}$ in $B$ that are at minimum Hamming distance.
Let $L$ be the set of literals that are assigned 1 in \textbf{$\boldsymbol{a}$},
but assigned 0 in\textbf{ $\boldsymbol{b}$}. Then for every $l_{1}\in L$
that is assigned 1 in\textbf{ $\boldsymbol{a}$}, there must be a
clause equivalent to $l_{1}\vee\overline{l}_{2}$ in $\phi$ s.t.$\mbox{\,}$$l_{2}$
is also assigned 1 in\textbf{ $\boldsymbol{a}$}, else the variable
corresponding to $l_{1}$ could be flipped in \textbf{$\boldsymbol{a}$},
and the resulting vector would be closer to \textbf{$\boldsymbol{b}$},
contradicting our choice of $\boldsymbol{a}$ and \textbf{$\boldsymbol{b}$}.
Also, $l_{2}$ must be assigned 0 in \textbf{$\boldsymbol{b}$}, i.e.~$l_{2}\in L$,
else $\boldsymbol{b}$ would not be satisfying. 

But then for $l_{2}$ there must be also some literal $l_{3}\in L$
that is assigned 1 in\textbf{ }$\boldsymbol{a}$ and a clause equivalent
to $l_{2}\vee\overline{l}_{3}$ in $\phi$, and we can add the clause
$l_{1}\vee\overline{l}_{3}$ to $\phi$ without changing its value.
Continuing this way, we will find a cycle, i.e.~a clause equivalent
to $l_{n}\vee\overline{l}_{n+1}$ with $l_{n+1}=l_{m}$, $m<n$. But
then we already have $l_{m}\vee\overline{l}_{n}$ added, thus if $x_{i}$
and $x_{j}$ are the variables corresponding to $l_{n}$ resp.~$l_{m}$,
then $(s_{i},s_{j})\in\{(0,1),(1,0)\}$ (if $l_{n}$ and $l_{m}$
were both positive or both negative), or $(s_{i},s_{j})\in\{(0,0),(1,1)\}$
(otherwise), for any solution $\boldsymbol{s}$ of $\phi$. Also,
there must be some constraint $C_{i}$ with both $x_{i}$ and $x_{j}$
occurring in it (the constraint in which the clause equivalent to
$l_{n}\vee\overline{l}_{m}$ appeared), and thus the projections of
$A$ and $B$ to $\mathrm{Var}(C_{i})$ are disconnected in $G(\phi_{i})$.\end{proof}
\begin{lem}
\label{lem:prjs b-1}Every set $\mathcal{S}$ of safely componentwise
bijunctive (safely componentwise IHSB$-$, safely componentwise IHSB$+$,
affine) relations is constraint-projection separating.\end{lem}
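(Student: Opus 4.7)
The plan is to leverage Lemmas~\ref{lem:prjs b} and~\ref{lem:prjs i}, which already establish constraint-projection separation for sets of (genuinely) bijunctive and IHSB$\pm$ relations, and lift them to the safely componentwise versions via a reduction. The affine case is subsumed: by \prettyref{lem:cl} every affine relation is bijunctive, and by the modified Lemma~4.2 every bijunctive relation is safely componentwise bijunctive, so it suffices to treat the three componentwise classes. The safely componentwise IHSB$-$ and IHSB$+$ cases proceed verbatim with Lemma~\ref{lem:prjs i} in place of Lemma~\ref{lem:prjs b}, so I sketch only the safely componentwise bijunctive case.

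Fix a disconnected CNF\textsubscript{C}($\mathcal{S}$)-formula $\phi$. For each constraint $C_i$, let $R^i$ denote the relation corresponding to $C_i$ --- obtained from the underlying $R\in\mathcal{S}$ by substitution of constants and identification of repeated variables as dictated by $C_i$. Since $\mathcal{S}$ is safely componentwise bijunctive, $R^i$ is componentwise bijunctive and decomposes as a Hamming-disjoint union of bijunctive components $R^i_1,\ldots,R^i_{m_i}$. Because the projection $[\phi_i]$ is contained in $R^i$, any Hamming path inside $[\phi_i]$ lies inside $R^i$ and therefore cannot cross between different components $R^i_j$.

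I then split into two cases. In Case~A, some constraint $C_i$ has $[\phi_i]$ meeting two distinct components of $R^i$; any witnesses $\boldsymbol{v}^1,\boldsymbol{v}^2$ in different components then lie in different components of $G(\phi_i)$, so $G(\phi_i)$ is disconnected and we are done. In Case~B, for every $i$ the projection $[\phi_i]$ sits inside a single component $R^i_{k(i)}$. Let $\phi'$ be the formula obtained from $\phi$ by replacing each $R^i$ with $R^i_{k(i)}$. Then $[\phi']=[\phi]$: the inclusion $[\phi']\subseteq[\phi]$ is immediate, and conversely any $\boldsymbol{w}\in[\phi]$ has $\boldsymbol{w}|_{\mathrm{Var}(C_i)}\in[\phi_i]\subseteq R^i_{k(i)}$, so $\boldsymbol{w}$ satisfies $\phi'$. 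Hence $G(\phi')$ is disconnected; applying Lemma~\ref{lem:prjs b} to $\phi'$, which is a CNF\textsubscript{C}-formula over the bijunctive set $\{R^i_{k(i)}\}$, yields a constraint $C_i$ whose projection $[\phi'_i]$ is disconnected. But $[\phi'_i]=[\phi_i]$, so $G(\phi_i)$ is disconnected as required.

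The main obstacle is Case~B: one must transfer the conclusion of Lemma~\ref{lem:prjs b} from the bijunctive restriction $\phi'$ back to the original $\phi$, and this only works because the equality $[\phi']=[\phi]$, forced by the failure of Case~A, makes the two formulas' projections identical. This is exactly where the ``safely'' hypothesis pulls its weight: without it, $R^i$ could fail to be componentwise bijunctive (see \prettyref{exa:ecnp}), so $[\phi_i]$ might evade both cases and the reduction to a bijunctive formula would break.
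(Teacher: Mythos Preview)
Your overall strategy---split into the case where some projection $[\phi_i]$ meets two components of $R^i$, versus the case where every projection lies in a single component and one can pass to an equivalent bijunctive (resp.\ IHSB$\pm$) formula $\phi'$---is exactly the paper's argument, and your Case~B is carried out with a bit more detail than the paper gives (the explicit verification that $[\phi'_i]=[\phi_i]$ is a nice addition).

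There is, however, a factual slip in your handling of the affine case. You write that ``by \prettyref{lem:cl} every affine relation is bijunctive''; this is false. For instance $x\oplus y\oplus z=0$ has solution set $\{000,011,101,110\}$, and $\mathrm{MAJ}(011,101,110)=111$ is not a solution, so the relation is not closed under majority and hence not bijunctive. What \emph{is} true, and what the paper invokes, is item~4 of the modified Lemma~4.2: every affine relation is \emph{safely componentwise} bijunctive (each component of an affine relation is an affine subspace whose connected component structure makes it bijunctive). So your reduction of the affine case to the safely componentwise bijunctive case is salvageable---just cite Lemma~4.2(4) directly rather than routing through the false inclusion affine $\subseteq$ bijunctive.
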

\begin{proof}
The affine case follows from the safely componentwise bijunctive case
since every affine relation is safely componentwise bijunctive by
Lemma 4.2 of \citep{gop}.

If the relation corresponding to some $C_{i}$ is disconnected, and
there is more than one component of this relation for which $\phi$
has solutions with the variables of $C_{i}$ assigned values in that
component, the projection of $\phi$ to $\mathrm{Var}(C_{i})$ must
be disconnected in $G(\phi_{i})$.

So assume that for every constraint $C_{i}$, $\phi$ only has solutions
in which the variables of $C_{i}$ are assigned values in one component
$P_{i}$ of the relation corresponding to $C_{i}$. Then we can replace
every $C_{i}$ with $P_{i}$ to obtain an equivalent formula $\phi'$.
Since $\mathcal{S}$ is safely componentwise bijunctive (safely componentwise
IHSB$-$, safely componentwise IHSB$+$), each $P_{i}$ is bijunctive
(IHSB$-$, IHSB$-$), and thus so is $\phi'$, and the statement follows
from Lemmas \ref{lem:prjs i} and \ref{lem:prjs b}. \end{proof}
\begin{rem}
The Lemmas \ref{lem:prjs i} and \ref{lem:prjs b} cannot be generalized
to safely componentwise bijunctive or safely componentwise IHSB$-$
relations: For sets \emph{$\mathcal{S}$} of safely componentwise
bijunctive (safely componentwise IHSB$-$) relations that are not
bijunctive (IHSB$-$), there are CNF\textsubscript{C}($\mathcal{S}$)-formulas
with pairs of components that are not disconnected in the projection
to any constraint.

E.g., for the safely componentwise bijunctive relation $R=\left((x\vee\overline{y})\wedge\overline{z}\right)\vee\left(\overline{x}\wedge y\wedge z\right)$,
the CNF\textsubscript{C}(\emph{$\{R\}$})-formula $F(x,y,z,w)=R(x,y,z)\wedge R(y,x,w)$
has the four pairwise disconnected solutions $a$=0000, $b$=1100,
$c$=0110, and $d$=1001, but $a$ is connected to $b$ in the projection
to $\{x,y,z\}$ as well as in the one to $\{x,y,w\}$.
\end{rem}
Finally, we show that Schaefer sets of relations that are not CPSS
are not constraint-projection separating. \prettyref{lem:prjs b-1}
shows that there are non-Schaefer sets that are constraint-projection
separating. It is open whether there are other such sets not mentioned
in \prettyref{lem:prjs b-1}.
\begin{lem}
If a set of relations $\mathcal{S}$ is Schaefer but not CPSS, there
is a \noun{CNF\textsubscript{C}($\mathcal{S}$)}-formula $\phi$
that is not constraint-projection separating.\end{lem}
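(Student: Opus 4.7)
By symmetry it suffices to treat the Horn case; in the dual Horn case the same argument goes through after swapping zeros and ones throughout. Thus $\mathcal{S}$ is Horn and contains at least one relation that is not safely componentwise IHSB$-$, so \prettyref{lem:exp m} yields a CNF\textsubscript{C}($\mathcal{S}$)-formula $\psi_M(x,y,z)$ defining $M$. The naive idea of substituting $\psi_M$ into the four-constraint $K$-cycle example following Definition~\ref{dcp} fails, because the extra implication $x\to z$ in $M$ collapses the resulting $M$-cycle to only the four pairwise non-adjacent solutions $0000,0101,1010,1111$, whose three-variable projections are already disconnected. The main difficulty is therefore to recover the ``full and connected'' three-variable projections of the $K$-cycle using only $\psi_M$.

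The plan is to introduce, for each cycle constraint, one fresh auxiliary variable $y_i$ via the identity
\[
L(a,b,c)\;\equiv\;\exists y.\ M(a,b,y)\wedge M(c,b,y),
\]
where $L=(x\vee\overline{y}\vee\overline{z})\wedge(\overline{x}\vee\overline{y}\vee z)$ is another of the three-variable target relations appearing in the proof of \prettyref{lem:exp m}; this is straightforward to verify by case analysis on $(a,b,c)$. Since the four-constraint $L$-cycle has the same solution graph and three-variable projection behaviour as the $K$-cycle (the same eight solutions with $1111$ isolated, and each three-variable projection being the six-element connected relation $\{000,001,010,100,101,111\}$), dropping the existential suggests the candidate CNF\textsubscript{C}($\mathcal{S}$)-formula
\[
\phi\;=\;\bigwedge_{i=1}^{4}\bigl(\psi_M(x_i,x_{i+1},y_i)\wedge\psi_M(x_{i+2},x_{i+1},y_i)\bigr)\qquad(\text{indices mod }4)
\]
on the eight variables $x_1,\ldots,x_4,y_1,\ldots,y_4$.

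Two verifications will complete the proof. For the disconnection of $G(\phi)$ I would check that $1^8$ is an isolated solution---a single-coordinate flip forces some $\psi_M$-argument triple into $\{011,100,110\}=\{0,1\}^3\setminus M$---while $0^8$ is clearly another solution. For the constraint projections, \prettyref{lem:exp m} produces $\psi_M$ either as a single application of some $R\in\mathcal{S}$ obtained by substitution of constants and identification of variables, or as the two-constraint composition $\psi_K(x,y,z)\wedge\psi_K(z,x,x)$; in either case every internal constraint $C$ of $\psi_M(a,b,c)$ satisfies $\mathrm{Var}(C)\subseteq\{a,b,c\}$, so for every constraint $C_i$ of $\phi$ we have $\mathrm{Var}(C_i)\subseteq\{x_i,x_{i+1},y_i\}$ or $\{x_{i+2},x_{i+1},y_i\}$. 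The bookkeeping step I expect to be the main obstacle is then to exhibit, for each of the five triples $(a,b,c)\in M$, an explicit extension to a full solution of $\phi$ with consistent choices of the remaining $y_j$'s around the cycle; once this is done, the projection of $\phi$ onto $\{x_i,x_{i+1},y_i\}$ equals all of $M$, whose solution graph is the connected path $010-000-001-101-111$, and since projection to a subset of coordinates preserves connectedness every $G(\phi_i)$ is connected, so $\phi$ is not constraint-projection separating.
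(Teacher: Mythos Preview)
Your plan is sound and would yield a correct proof once the bookkeeping is completed: the identity $L(a,b,c)\equiv\exists y\,(M(a,b,y)\wedge M(c,b,y))$ holds, the $L$-cycle on $x_1,\dots,x_4$ indeed has the same eight solutions as the $K$-cycle with $1111$ isolated, the resulting eight-variable formula $\phi$ has $1^8$ isolated, and (by the cyclic symmetry of the construction) the projection of $\phi$ onto each of the eight triples $\{x_i,x_{i+1},y_i\}$ and $\{x_{i+2},x_{i+1},y_i\}$ is exactly $M$, which is connected. Your observation that \prettyref{lem:exp m} produces $\psi_M$ without auxiliary variables is crucial and correct.

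However, the paper's proof is considerably shorter and avoids the auxiliary $y_i$ entirely. It exploits the degenerate instantiation $M(a,a,b)\equiv b\vee\overline{a}$ and takes
\[
T(u,v,w,x,y,z)=M(u,v,w)\wedge M(x,y,z)\wedge M(w,w,y)\wedge M(z,z,v),
\]
so that the four $M$-copies contribute the implication cycle $u\to w\to y\to x\to z\to v\to u$; disconnectedness then follows immediately from \prettyref{cor:horn conn}, and the four projections are just $M(u,v,w)$, $M(x,y,z)$, $y\vee\overline{w}$, $v\vee\overline{z}$, each visibly connected. Compared with your approach, the paper trades the structural-expressibility viewpoint (reuse the $K$-cycle, then lift) for a direct six-variable Horn formula and the self-implicating-set machinery already developed in Section~4.1; this halves the number of variables and $M$-copies and eliminates the witness-extension case analysis you flag as the main obstacle. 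Your route has the merit of making the connection to the motivating $K$-cycle example explicit, at the cost of a longer verification.
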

\begin{proof}
Since $\mathcal{S}$ is Schaefer but not CPSS, it must contain some
relation that is Horn but not safely componentwise IHSB$-$, or dual
Horn but not safely componentwise IHSB$+$. Assume the first case,
the second one is analogous. Then by \prettyref{lem:exp m}, we can
express $M=\left(x\vee\overline{y}\vee\overline{z}\right)\wedge\left(\overline{x}\vee z\right)$
as a CNF\textsubscript{C}($\mathcal{S}$)-formula. Consider the CNF\textsubscript{C}($\mathcal{S}$)-formula
\[
T(u,v,w,x,y,z)=M(u,v,w)\wedge M(x,y,z)\wedge M(w,w,y)\wedge M(z,z,v)
\]
\[
\equiv\left(\left(u\vee\overline{v}\vee\overline{w}\right)\wedge\left(\overline{u}\vee w\right)\right)\wedge\left(\left(x\vee\overline{y}\vee\overline{z}\right)\wedge\left(\overline{x}\vee z\right)\right)\wedge\left(y\vee\overline{w}\right)\wedge\left(v\vee\overline{z}\right).
\]
Now $G(T)$ is disconnected by \prettyref{cor:horn conn} since $\{u,v,w,x,y,z\}$
is maximal self-implicating, but neither the projection $\exists x\exists y\exists zT\equiv M(u,v,w)$
to the variables of the first constraint in the CNF($\{M\}$)-representation
of $T$, nor the projection $\exists u\exists v\exists x\exists zT\equiv y\vee\overline{w}$
to the variables of the third one is disconnected. The second and
fourth constraints are symmetric to the first and third ones. 

Since in the CNF\textsubscript{C}($\mathcal{S}$)-representation
of $T$ every conjunct $M(r,s,t)$ of $T$ ($r,s,t\in\{u,v,w,x,y,z\}$)
is a CNF\textsubscript{C}($\mathcal{S}$)-formula $\bigwedge_{i}R_{i}(\boldsymbol{\xi}^{i})$
with $R_{i}\in\mathcal{S}$ and $\xi_{j}^{i}\in\{0,1,r,s,t\}$, for
every constraint $C_{i}$ of $T$, the set $\mathrm{Var}(C_{i})$
is a subset of $\{u,v,w\},$ $\{x,y,z\},$ $\{y,w\}$ or $\{v,z\}$,
and thus also for no $C_{i}$ the projection to $\mathrm{Var}(C_{i})$
is disconnected.
\end{proof}

\section{Related and Future Work}

\paragraph*{Other classification schemes and other representations of Boolean
relations}

As Gopalan et al.\ already remarked \citep{gop}, a classification
for CNF($\mathcal{S}$)-formulas without constants seems interesting.
In \citep{diss}, we show that for $st$-connectivity and the diameter,
the same dichotomy as for CNF\textsubscript{C}($\mathcal{S}$)-formulas
also holds in the no-constants case; for connectivity, we identify
fragments where the problem is in P, where it is coNP-complete, and
where it is PSPACE-complete, but a complete classification is still
missing. 

Another variation are partially quantified formulas (with constants),
for which we prove a complete classification in \citep{diss} for
both problems.

Disjunctive normal forms with special connectivity properties were
studied by Ekin et al. already in 1997 for their ``important role
in problems appearing in various areas including in particular discrete
optimization, machine learning, automated reasoning, etc.'' \citep{ekin1999connected}.

A quite different kind of representation for Boolean relations are
$B$-formulas, i.e. arbitrarily nested formulas built from some finite
set $B$ of connectives (where the arity may be greater than two).
Related are $B$-circuits, which are Boolean circuits where the gates
are from a finite set $B$. In \citep{csr}, we investigate both $B$-formulas
and $B$-circuits and obtain a common dichotomy for the diameter and
both connectivity problems: on one side, the diameter is linear, and
both problems are in P, while on the other, the diameter can be exponential,
and the problems are PSPACE-complete

There are yet more kinds of representations of Boolean relations,
such as binary decision diagrams and or Boolean neural networks, and
investigating the connectivity in these settings might be worthwhile
as well.

\paragraph*{Related problems}

Other connectivity-related problems already mentioned by Gopalan et
al.\ are counting the number of components and approximating the
diameter. 

Further, especially with regard to reconfiguration problems, it is
interesting to find the shortest path between two solutions; this
was recently investigated by Mouawad et al. \citep{mouawad2014shortest},
who proved a computational trichotomy for this problem. In this direction,
one could also consider the optimal path according to some other measure.

\paragraph*{Other definitions of connectivity}

Our definition of connectivity is not the only sensible one: One could
regard two solutions connected whenever their Hamming distance is
at most $d$, for any fixed $d\geq1$; this was already considered
related to random satisfiability, see \citep{achlioptas2006solution}.
This generalization seems meaningful as well as challenging.

\paragraph*{Higher domains}

Finally, a most interesting subject are CSPs over larger domains;
in 1993, Feder and Vardi conjectured a dichotomy for the satisfiability
problem over arbitrary finite domains \citep{feder1998computational},
and while the conjecture was proved for domains of size three in 2002
by Bulatov \citep{bulatov2002dichotomy}, it remains open to date
for the general case. Close investigation of the solution space might
lead to valuable insights here.

For $k$-colorability, which is a special case of the general CSP
over a $k$-element set, the connectivity problems and the diameter
were already studied by Bonsma and Cereceda \citep{bonsma2009finding},
and Cereceda, van den Heuvel, and Johnson \citep{cereceda2011finding}.
They showed that for $k=3$ the diameter is at most quadratic in the
number of vertices and the $st$-connectivity problem is in P, while
for $k\geq4$, the diameter can be exponential and $st$-connectivity
is PSPACE-complete in general.

\paragraph{Acknowledgments.}

I am grateful to Heribert Vollmer for pointing me to these interesting
themes.

\bibliographystyle{amsalpha}
\bibliography{jib}

\newcommand{\etalchar}[1]{$^{#1}$}
\providecommand{\bysame}{\leavevmode\hbox to3em{\hrulefill}\thinspace}
\providecommand{\MR}{\relax\ifhmode\unskip\space\fi MR }
\providecommand{\MRhref}[2]{%
  \href{http://www.ams.org/mathscinet-getitem?mr=#1}{#2}
}
\providecommand{\href}[2]{#2}
\begin{thebibliography}{GKMP09}

\bibitem[ART06]{achlioptas2006solution}
Dimitris Achlioptas and Federico Ricci-Tersenghi, \emph{On the solution-space
  geometry of random constraint satisfaction problems}, Proceedings of the
  thirty-eighth annual ACM symposium on Theory of computing, ACM, 2006,
  pp.~130--139.

\bibitem[BC09]{bonsma2009finding}
Paul Bonsma and Luis Cereceda, \emph{Finding paths between graph colourings:
  Pspace-completeness and superpolynomial distances}, Theoretical Computer
  Science \textbf{410} (2009), no.~50, 5215--5226.

\bibitem[BCRV03]{bloc}
Elmar B{\"o}hler, Nadia Creignou, Steffen Reith, and Heribert Vollmer,
  \emph{Playing with boolean blocks, part i: Posts lattice with applications to
  complexity theory}, SIGACT News, 2003.

\bibitem[BRSV05]{bohler2005bases}
Elmar B{\"o}hler, Steffen Reith, Henning Schnoor, and Heribert Vollmer,
  \emph{Bases for boolean co-clones}, Information Processing Letters
  \textbf{96} (2005), no.~2, 59--66.

\bibitem[Bul02]{bulatov2002dichotomy}
Andrei~A Bulatov, \emph{A dichotomy theorem for constraints on a three-element
  set}, Foundations of Computer Science, 2002. Proceedings. The 43rd Annual
  IEEE Symposium on, IEEE, 2002, pp.~649--658.

\bibitem[CKS01]{Creignou:2001:CCB:377810}
Nadia Creignou, Sanjeev Khanna, and Madhu Sudan, \emph{Complexity
  classifications of boolean constraint satisfaction problems}, Society for
  Industrial and Applied Mathematics, Philadelphia, PA, USA, 2001.

\bibitem[CKZ08]{creignou2008structure}
Nadia Creignou, Phokion Kolaitis, and Bruno Zanuttini, \emph{Structure
  identification of boolean relations and plain bases for co-clones}, Journal
  of Computer and System Sciences \textbf{74} (2008), no.~7, 1103--1115.

\bibitem[CvdHJ11]{cereceda2011finding}
Luis Cereceda, Jan van~den Heuvel, and Matthew Johnson, \emph{Finding paths
  between 3-colorings}, Journal of graph theory \textbf{67} (2011), no.~1,
  69--82.

\bibitem[EHK99]{ekin1999connected}
Oya Ekin, Peter~L Hammer, and Alexander Kogan, \emph{On connected boolean
  functions}, Discrete Applied Mathematics \textbf{96} (1999), 337--362.

\bibitem[FV98]{feder1998computational}
Tom{\'a}s Feder and Moshe~Y Vardi, \emph{The computational structure of
  monotone monadic snp and constraint satisfaction: A study through datalog and
  group theory}, SIAM Journal on Computing \textbf{28} (1998), no.~1, 57--104.

\bibitem[GKMP06]{Gopalan:2006}
Parikshit Gopalan, Phokion~G. Kolaitis, Elitza~N. Maneva, and Christos~H.
  Papadimitriou, \emph{The connectivity of boolean satisfiability:
  Computational and structural dichotomies}, ICALP'06, 2006, pp.~346--357.

\bibitem[GKMP09]{gop}
Parikshit Gopalan, Phokion~G. Kolaitis, Elitza Maneva, and Christos~H.
  Papadimitriou, \emph{The connectivity of boolean satisfiability:
  Computational and structural dichotomies}, SIAM J. Comput. \textbf{38}
  (2009), no.~6, 2330--2355.

\bibitem[IDH{\etalchar{+}}11]{recon}
Takehiro Ito, Erik~D. Demaine, Nicholas J.~A. Harvey, Christos~H.
  Papadimitriou, Martha Sideri, Ryuhei Uehara, and Yushi Uno, \emph{On the
  complexity of reconfiguration problems}, Theor. Comput. Sci. \textbf{412}
  (2011), no.~12-14, 1054--1065.

\bibitem[KMM11]{short}
Marcin Kami{\'n}ski, Paul Medvedev, and Martin Milani{\v{c}}, \emph{Shortest
  paths between shortest paths and independent sets}, Combinatorial Algorithms,
  Springer, 2011, pp.~56--67.

\bibitem[MMW07]{maneva2007new}
Elitza Maneva, Elchanan Mossel, and Martin~J Wainwright, \emph{A new look at
  survey propagation and its generalizations}, Journal of the ACM (JACM)
  \textbf{54} (2007), no.~4, 17.

\bibitem[MMZ05]{mezard2005clustering}
Marc M{\'e}zard, Thierry Mora, and Riccardo Zecchina, \emph{Clustering of
  solutions in the random satisfiability problem}, Physical Review Letters
  \textbf{94} (2005), no.~19, 197205.

\bibitem[MNPR14]{mouawad2014shortest}
Amer~E Mouawad, Naomi Nishimura, Vinayak Pathak, and Venkatesh Raman,
  \emph{Shortest reconfiguration paths in the solution space of boolean
  formulas}, arXiv preprint (2014).

\bibitem[MTY07]{Makino:2007:BCP:1768142.1768162}
Kazuhisa Makino, Suguru Tamaki, and Masaki Yamamoto, \emph{On the boolean
  connectivity problem for horn relations}, Proceedings of the 10th
  international conference on Theory and applications of satisfiability
  testing, SAT'07, 2007, pp.~187--200.

\bibitem[Sch78]{Schaefer:1978:CSP:800133.804350}
Thomas~J. Schaefer, \emph{The complexity of satisfiability problems}, STOC '78,
  1978, pp.~216--226.

\bibitem[Sch14]{csr}
Konrad~W Schwerdtfeger, \emph{The connectivity of boolean satisfiability:
  Dichotomies for formulas and circuits}, Proceedings of the 9th International
  Computer Science Symposium in Russia (CSR 2014) (2014), 351--364,
  \url{http://arxiv.org/abs/1312.6679}.

\bibitem[Sch15]{diss}
\bysame, \emph{Connectivity of boolean satisfiability}, Ph.D. thesis, 2015,
  \url{http://arxiv.org/abs/1510.06700}.

\end{thebibliography}

\end{document}